\documentclass[lettersize,journal]{IEEEtran}
%


%

%
  \usepackage{cite}
\usepackage{graphicx}
\usepackage{subfigure}
\usepackage{epstopdf}
\usepackage{multicol}
\usepackage{algorithm}
\usepackage{algorithmic}
\usepackage{amssymb}
\usepackage{amsthm,amsmath,amsfonts}
\usepackage{mathrsfs}
\usepackage{courier}
\newtheorem{lemma}{\textbf{Lemma}}

\usepackage{bm}
\usepackage{subfigure}
\usepackage{color}
\usepackage{ragged2e} 
\usepackage{ragged2e}
\usepackage{booktabs}

%

%
\ifCLASSINFOpdf
\else
\fi
\hyphenation{op-tical net-works semi-conduc-tor}

\begin{document}
%
\title{Joint Source-Channel Optimization for UAV Video Coding and Transmission}
%
%
%
%

\author{Kesong Wu, Xianbin Cao,~\IEEEmembership{Senior Member,~IEEE}, Peng Yang,~\IEEEmembership{Member,~IEEE}, \\ Haijun Zhang,~\IEEEmembership{Fellow,~IEEE}, Tony Q. S. Quek,~\IEEEmembership{Fellow,~IEEE}, and 
Dapeng Oliver Wu,~\IEEEmembership{Fellow,~IEEE}

\thanks{
K. Wu, X. Cao, and P. Yang are with the School of Electronic and Information Engineering, Beihang University, Beijing 100191, China. (email: {sdtwks@163.com,\{xbcao,peng\_yang\}@buaa.edu.cn})

H. Zhang is with the Beijing Engineering and Technology Research Center for Convergence Networks and Ubiquitous Services, University of Science and Technology Beijing, Beijing 100083, China. (email: haijunzhang@ieee.org) 

T. Q. S. Quek is with the Information Systems Technology and Design Pillar, Singapore University of Technology and Design, Singapore 487372, Singapore. (email: tonyquek@sutd.edu.sg)

D. O. Wu is with the Department of Computer Science, City University of Hong Kong, Kowloon, Hong Kong, China. (email: dpwu@ieee.org)
}
}

\IEEEtitleabstractindextext{%
\justifying\let\raggedright\justifying
\begin{abstract}
This paper is concerned with unmanned aerial
vehicle (UAV) video coding and transmission in scenarios such as aerial search and monitoring. Unlike existing methods of modeling UAV video source coding and channel transmission separately, we investigate the joint source-channel optimization issue for video coding and transmission. Particularly, we design eight-dimensional delay-power-rate-distortion models in terms of source coding and channel transmission and characterize the correlation between video coding and transmission, with which a joint source-channel optimization problem is formulated. Its objective is to minimize end-to-end distortion and UAV power consumption by optimizing fine-grained parameters related to UAV video coding and transmission. This problem is confirmed to be a challenging sequential-decision and non-convex optimization problem. We therefore decompose it into a family of repeated optimization problems by Lyapunov optimization and design an approximate convex optimization scheme with provable performance guarantees to tackle these problems. Based on the theoretical transformation, we propose a Lyapunov repeated iteration (LyaRI) algorithm. Both objective and subjective experiments are conducted to comprehensively evaluate the performance of LyaRI. The results indicate that, compared to its counterparts, LyaRI achieves better video quality and stability performance, with a 47.74\% reduction in the variance of the obtained encoding bitrate.
\end{abstract}

\begin{IEEEkeywords}
UAV video coding and transmission, delay–power-rate-distortion model, joint source-channel optimization, power efficiency
\end{IEEEkeywords}}

\maketitle

\IEEEdisplaynontitleabstractindextext

%
\IEEEpeerreviewmaketitle

%
%
%
%
\section{Introduction}
\IEEEPARstart{B}{enefiting} from their distinctive advantages of low cost and high mobility, unmanned aerial vehicles (UAVs) have been increasingly adopted in more and more practical scenarios, 
such as geological disaster monitoring, forest fire detection, emergency communications, and emergency search and rescue \cite{DBLP:journals/tnse/WuCJWXXZXCLWW24,DBLP:journals/tnse/CuiLZZWBH24,Zhao2019UAVAssistedEN}.
In the application of UAVs, video acquisition and transmission constitute one of their pivotal missions. UAVs are capable of capturing targets through on-board cameras and subsequently transmitting coded and compressed videos via on-board data transmission modules. Rapid acquisition of dynamic information on site through UAV videos can save valuable time for emergency decision making. As a result, UAV video coding and transmission technology has emerged as a focal point of interest within both academia and industry \cite{DBLP:conf/infocom/ZhanHMW22,DBLP:journals/tase/HuangSN22}. 

However, the timely and efficient transmission of UAV video streams is fraught with considerable challenges \cite{DBLP:journals/tccn/BharGA24}. Firstly, a UAV network is bandwidth-constrained while the video is characterized by its substantial information and stringent timeliness. It is essential to mitigate information redundancy through video coding for UAV video transmission. Nevertheless, the computational complexity and energy expenditure associated with video coding are exceedingly high, presenting a significant challenge for UAVs that are subject to severe energy constraints. Secondly, the occurrence of congestion and packet loss, potentially leading to visual artifacts or stalling in video decoding at the receiver, may arise if the video coding bitrate surpasses the UAV channel capacity \cite{DBLP:journals/tcsv/LiuJ22}. 
Further, UAV channel capacity is subject to dynamic fluctuations, and ensuring the match of UAV video coding bitrate with UAV channel capacity is a formidable challenge.

\subsection{State of the Arts}
Recent years have witnessed a proliferation of research in the domain of UAV video transmission. The authors in \cite{DBLP:journals/tcsv/ZhanHSLWW21} proposed an approach to maximize video utility through the joint optimization of resource allocation and UAV trajectory, thereby reducing operational time of {the} UAV. In \cite{DBLP:journals/tmm/TangHH21}, the authors introduced a quality of experience (QoE) driven dynamic wireless video broadcasting scheme, aiming to maximize the peak signal-to-noise ratio (PSNR) of reconstructed video by jointly optimizing UAV transmit power and trajectory. The aforementioned studies focused on enhancing throughput of UAV networks and reducing transmission distortion by optimizing UAV trajectories, resource allocation, and transmit power without considering video coding distortion. To tackle this issue, the authors in \cite{DBLP:journals/icl/TangHSHY23} investigated the optimal deployment of UAVs in UAV-assisted virtual reality (VR) systems with the objective of minimizing video transmission delay while meeting the image resolution requirements. This research focused on resolution, an important indicator of video coding, but did not delve into video coding. 

Video coding is an essential technology for achieving video compression, and rate distortion optimization (RDO) is the core theoretical foundation of video coding. The objective of RDO is to minimize encoding distortion under a fixed coding bitrate or minimize the coding bitrate at a specific distortion level. Researchers have conducted extensive explorations in this field and have made remarkable progress. Under the constraint of coding complexity, the authors in  \cite{DBLP:journals/spic/LiYWK21} investigated how to optimize the global rate-distortion (R-D) between consecutive video frames to enhance the efficiency of video coding. By constructing an R-D optimization model based on a quantization step cascade method, the authors in \cite{DBLP:journals/tii/GaoJWMLK22} derived the quantization step of each frame to achieve optimal quantization parameter selection during the coding process. Besides, based on the establishment of an inter-frame R-D dependency model, the authors in  \cite{DBLP:journals/tip/GongYLLLW21} constructed the R-D cost function with the Lagrange optimization theory. They solved the theoretical model of the optimal quantization parameters of each frame. As a result, the coding bitrate was reduced on the premise of meeting the distortion requirements. 

All these studies assumed unlimited encoding time and power consumption. They also focused on the selection of video coding quantization parameters based on the R-D relationship and aimed to achieve the best encoding performance. However, the time and power consumption of video coding have been confirmed to significantly impact the video bitrate and distortion. In \cite{DBLP:journals/tce/SchimpfLL23}, the authors explored semi-extreme sparse coding of transformed residuals at medium and low bitrates to enhance video compression efficiency. The work in \cite{DBLP:journals/tcsv/LiWX14} indicated that video bitrate and distortion were primarily determined by the distribution characteristics of the transformed residuals. Transformed residuals were mainly influenced by motion estimation (ME) and quantization parameters. The accuracy of ME was closely related to the sum of absolute differences (SAD) of macroblocks. The number of calculations of macroblock SAD was proportional to the encoding complexity. Considering that the time and power consumption of video coding were increasing functions of encoding complexity, the authors in \cite{DBLP:journals/tcsv/LiWX14} deduced a quantitative relationship between video coding delay, power, rate, and distortion, and extended the traditional R-D model to a delay-power-rate-distortion (d-P-R-D) model. Based on this model, a new rate control (RC) method was proposed, aiming to minimize encoding distortion under constraints of rate, delay, and power. However, there are two aspects for improvement in this research. Firstly, the work is conducted for the earlier video coding standard advanced video coding (AVC). High efficiency video coding (HEVC) and versatile video coding (VVC), as the subsequent video coding standards\cite{DBLP:journals/tcsv/WeiZWYCK24}, offer lower coding bitrates under the same video quality conditions. Therefore, it is necessary to further investigate the d-P-R-D model for more advanced standards. Secondly, although the end-to-end delay composition of video transmission systems was analyzed in \cite{DBLP:journals/tcsv/LiWX14}, the proposed d-P-R-D model was only for video source coding. To achieve better end-to-end video transmission QoE, it is essential to design a joint source-channel video coding and transmission scheme. 

From the perspective of joint source-channel design, efficient video source coding can reduce the video bitrate while maintaining the same video quality. The reduction in video bitrate leads to decreasing transmit power consumption and transmission delay. However, efficient video source coding implies higher computational complexity, which result in increased power consumption and delay in video coding. Given the constraints of end-to-end latency, increasing encoding latency to achieve better compression performance may reduce the transmission latency budget, subsequently increasing the risk of distortion during transmission. Similarly, changes in encoding power consumption can affect the power consumption of video transmission under a given power constraint of video transmission system. Researchers have extensively {studied} joint source-channel optimization 
\cite{DBLP:conf/icassp/YangK22,DBLP:conf/globecom/ChenCYHWZ23,DBLP:journals/tmc/CenGM22,DBLP:journals/tcom/HuC21}. For instance, the work in \cite{DBLP:conf/icassp/YangK22} proposed a new scheme for dynamically adjusting the bitrate based on channel signal-to-noise ratio (SNR) and image content. In \cite{DBLP:conf/globecom/ChenCYHWZ23}, the authors presented a joint source-channel adaptive RC scheme for wireless image transmission based on image feature mapping, entropy awareness, and channel conditions. The authors in \cite{DBLP:journals/tmc/CenGM22} introduced a novel video codec structure based on compressed sensing. They also investigated the minimization of system power consumption by jointly optimizing video coding bitrate and multi-path transmission bitrate under distortion and energy consumption constraints. The work in \cite{DBLP:journals/tcom/HuC21} investigated the trade-off between latency, power, and distortion and achieved deep cross-layer optimization of video coding, queuing control, and wireless transmission through a joint lossy compression and power allocation scheme. However, the aforementioned research focused on the field of wireless video transmission and did not specifically target UAV video transmission. To address this, the authors in \cite{DBLP:journals/tcsv/LiuJ22} conducted cross-layer design research on UAV-based video streaming transmission. They proposed a quality of service (QoS) strategy that dynamically adjusted the transmission mode according to network load, latency, and packet loss rate. Nevertheless, they focused on the overall system design of multi-link parallel transmission, without delving into the issue of UAV video coding. Considering the dynamic nature and power constraints of UAVs, it is necessary to conduct joint resource allocation and control for UAV video coding and transmission under the constraints of end-to-end latency and power consumption. In this way, the QoE of video transmission at the receiving end can be improved.

\subsection{Motivation and Contributions}
Overall, the optimization of UAV location/trajectory and network resource allocation to enhance the performance of UAV video transmission has emerged as a hot research topic. However, the issue of joint source-channel coding control and resource optimization allocation for UAV video coding {and} transmission has not been adequately investigated. This paper proposes a {joint source-channel d-P-R-D} design scheme based on HEVC coding. The objective is to minimize end-to-end distortion and power consumption by optimizing video coding parameters and the allocation of UAV transmit power under the constraints of end-to-end latency, power budget, and source-channel rate matching.
The main research content and contributions of this paper are summarized as follows:

1) In the context of the HEVC standard and the UAV air-to-ground (AtG) channel propagation characteristics, eight-dimensional d-P-R-D models for UAV video coding and transmission have been formulated. Specifically, we explore a nonlinear regression method to establish a mathematical model describing the standard deviation of transformed residuals in HEVC video coding. We further deduce the d-P-R-D model for video coding with the nonlinear standard deviation model. Besides, a d-P-R-D model for UAV video transmission is constructed on the basis of AtG channel characteristics.

2) We formulate a joint source-channel optimization problem for UAV video coding and transmission. This problem is formulated on the basis of the eight-dimensional d-P-R-D models and a model capturing correlation between video coding and transmission. Its objective is to minimize end-to-end distortion and UAV power consumption by optimizing fine-grained parameters related to UAV video coding and transmission. 

3) The problem is confirmed to be a multi-objective and sequential-decision problem including a family of non-convex constraints, making it highly challenging to be solved. To solve this challenging problem, we develop an iterative solution algorithm based on Lyapunov optimization and convex approximation strategies. Initially, the sequential-decision problem is reduced in dimensionality to form a number of repeated optimization problems. Subsequently, for each repeated optimization problem, a two-stage iterative divide-and-conquer strategy is designed to decompose it into two independent sub-problems. Besides, a convex approximation strategy is explored to tackle non-convexity of sub-problems.

4) Finally, we design {objective and subjective} experiments to quantitatively analyze the stability of the proposed algorithm and the effectiveness of the joint source-channel optimization mechanism. We thoroughly discuss the impact of various parameters on the performance of the algorithm. Experimental results demonstrate that the proposed algorithm achieves mean-rate stability. Compared to benchmark algorithms, the proposed algorithm {achieves better video quality and stability performance, with a 47.74\% reduction in the variance of the obtained encoding bitrate.}

\section{system model and Problem formulation}
In this section, we describe the system model for the joint source-channel UAV video coding and transmission from three perspectives: the AtG channel model, the video coding d-P-R-D model, and the channel transmission d-P-R-D model. Based on these models, a joint source-channel optimization problem is formulated. 
\begin{figure}[!t]
\centering
\includegraphics[width=3.3 in, height = 1.1 in]{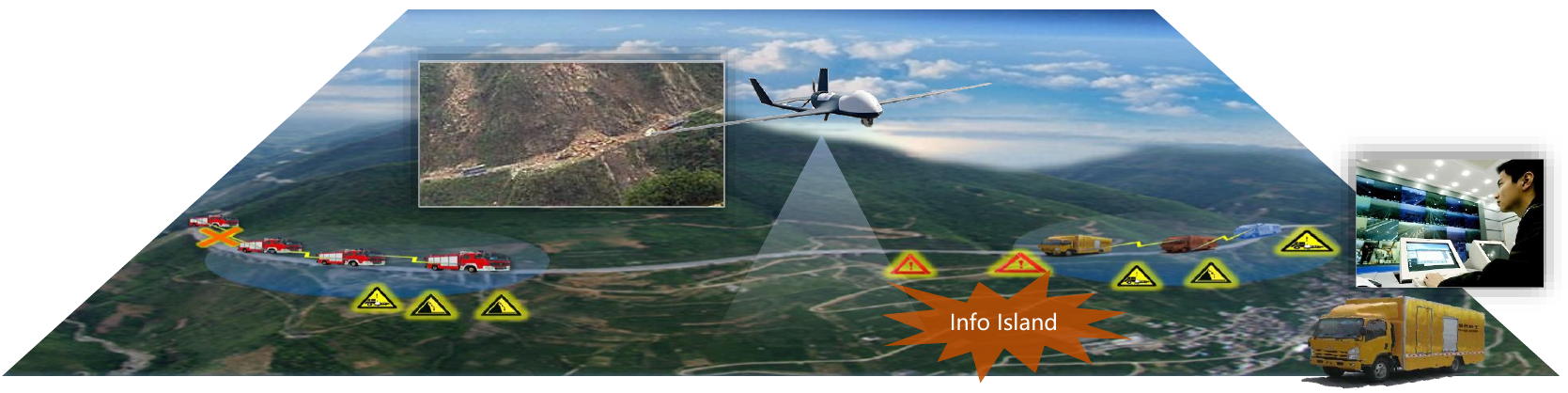}
\caption{A UAV video coding and transmission scenario.}
\label{fig_uav_video_scene}
\end{figure}
\subsection{AtG Channel Model}
This paper primarily considers the application scenarios where UAVs are deployed for agricultural monitoring, forest fire detection, and emergency search and rescue.
Particularly, the scenario investigated in this paper encompasses a UAV and an emergency communication vehicle (ECV). Given the potential for road traffic disruptions in areas such as mountainous and post-disaster zones, the ECV is deployed in proximity to the affected region to launch the UAV. The UAV is tasked with conducting surveillance and reconnaissance missions, capturing video footage for transmission, as illustrated in Fig. \ref{fig_uav_video_scene}. The captured video is encoded and compressed by the UAV and then promptly delivered to the ECV. To facilitate the construction of a mathematical model for the video transmission process, the temporal domain is discretized into a sequence of time slots, denoted by $t = \{ 1,2, \ldots \}$.
{Considering some practical application requirements}, the UAV is deployed to fly along a circular trajectory to extend the coverage range \cite{DBLP:journals/wcl/JiW23,DBLP:journals/tcom/DabiriHZKQ23,DBLP:journals/tnse/WuCYYWQ24}. The two-dimensional (2-D) horizontal coordinate of the UAV at time slot $t$ is represented as $\bm{q_{uav}}\left( t \right) = {[{x_{uav}}\left( t \right),{y_{uav}}\left( t \right)]^{\rm{T}}}$, and the position of the ECV is denoted as $\bm{q_{ecv}} = {[{x_{ecv}},{y_{ecv}}]^{\rm{T}}}$.
 Consider that the UAV flies at a constant altitude $H$, and the altitude of the ECV is negligible compared to $H$. The horizontal projection distance between the UAV and the ECV at time slot $t$ can be expressed as $dis(t) = \sqrt {{{({x_{uav}}\left( t \right) - {x_{ecv}})}^2} + {{({y_{uav}}\left( t \right) - {y_{ecv}})}^2}}$.

The UAV and the ECV communicate through an AtG link. In the AtG link, the UAV and the ECV typically establish line of sight (LoS) communication with a certain probability. This probability is contingent upon various factors including the flying altitude of the UAV, the horizontal projection distance between the UAV and the ECV, and the flying environment \cite{DBLP:journals/tnse/QinLZF24}. The commonly used expression for the probability of LoS communication is as follows
\begin{equation}\label{eq:los_prob}
{p_{{\rm{los}}}} = {\left[ {1 + a{e^{ - b(\theta (t) - a)}}} \right]^{ - 1}}
\end{equation}
where $a$ and $b$ are constant coefficients related to the flying environment of the UAV, and $\theta (t) = \frac{{180}}{\pi }\arctan (\frac{H}{{dis(t)}})$ represents the elevation angle between the ECV and the UAV. 

Disregarding the altitude of the ECV and the height of the UAV's antenna, the transmission loss of the AtG link \cite{DBLP:journals/tnse/QinLZF24} can be expressed as follows
\begin{equation}\label{eq:atg_los}
{L_{AtG}}(t) = 20{\log _{10}}(\sqrt {{H^2} + {{(dis(t))}^2}} ) + E{p_{{\mathop{\rm los}\nolimits} }} + F
\end{equation}
where $E = {\eta _{Los}} - {\eta _{NLos}}$, $F = 20{\log _{10}}(\frac{{4\pi {f_c}}}{c}) + {\eta _{NLos}}$, ${\eta _{Los}}$ and ${\eta _{NLos}}$ are environment-dependent, corresponding to transmission losses in LoS and non-line-of-sight (NLoS) links, respectively.

\subsection{Video Coding d-P-R-D Model}
This subsection constructs a video coding d-P-R-D model to abstract the problem, facilitating a systematic analysis of the interrelationships among the key elements such as delay, power, rate and distortion. This model provides a theoretical foundation for the optimization of video coding parameters such as quantization step and search range.

\subsubsection{Video Bitrate and Distortion Models}
The research presented in this paper is based on an advanced standard, particularly HEVC. Although HEVC continues to employ the traditional block-based hybrid coding framework, which includes key modules such as prediction, transformation, quantization, entropy coding, and loop filtering, it introduces innovative coding techniques in each of these modules. These techniques include quadtree-based block partitioning, up to $35$ intra-prediction modes, inter-frame variable-sized discrete cosine transform (DCT), context-based adaptive binary arithmetic coding (CABAC), and sample adaptive offset (SAO) for loop filter compensation. Compared to AVC, HEVC significantly improves compression efficiency, achieving an approximately $50\%$ reduction in bitrate while maintaining the same video quality, making it highly suitable for video transmission in bandwidth-constrained networks. {The VVC reference software VTM provides bit-rate savings of 44.4\% on average when compared to HEVC; however, with more than 10.2 times higher computational costs on average \cite{DBLP:conf/lascas/SiqueiraCG20,DBLP:journals/tcsv/KimDKK24}.
Considering the limited computational ability and power constraints of UAV, this paper constructs a d-P-R-D model for video coding based on HEVC.}

{Besides, this paper adopts the IPPPP coding mode without loss of generality\cite{DBLP:journals/tce/WangYX19}.} In the IPPPP coding mode, both encoding bitrate and distortion of P-frames in block-based hybrid video coding can be approximately represented as functions of the standard deviation of transformed residuals and the quantization step $Q$ \cite{DBLP:journals/tcsv/LiWX14}. Based on the research findings in literature\cite{DBLP:journals/tcsv/ChenW15}, the functional expression $\sigma $ can be obtained through nonlinear regression analysis. This paper selects standard video test sequences CITY (CIF, 352*288) and Coastguard (CIF, 352*288) and employs the HM-16.20 encoder to obtain encoding data samples. 
In the low-latency configuration file encoder\_lowdelay\_P\_main.cfg of HM, since quantization parameter $QP$, search range $\lambda $, and reference frames $\chi$ are independent parameters, their individual impacts on $\sigma $ can be assessed separately.
It should be noted that there is a one-to-one mapping relationship between the quantization parameter $QP$ mentioned in HM-16.20 and the quantization step $Q$ involved in the model. This paper first fixes $QP$ and $\chi$, and changes the value of $\lambda$ to obtain transformed residual data and calculate $\sigma $. Through nonlinear regression analysis, it has been found that $\sigma $ is approximately exponentially related to $\lambda$. Secondly, by fixing $\lambda$ and $\chi$, and changing $QP$, it can be concluded that $\sigma$ is approximately linearly related to $QP$.
Additionally, it is observed that the rate of change of $\sigma$ versus $\chi$ is much smaller than that of $\sigma$ versus $QP$, as well as $\sigma$ versus $\lambda$. It indicates that $\chi$ has a relatively smaller impact on $\sigma$ compared to $QP$ and $\lambda$. To reduce the computational complexity, this paper assigns a given value ${\chi _0} = 1$ to $\chi$\cite{DBLP:journals/tcsv/LiWX14}. Therefore, this paper adopts the following expression to calculate $\sigma$
\begin{equation}\label{eq:sigma_model}
\sigma (\lambda ,Q,\chi ) \approx \sigma (\lambda ,Q) = {a_1}{e^{ - {a_2}\lambda }} + {a_3} + {a_4}Q
\end{equation}
where the parameters ${a_1}$, ${a_2}$, ${a_3}$ and ${a_4}$ can be obtained through nonlinear regression analysis. Detailed parameter settings and experimental procedures will be elaborated in the experiment section.

When auxiliary information such as macroblock types is ignored, the bitrate of video coding can be approximated as the entropy of the transform-quantized coefficients\cite{DBLP:journals/tcsv/LiOHK09}. There are various assumptions for the distribution of transformed residuals. For instance, the research in \cite{DBLP:conf/vcip/WangLYGY22} assumed that transformed residuals followed a zero-mean generalized Gaussian distribution (GGD). However, the GGD has some limitations in practicality due to its multiple control parameters. In AVC, the Cauchy distribution has also been utilized, but the convergence issue of its mean and variance must be considered when employing the Cauchy distribution. The Laplace distribution achieves a good balance between computational complexity and model accuracy and has been widely applied \cite{DBLP:journals/tcsv/LiOHK09,DBLP:journals/tcsv/MaoWWK22}. This paper therefore chooses the Laplacian distribution as the statistical model for transformed residuals.

\begin{lemma}\label{lemma:lemma_video_coding_bitrate}
{\rm 
For {video} coding, where the transformed residuals follow a zero-mean Laplacian independent and identically distributed (i.i.d.) model, the relationship between the video coding bitrate ${R_e}(\lambda ,Q)$ and $Q$ as well as $\lambda$ can be mathematically represented as
\begin{equation}\label{eq:video_coding_bitrate}
\begin{array}{l}
{R_e}(\lambda ,Q) = 1 - ({e^{\frac{{\sqrt 2 Q(\mu  - 1)}}{{{a_1}{e^{ - {a_2}\lambda }} + {a_3} + {a_4}Q}}}} - 1) \times \\
{\log _2}2(1 - {e^{\frac{{\sqrt 2 Q(\mu  - 1)}}{{{a_1}{e^{ - {a_2}\lambda }} + {a_3} + {a_4}Q}}}}) + \\
{e^{\frac{{\sqrt 2 Q(\mu  - 1)}}{{{a_1}{e^{ - {a_2}\lambda }} + {a_3} + {a_4}Q}}}} \times \left[ {\frac{{\frac{{\sqrt 2 Q}}{{{a_1}{e^{ - {a_2}\lambda }} + {a_3} + {a_4}Q}}{{\log }_2}e}}{{1 - {e^{ - \frac{{\sqrt 2 Q}}{{{a_1}{e^{ - {a_2}\lambda }} + {a_3} + {a_4}Q}}}}}}} \right.\\
 - \left. {{{\log }_2}\left[ {{e^{ - \frac{{\sqrt 2 Q\mu }}{{{a_1}{e^{ - {a_2}\lambda }} + {a_3} + {a_4}Q}}}} - {e^{\frac{{\sqrt 2 Q(\mu  - 1)}}{{{a_1}{e^{ - {a_2}\lambda }} + {a_3} + {a_4}Q}}}}} \right]} \right]
\end{array}
\end{equation}}
\end{lemma}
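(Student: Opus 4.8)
The plan is to identify the claimed bitrate $R_e(\lambda,Q)$ with the Shannon entropy of the quantized transform coefficients, as asserted in the text (following \cite{DBLP:journals/tcsv/LiOHK09}), and then to evaluate that entropy in closed form. Concretely, I would start from the zero-mean Laplacian density $f(x)=\tfrac{1}{\sqrt2\,\sigma}\exp(-\sqrt2\,|x|/\sigma)$, whose variance is $\sigma^{2}$ with $\sigma=\sigma(\lambda,Q)$ given by \eqref{eq:sigma_model}; its scale is $b=\sigma/\sqrt2$, so that $P(X\ge x)=\tfrac12 e^{-\sqrt2\,x/\sigma}$ for $x\ge0$. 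The derivation then factors into three steps: (i) compute the probability mass function $\{p_k\}$ induced by the uniform dead-zone quantizer of step $Q$ and offset $\mu$; (ii) substitute $\{p_k\}$ into $R_e=-\sum_k p_k\log_2 p_k$; and (iii) carry out the resulting series and insert \eqref{eq:sigma_model} for $\sigma$ verbatim.

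For step (i), I would integrate $f$ over the quantizer cells. The central (dead-zone) cell yields the zero level with probability $p_0$, and the remaining cells yield the nonzero levels. Because the exponential tail is memoryless, the outer-cell masses form a geometric sequence: writing $\rho:=e^{-\sqrt2 Q/\sigma}$, one obtains $p_k=C\,\rho^{|k|}$ for $|k|\ge1$, where the constant $C$ and the dead-zone boundary are fixed by the offset $\mu$. It is precisely this offset that produces the two distinct exponential terms appearing in \eqref{eq:video_coding_bitrate}, namely $e^{\sqrt2 Q(\mu-1)/\sigma}$ (governing the total nonzero probability and the dead-zone boundary) and $e^{-\sqrt2 Q\mu/\sigma}$ (the inner edge of the first nonzero cell), and hence the combination $e^{-\sqrt2 Q\mu/\sigma}-e^{\sqrt2 Q(\mu-1)/\sigma}$ inside the logarithm.

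For steps (ii)--(iii), I would split the entropy as $R_e=-p_0\log_2 p_0-2\sum_{k\ge1}p_k\log_2 p_k$ and expand $\log_2 p_k=\log_2 C - k\log_2(1/\rho)$, reducing the tail to the two standard sums $\sum_{k\ge1}\rho^{k}=\rho/(1-\rho)$ and $\sum_{k\ge1}k\rho^{k}=\rho/(1-\rho)^2$. The first sum recovers the total nonzero mass, giving the prefactor $e^{\sqrt2 Q(\mu-1)/\sigma}$ that multiplies the bracket, while the second produces the $\tfrac{(\sqrt2 Q/\sigma)\log_2 e}{1-e^{-\sqrt2 Q/\sigma}}$ term. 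Collecting the zero-level contribution $-p_0\log_2 p_0$ together with the constant prefix/sign bits then yields the stated closed form, after which \eqref{eq:sigma_model} is substituted for $\sigma$.

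I expect the main obstacle to be step (i): pinning down the dead-zone cell boundaries as functions of $\mu$ so that the quantizer pmf reproduces the two exponentials in \eqref{eq:video_coding_bitrate} exactly. Once the pmf is correct, the entropy series is routine --- only the two geometric sums above are needed --- so the remaining difficulty is the algebraic bookkeeping (keeping the $\log_2 2$ prefix/offset terms and the $-p_0\log_2 p_0$ contribution consistent in sign) rather than anything conceptually deep. I would sanity-check the result in the limits $Q\to0$ (rate growing without bound) and $Q\to\infty$, where $\rho\to0$, $p_0\to1$, and $R_e\to0$, to catch any bookkeeping error.
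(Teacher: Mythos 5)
Your proposal follows essentially the same route as the paper's Appendix A: approximate $R_e$ by the entropy of the dead-zone--quantized Laplacian with parameter $\Delta=\sqrt2/\sigma$, compute $P_0$ over the central cell $[-(Q-\mu Q),\,Q-\mu Q]$ and the geometric masses $P_n=\tfrac12(1-e^{-\Delta Q})e^{\Delta Q(\mu-n)}$ over the outer cells, evaluate $-P_0\log_2P_0-2\sum_{n\ge1}P_n\log_2P_n$ via the two geometric sums, and substitute \eqref{eq:sigma_model} for $\sigma$. The steps, the key quantities (the prefactor $1-P_0=e^{\Delta Q(\mu-1)}$, the $\tfrac{\Delta Q\log_2 e}{1-e^{-\Delta Q}}$ term, and the $\log_2$ of $e^{\mu\Delta Q}-e^{(\mu-1)\Delta Q}$), and the bookkeeping you describe all match the paper's derivation.
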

\begin{proof}\renewcommand{\qedsymbol}{}
Please refer to Appendix A.
\end{proof}

It should be noted that, to model the relationship between video coding bitrate and time slot, the symbol ${R_e}(\lambda ,Q;t)$ is also utilized in the subsequent discussions to denote the video coding bitrate.
Compared to AVC, HEVC employs different RC strategies\cite{DBLP:journals/tcsv/LiLLC18}. In the research of HEVC, a hyperbolic model is commonly utilized to describe the relationship between video coding bitrate $R$ and distortion $D$\cite{DBLP:journals/tip/ChenP19}, i.e. $D(R) = C \cdot {R^{ - K}}$, where, $C$ and $K$ are parameters related to video content. After obtaining the video coding bitrate through source entropy, this paper models the relationship between video coding distortion and bitrate as follows
\begin{equation}\label{eq:video_coding_distortion}
{D_e}(R{}_e(\lambda ,Q;t)) = C \cdot {({R_e}(\lambda ,Q;t))^{ - K}}
\end{equation}

\subsubsection{Coding Delay and Power Models}
In the research field of video coding, encoding complexity is an important indicator of quantization encoding delay. Compared to AVC, HEVC employs flexible coding units (CU) partitioning based on a quadtree structure and a richer set of coding modes. It results in relatively higher computational complexity for CU recursive detection and coding mode selection. However, the CU partitioning and the rapid selection of coding modes have been extensively researched. As a result, the computational load of recursive detection and coding mode selection has been significantly reduced \cite{DBLP:journals/tcsv/StorchAZBP22,DBLP:journals/tmm/LinCYCKCL22}.

In the block-based hybrid video coding framework, ME has the highest computational complexity and is dominant throughout the entire encoding process. {Thus, the computational complexity of ME is utilized to approximate the complexity of the entire video coding process {\cite{DBLP:journals/tcsv/LiWX14}}. Further, in HEVC, to achieve higher compression efficiency, more complex inter-frame prediction modes and tree-structured motion compensation mechanisms are adopted compared with AVC. Therefore, this paper also employs motion estimation time (MET) to approximate the encoding latency of HEVC.}

The computational complexity of ME is primarily determined by $Q$ and the number of SAD operations required for each prediction unit, where $SAD = {(2\lambda  + 1)^2}\chi $. The MET for P-frames can be calculated by dividing the total number of CPU clock cycles consumed by executing SAD operations by the clock frequency (in Hz)\cite{DBLP:journals/tcsv/ChenW15}. Specifically, given $\chi  = 1$, the encoding delay for P-frames can be approximately calculated using the following formula
\begin{equation}\label{eq:video_coding_delay}
d(\lambda ,Q) = {{N{{(2\lambda  + 1)}^2}r(Q){C_s}}}/{{{F_{clk}}}}\end{equation}
where $N$ represents the total number of prediction units in a frame, ${(2\lambda  + 1)^2}$ denotes the number of SAD operations performed for each prediction unit across a three-dimensional search space, $r(Q)$ signifies the ratio of the actual number of SAD operations in the HM to the theoretical count of SAD operations, ${C_s}$ is the number of clock cycles required to complete one SAD operation on a given CPU, and ${F_{clk}}$ denotes the CPU's clock frequency. 

In CMOS circuits, the overall circuit power consumption is primarily composed of three components: static power consumption, dynamic power consumption, and short-term power consumption, which can be represented as {
\cite{DBLP:journals/tcsv/LiWX14,DBLP:9609444}}
\begin{equation}\label{eq:overall_circuit_power}
{P_e} = {C_L}V{V_{dd}}{F_{clk}} + {I_{sc}}{V_{dd}} + {I_{leakage}}{V_{dd}}
\end{equation}
where ${C_L}$ represents the capacitance, $V$ denotes the voltage fluctuation level, ${V_{dd}}$ signifies the circuit supply voltage, ${I_{sc}}$ refers to the short-circuit current, and ${I_{leakage}}$ indicates the leakage current. Due to the fact that the values of the latter two terms in  (\ref{eq:overall_circuit_power}) are typically very small and can be neglected, the equation can be simplified accordingly as
\begin{equation}\label{eq:simplified_overall_circuit_power}
{P_e} \approx {C_L}V{V_{dd}}{F_{clk}}
\end{equation}

In most cases, the voltage fluctuation level in CMOS circuits can reach up to ${V_{dd}}$. Thus, in (\ref{eq:simplified_overall_circuit_power}), $V$ can be replaced with ${V_{dd}}$. More importantly, when CMOS circuits operate under low-voltage conditions, there is a proportional relationship between the clock frequency and the supply voltage, meaning that ${V_{dd}}$ can be expressed as a function of ${F_{clk}}$. Considering that the battery on a UAV has a relatively low rated voltage, this paper models the encoding power consumption of the UAV as
\begin{equation}\label{eq:video_coding_power}
{P_e} = k \cdot F_{clk}^3
\end{equation}
where $k$ is a constant in the dynamic power scaling model, determined by the supply voltage and the effective switching capacitance of the circuit.

\subsection{Channel Transmission d-P-R-D Model}
In this subsection, a d-P-R-D model for UAV channel transmission is designed. This model serves as an innovative theoretical analysis tool for resource allocation and performance optimization of UAV video transmission.
\subsubsection{Rate and Distortion Models}
In the d-P-R-D model of channel transmission, the rate is defined as the data rate of the UAV communication channel. In the application scenario discussed in this paper, the SNR of the signal received by the ECV from the UAV at time slot can be expressed as
\begin{equation}\label{eq:signal_snr}
snr(t) = {{{P_t}(t)}}/({{{L_{AtG}}(t){P_n}}})
\end{equation}
where ${P_t}(t)$ represents the signal transmit power of the UAV at time slot $t$, and ${P_n}$ denotes the noise power. According to Shannon’s channel capacity formula, the data rate (in bps/Hz) received by the ECV at time slot $t$ can be expressed as
\begin{equation}\label{eq:data_rate}
{R_c}(t) = {\log _2}(1 + snr(t))
\end{equation}

In the d-P-R-D model of channel transmission, this paper focuses on the issue of bit error distortion in UAV communication channels. For the application scenario discussed in this paper, the video transmission between the UAV and the ECV employs a direct single-hop communication approach and avoids complex routing and relay equipment that may lead to extra distortion. Further, this paper imposes a constraint on the average bitrate of video coding, ensuring that it does not exceed the average data rate of the channel between the UAV and the ECV. This mechanism effectively eliminates packet loss distortion stemming from network congestion.

In the UAV deployment scenario considered in this paper, signals received by the ECV are predominantly LoS signals. Then, the bit error rate (BER) curve of signals received by the ECV approximates to that of an additive white Gaussian noise (AWGN) channel. Thus, the BER model of the AWGN channel is employed to approximate the BER of signals received by the ECV\cite{DBLP:journals/icl/AndradeRCZ22}, which can be expressed as below
\begin{equation}\label{eq:channel_distortion}
{D_c}({P_t}(t)) = \frac{1}{2}(1 - erf(\sqrt {{{{P_t}(t)}}/({{{L_{AtG}}(t){P_n}}})} ))
\end{equation}
where $erf(x)$ is the error function, and $erf(x) = \frac{2}{{\sqrt \pi  }}\int_0^x {{e^{ - {t^2}}}dt}$. It is evident from the aforementioned formula that increasing the transmit power of the UAV will lead to an increased SNR, thereby reducing the transmission distortion.

\subsubsection{Transmission Delay and Power Models}
Transmission delay ${t_{trans}}$ refers to the total time it takes for a data packet to travel from the sender of the channel, through the transmission process, and ultimately reach the receiver. It mainly includes sending delay ${t_{send}}$, propagation delay ${t_{propa}}$, 
and buffer processing delay ${t_{buffer}}$.

Sending delay ${t_{send}}$ is the time it takes for the UAV transmission module to send a data unit from the start to the end. It is calculated from the moment the first bit of the data unit is sent until the last bit of the data unit is completely transmitted. This delay is primarily determined by the length of data unit and channel capacity\cite{DBLP:journals/tnse/WuCYYWQ24}. Its calculation formula can be expressed as
\begin{equation}\label{eq:sending_delay}
{t_{send}} = {L}/({{B{R_c}(t)}})
\end{equation}
where $L$ represents the length of the sending data unit, $B$ denotes the UAV network bandwidth.

Propagation delay ${t_{propa}}$ is the time required for an electromagnetic signal to travel a certain distance within a communication channel. The calculation formula is ${t_{propa}} = \frac{d}{c}$, where $d$ represents the signal propagation distance between the UAV and the ECV, and $c$ denotes the propagation speed of the electromagnetic wave in the medium. In the application scenario discussed in this paper, since the propagation distance between the UAV and the ECV is relatively short, the propagation delay can be neglected.


Buffer processing delay ${t_{buffer}}$ refers to the time required for the receiver to perform operations such as error checking, data extraction, and buffer sorting after receiving a data packet. When designing the channel transmission d-P-R-D model, we focus on studying the behavior and performance of the sender. Therefore, buffer processing delay is outside the research scope of this paper and is temporarily not considered.

In the d-P-R-D model of channel transmission, the power consumption is equivalent to the UAV transmit power ${P_t}(t)$.

\subsection{Problem Formulation}
The research objective of this paper is to achieve the minimization of end-to-end distortion and total power consumption in the process of UAV video coding and transmission. Based on the aforementioned system model, this paper formulates a joint source-channel optimization problem. This problem aims to minimize the end-to-end video coding and transmission distortion and enhance energy efficiency through the joint optimization of UAV video coding parameters $\lambda $, $Q$, and the transmit power of the UAV under the constraints of delay, rate, and power. The formulated optimization problem is articulated as follows
\begin{subequations}\label{eq:original_problem}
\begin{alignat}{2}
&\mathop {\rm{Minimize}}\limits_{\lambda ,Q,{P_t}(t)} {D_e}({R_e}(\lambda ,Q;t)) + {\rho _1}{D_c}({P_t}(t)) + {\rho _2}{P_{tot}}(t)\\ \allowdisplaybreaks[4]
&{\rm s.t.}\mathop {\lim \inf }\limits_{t \to T} {{\bar R}_e}(\lambda ,Q;t) \le B{{\bar R}_c}(t)\\ \allowdisplaybreaks[4]
&{L}/({{B{d_{\max \_trans}}}}) \le {R_c}(t)\\
&d(\lambda ,Q) + {t_{send}} \le {d_{\max }}\\
&{P_{tot}}(t) = {P_c} + {P_e} + {P_t}(t) \le {P_{\max }}
\end{alignat}
\end{subequations}
where the time-averaged bitrate ${\bar R_e}(\lambda ,Q;t) = \frac{1}{t}\sum\limits_{\tau  = 1}^t {{R_e}(\lambda ,Q;\tau )}$ and the time-averaged data rate ${\bar R_c}(t) = \frac{1}{t}\sum\limits_{\tau  = 1}^t {{R_c}(\tau )}$, $T$ is total number of time slots, (\ref{eq:original_problem}b) is a rate causality constraint between source and channel that is enforced to avoid video playback rebuffering, (\ref{eq:original_problem}c) is the data rate constraint with ${d_{\max \_trans}}$ being the tolerable transmission delay, (\ref{eq:original_problem}d) represents the total delay constraint with ${d_{\max }}$ being the tolerable total delay, (\ref{eq:original_problem}e) is the total power constraint with ${P_{\max }}$ being the maximum UAV power and ${P_c}$  denoting the circuit power.
Besides, ${\rho _1}$ and ${\rho _2}$ are positive real numbers, and their values reflect the trade-off between source coding distortion, channel transmission bit error distortion, and the total power consumption of the UAV. The selection of ${\rho _1}$ and ${\rho _2}$ depends on user preferences as well as the Pareto boundary determined by the multi-objective optimization problem, as depicted in Fig. \ref{fig_pareto_frontiers}.
\begin{figure}[!t]
\centering
\includegraphics[width=2.6 in, height = 1.3 in]{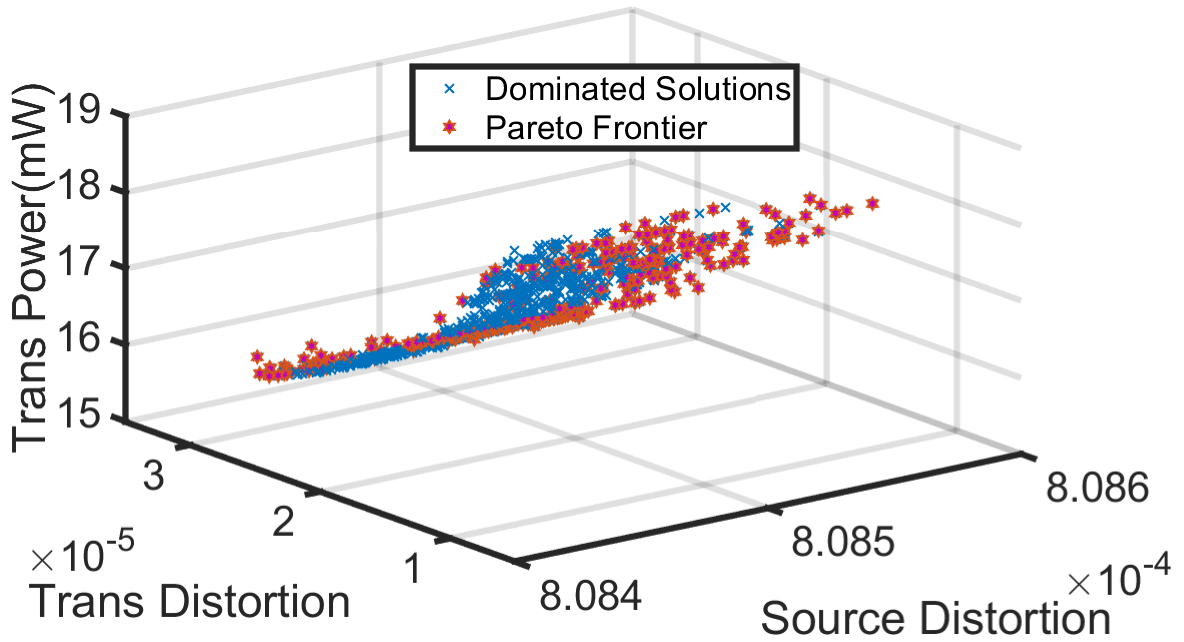}
\caption{Pareto frontiers and dominant solution sets.}
\label{fig_pareto_frontiers}
\end{figure}

The formulated problem is a multi-objective and non-convex optimization problem, and its solution is of high complexity. Firstly, the rate causality constraint is time-averaged. As time slot $t$ increases, the number of temporal decision variables grows exponentially. If traditional optimization methods are applied directly, the computational complexity is unacceptable. Secondly, the expression of the objective function is extremely complex. It includes fractional terms, exponential terms, and logarithmic terms. These terms are tightly coupled, further increasing the computational complexity of solving the problem. Additionally, the formulated multi-objective and non-convex optimization problem may be NP-hard, and it may even be impossible to find a global optimal solution.

To solve this challenging problem, this paper proposes a Lyapunov Repeated Iteration (LyaRI) algorithm. Initially, leveraging Lyapunov stability theory, LyaRI decouples the sequential-decision problem, breaking it down into multiple independent repeated optimization problems. It significantly reduces the computational complexity of solving the formulated problem. Subsequently, an iterative optimization strategy is employed to decompose the repeated optimization problem into two iterative optimization sub-problems. It achieves the decoupling of decision variables and makes sub-problems more solvable. Lastly, for the non-convex constraints within sub-problems, LyaRI adopts variable substitution and convex approximation strategies to approximately convert them into convex constraints.

\section{Lyapunov Repeated Iteration Algorithm Design}
\subsection{Problem Decomposition}
In addressing the constraints that include time-averaged terms, this paper employs Lyapunov drift-plus-penalty techniques for transforming. Specifically, a set of virtual queues $\left\{ {X\left( t \right)} \right\}$ is introduced and defined as follows
\begin{equation}\label{eq:virtual_queues}
X(t + 1) = X(t) + {R_e}(\lambda ,Q;t) - B{R_c}(t),\forall t
\end{equation}

In order to enforce the time-averaged constraint (\ref{eq:original_problem}b), the virtual queues need to meet the following stability conditions
\begin{equation}\label{eq:stability_conditions}
\mathop {\lim }\limits_{t \to \infty } {{E\left\{ {{{[X\left( t \right)]}^ + }} \right\}} \mathord{\left/
 {\vphantom {{E\left\{ {{{[X\left( t \right)]}^ + }} \right\}} t}} \right.
 \kern-\nulldelimiterspace} t} = 0,{[X\left( t \right)]^ + } = \max \{ X\left( t \right),0\} 
\end{equation}

Continuing, a Lyapunov function $L\left( t \right)$ is defined, which can be regarded as a scalar measure of the degree of constraint violation at a given time slot $t$. To simplify calculations, $L\left( t \right)$ is defined as $L\left( t \right) \buildrel \Delta \over = \frac{1}{2}{({[X(t)]^ + })^2}$. Accordingly, the expression for the Lyapunov drift-plus-penalty function can be represented as $\Delta \left( t \right) + V({D_e}({R_e}(\lambda ,Q;t)) + {\rho _1}{D_c}({P_t}(t)) + {\rho _2}{P_{tot}}(t))$, where $\Delta \left( t \right) = L\left( {t + 1} \right) - L\left( t \right)$ represents the Lyapunov drift, ${D_e}({R_e}(\lambda ,Q;t)) + {\rho _1}{D_c}({P_t}(t)) + {\rho _2}{P_{tot}}(t)$ is the penalty function of the optimization problem, and $V$ is a non-negative coefficient that characterizes the trade-off between constraint violation and optimality. Therefore, the solution to (\ref{eq:original_problem}) can be achieved by repeatedly minimizing the drift-plus-penalty function at each time slot while satisfying all non-time-averaged constraints within (\ref{eq:original_problem}).

\begin{lemma}\label{lemma:lemma_upper_bound}
{\rm 
At each time slot $t$, the upper bound of the Lyapunov drift-plus-penalty function can be expressed as follows
\begin{equation}\label{eq:upper_bound_Lyapunov}
\begin{array}{l}
\Delta \left( t \right) + V({D_e}({R_e}(\lambda ,Q;t)) + {\rho _1}{D_c}({P_t}(t)) + {\rho _2}{P_{tot}}(t))\\
 \le {[X(t)]^ + }\left( {{R_e}(\lambda ,Q;t) - B{R_c}(t)} \right) + \frac{1}{2}{(R_c^{\max })^2} + \\
V({D_e}({R_e}(\lambda ,Q;t)) + {\rho _1}{D_c}({P_t}(t)) + {\rho _2}{P_{tot}}(t))
\end{array}
\end{equation}}
\end{lemma}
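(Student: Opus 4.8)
The plan is to first strip away everything that is common to both sides. The penalty term $V(D_e(R_e(\lambda,Q;t)) + \rho_1 D_c(P_t(t)) + \rho_2 P_{tot}(t))$ appears \emph{identically} on the left and right of the claimed inequality, so it cancels and the entire statement collapses to a bound on the pure Lyapunov drift. Writing $\Delta(t) = L(t+1) - L(t) = \tfrac{1}{2}\left(([X(t+1)]^+)^2 - ([X(t)]^+)^2\right)$, the whole task reduces to establishing $\Delta(t) \le [X(t)]^+\left(R_e(\lambda,Q;t) - BR_c(t)\right) + \tfrac{1}{2}(R_c^{\max})^2$, after which re-attaching $V(\cdots)$ to both sides recovers the drift-plus-penalty form.

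First I would substitute the virtual-queue dynamics \eqref{eq:virtual_queues}, namely $X(t+1) = X(t) + R_e(\lambda,Q;t) - BR_c(t)$, into $[X(t+1)]^+$. Because $X(t)$ may be negative while the Lyapunov function only tracks its positive part, the crux is the elementary inequality $[u+w]^+ \le [[u]^+ + w]^+$, valid for all real $u,w$: it holds with equality when $u \ge 0$, and when $u < 0$ replacing $u$ by $[u]^+ = 0$ only enlarges the argument of the outer $[\,\cdot\,]^+$. Combining this (with $u = X(t)$, $w = R_e - BR_c$) with the trivial fact $([z]^+)^2 \le z^2$ gives $([X(t+1)]^+)^2 \le \left([X(t)]^+ + (R_e - BR_c)\right)^2$, which is precisely the device that lets the signed $X(t)$ be swapped for its nonnegative part inside the square without reversing the inequality.

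Expanding the right-hand side yields $([X(t)]^+)^2 + (R_e - BR_c)^2 + 2[X(t)]^+(R_e - BR_c)$. Feeding this back into $\Delta(t)$, the $([X(t)]^+)^2$ terms cancel against $-([X(t)]^+)^2$, and the prefactor $\tfrac{1}{2}$ turns the cross term into exactly $[X(t)]^+(R_e(\lambda,Q;t) - BR_c(t))$ while leaving a residual $\tfrac{1}{2}(R_e(\lambda,Q;t) - BR_c(t))^2$. At this point the structure of the target bound is already in place, with only the residual quadratic remaining to be controlled.

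The hard part will be bounding that residual by the constant $\tfrac{1}{2}(R_c^{\max})^2$, since this is the only step where modeling assumptions — rather than pure algebra — must enter. I would invoke boundedness of both rates: the encoding bitrate satisfies $R_e \ge 0$, and in the feasible regime enforced by the rate-causality constraint \eqref{eq:original_problem} the arrival does not outstrip the service, so $|R_e - BR_c|$ is dominated by the peak service rate, giving $(R_e - BR_c)^2 \le (R_c^{\max})^2$ with $R_c^{\max}$ interpreted as the maximum attainable value of $BR_c(t)$ (equivalently the larger of the two rates). The delicate point to get right is stating precisely which rate bound $R_c^{\max}$ denotes and verifying it holds uniformly over \emph{every} admissible choice of $(\lambda, Q, P_t(t))$ and every $t$, because the entire purpose of this constant is to be independent of the decision variables and of the time slot.
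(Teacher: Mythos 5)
Your proof is correct and follows essentially the same route as the paper's: expand the drift via the queue update (\ref{eq:virtual_queues}), isolate the cross term ${[X(t)]^+}(R_e - BR_c)$, and absorb the residual quadratic into the constant. The one place you diverge is in justifying $({[X(t+1)]^+})^2 \le ({[X(t)]^+} + R_e - BR_c)^2$: the paper's Appendix B does this by an explicit three-case analysis on the signs of $X(t)$ and $X(t+1)$, whereas you use the composition inequality ${[u+w]^+} \le {[{[u]^+} + w]^+}$ together with $({[z]^+})^2 \le z^2$, which disposes of all cases at once --- a cleaner but equivalent argument. You are also more scrupulous than the paper on the final step: the appendix stops at $\Delta(t) \le {[X(t)]^+}(R_e - BR_c) + \tfrac{1}{2}(R_e - BR_c)^2$ and silently replaces the quadratic by $\tfrac{1}{2}(R_c^{\max})^2$ in the lemma statement, never defining $R_c^{\max}$ or verifying that the bound holds uniformly over the admissible decision variables; your explicit flagging of this as the step that requires a boundedness assumption rather than pure algebra is a genuine improvement. (A minor point in your favor: the paper's appendix drops the bandwidth factor $B$ in front of $R_c(t)$ throughout, inconsistently with the queue definition, whereas you retain it correctly.)
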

\begin{proof}\renewcommand{\qedsymbol}{}
Please refer to Appendix B.
\end{proof}
(\ref{eq:upper_bound_Lyapunov}) indicates that the minimization of the drift-plus-penalty term can be approximated by minimizing its upper bound. As a result, the complex sequential-decision problem can be approximately transformed into a family of repeated optimization problems aiming at minimizing the upper bound of the drift-plus-penalty function. Specifically, we can repeatedly optimize the following problem at each time slot $t$ to obtain the upper bound of the objective function of (\ref{eq:Lyapunov_transformed_problem}). 
\begin{subequations}\label{eq:Lyapunov_transformed_problem}
\begin{alignat}{2}
&\mathop {\rm{Minimize}}\limits_{\lambda ,Q,{P_t}(t)} {[X(t)]^ + }\left( {{R_e}(\lambda ,Q;t) - B{R_c}(t)} \right) + \notag\\
&V({D_e}({R_e}(\lambda ,Q;t)) + {\rho _1}{D_c}({P_t}(t)) + {\rho _2}{P_{tot}}(t))\\
&{\rm s.t.} \text{ } \rm{Constraints \text{ } 
(\ref{eq:original_problem}c)-(\ref{eq:original_problem}e) \text{ } are \text{ } satisfied.}
\end{alignat}
\end{subequations}

However, solving problem (\ref{eq:Lyapunov_transformed_problem}) remains challenging due to the deep coupling between the quantization step $Q$ and the search range $\lambda $. To address this issue, this paper proposes an iterative optimization strategy, decomposing the problem into two sub-problems: quantization step optimization and power control along with search range optimization. 
\subsection{Solution to the Quantization Step Sub-Problem}
For any given transmit power ${P_t}(t)$ and search range $\lambda $, quantization step $Q$ can be optimized by solving the following sub-problem.
\begin{subequations}\label{eq:original_Q_subproblem}
\begin{alignat}{2}
&\mathop {\rm{Minimize}}\limits_Q {[X(i)]^ + }\left( {{R_e}(\lambda ,Q;t)} \right) + V  {D_e}({R_e}(\lambda ,Q;t))\\
&{\rm s.t.} \text{ } \rm{Constraint \text{ }(\ref{eq:original_problem}d)\text{ }is\text{ }satisfied.}
\end{alignat}
\end{subequations}

After substituting the R-D model into (\ref{eq:original_Q_subproblem}), it can be observed that the expression of the objective function is complex, making it difficult to analyze and optimize directly. To facilitate the analysis of the objective function of this sub-problem, this paper introduces a slack variable $\Omega $ and sets ${R_e}(\lambda ,Q;t) \le \Omega $. Meanwhile, a slack variable $\delta $ is introduced, and {we set $C \cdot {({R_e}(\lambda ,Q;t))^{ - K}} \leqslant \delta$.} Thus, (\ref{eq:original_Q_subproblem}) can be reformulated as
\begin{subequations}\label{eq:reformed_Q_subproblem}
\begin{alignat}{2}
&\mathop {\rm{Minimize}}\limits_Q {[X(i)]^ + }\Omega  + V\delta \\
&{\rm s.t.}{R_e}(\lambda ,Q;t) \le \Omega \\
&{R_e}(\lambda ,Q;t) \ge {({C}/{\delta })^{\frac{1}{K}}}\\
&\rm{Constraint \text{ }(\ref{eq:original_problem}d)\text{ }is\text{ }satisfied.}
\end{alignat}
\end{subequations}

Since (\ref{eq:reformed_Q_subproblem}b) is a non-convex constraint, this paper employs the successive convex approximation (SCA) strategy to convert (\ref{eq:reformed_Q_subproblem}b) into a convex constraint. Specifically, for any given local iterative point ${Q_0}$, we can have the following approximation
\begin{equation}\label{eq:video_bitrate_sca_Q0}
{R_e}({\lambda ^{(r)}},{Q_0};t) + \frac{{\partial {R_e}({\lambda ^{(r)}},{Q_0};t)}}{{\partial Q}}(Q - {Q_0}) \le \Omega 
\end{equation}

Moreover, (\ref{eq:reformed_Q_subproblem}c) is also non-convex and includes an exponential term. The following lemma elaborates on its approximation transformation.
\begin{lemma}\label{lemma:lemma_transform_convex}
{\rm 
By introducing auxiliary variables $\varepsilon $ and $\xi $, (\ref{eq:reformed_Q_subproblem}c) is approximately transformed into the following convex constraints
\begin{subequations}\label{eq:transform_convex_Q_subproblem}
\begin{alignat}{2}
&{R_e}({\lambda ^{(r)}},{Q_0};t) + \frac{{\partial {R_e}({\lambda ^{(r)}},{Q_0};t)}}{{\partial Q}}(Q - {Q_0}) \ge \varepsilon \\
&\left( {\varepsilon ,1,\xi } \right) \in {K_{\exp }}, \text{ } \left( {\delta ,C, - CK\xi } \right) \in {K_{\exp }}
\end{alignat}
\end{subequations}}
\end{lemma}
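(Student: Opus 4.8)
The plan is to verify that the linearized inequality (\ref{eq:transform_convex_Q_subproblem}a) together with the two exponential-cone memberships in (\ref{eq:transform_convex_Q_subproblem}b) reproduce the non-convex constraint (\ref{eq:reformed_Q_subproblem}c), with the only loss of exactness coming from the first-order approximation of $R_e$. I would adopt the convention that $(x, y, z) \in K_{\exp}$ means $x \ge y\, e^{z/y}$ with $y > 0$, as this is the form under which the stated triples are meaningful.

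First I would strip the power on the right-hand side of (\ref{eq:reformed_Q_subproblem}c), namely $(C/\delta)^{1/K}$, by a logarithmic substitution. Introducing $\xi$ to carry the exponent $\tfrac{1}{K}\ln(C/\delta)$ turns $(C/\delta)^{1/K}$ into $e^{\xi}$, so (\ref{eq:reformed_Q_subproblem}c) becomes $R_e(\lambda, Q; t) \ge e^{\xi}$. I would then introduce the second slack $\varepsilon$ to split this into an epigraph pair: $R_e(\lambda, Q; t) \ge \varepsilon$ and $\varepsilon \ge e^{\xi}$, thereby isolating the exponential nonlinearity in a single scalar relation.

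Next I would cast each piece as a cone constraint. The scalar relation $\varepsilon \ge e^{\xi}$ is exactly $(\varepsilon, 1, \xi) \in K_{\exp}$, giving the first membership in (\ref{eq:transform_convex_Q_subproblem}b). For the definition of $\xi$ I would relax the equality $\xi = \tfrac{1}{K}\ln(C/\delta)$ to $\xi \ge \tfrac{1}{K}\ln(C/\delta)$, which only strengthens $\varepsilon \ge e^{\xi}$; exponentiating and using $C, \delta > 0$, this is equivalent to $\delta \ge C e^{-K\xi}$, i.e.\ $(\delta, C, -CK\xi) \in K_{\exp}$, the second membership. The remaining term $R_e(\lambda, Q; t) \ge \varepsilon$ is still non-convex in $Q$, so I would linearize $R_e$ around the iterate $Q_0$ exactly as in (\ref{eq:video_bitrate_sca_Q0}) to arrive at (\ref{eq:transform_convex_Q_subproblem}a); this SCA step is the sole source of the word ``approximately'' in the statement.

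To close the argument I would chain the implications: $(\delta, C, -CK\xi) \in K_{\exp}$ gives $\delta \ge C e^{-K\xi}$, hence $\xi \ge \tfrac{1}{K}\ln(C/\delta)$ and $e^{\xi} \ge (C/\delta)^{1/K}$; then $(\varepsilon, 1, \xi) \in K_{\exp}$ gives $\varepsilon \ge e^{\xi} \ge (C/\delta)^{1/K}$; and (\ref{eq:transform_convex_Q_subproblem}a) supplies $R_e(\lambda, Q; t) \ge \varepsilon$, so $R_e(\lambda, Q; t) \ge (C/\delta)^{1/K}$ is recovered. I expect the main obstacle to be the bookkeeping of inequality directions: ensuring the relaxation $\xi \ge \tfrac{1}{K}\ln(C/\delta)$ and the sign of the third coordinate $-CK\xi$ are oriented so that the two memberships really encode $\varepsilon \ge e^{\xi}$ and $\delta \ge C e^{-K\xi}$ rather than their reverses. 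A secondary point of care is checking the positivity $\delta, C, K > 0$, which is what makes both cone memberships well defined and keeps $K_{\exp}$ convex.
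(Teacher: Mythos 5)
Your proposal is correct and follows essentially the same route as the paper's Appendix C: sandwiching the constraint as $R_e \ge \varepsilon \ge (C/\delta)^{1/K}$, linearizing $R_e$ at $Q_0$ via SCA, and encoding $\ln\varepsilon \ge \xi \ge \tfrac{1}{K}(\ln C - \ln\delta)$ as the two exponential-cone memberships $\varepsilon \ge e^{\xi}$ and $\delta \ge C e^{-K\xi}$. Your explicit chaining of the implications and the sign/positivity checks are a slightly more careful write-up of the same argument.
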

\begin{proof}\renewcommand{\qedsymbol}{}
Please refer to Appendix C.
\end{proof}

In summary, this paper introduces a set of slack variables $\Omega $, $\delta $,  $\varepsilon $, and  $\xi $, and employs the SCA strategy to perform convex approximations on the non-convex constraints. As a result, we can approximately transform (\ref{eq:original_Q_subproblem}) into the following convex optimization problem.
\begin{subequations}\label{eq:final_Q_subproblem}
\begin{alignat}{2}
&\mathop {\rm{Minimize}}\limits_Q {[X(i)]^ + }\Omega  + V\delta \\
&{\rm s.t.} \text{ } \rm{Constraints \text{ } (\ref{eq:original_problem}d), \text{ } 
(\ref{eq:video_bitrate_sca_Q0}), \text{ }
(\ref{eq:transform_convex_Q_subproblem}a), \text{ }(\ref{eq:transform_convex_Q_subproblem}b) \text{ } 
are \text{ } satisfied.}
\end{alignat}
\end{subequations}

The minimum value of the objective function in (\ref{eq:final_Q_subproblem}) provides an upper bound for (\ref{eq:original_Q_subproblem}). (\ref{eq:video_bitrate_sca_Q0}) and (\ref{eq:transform_convex_Q_subproblem}a) are linear constraints, while (\ref{eq:transform_convex_Q_subproblem}b) includes two exponential cone constraints. Thus, (\ref{eq:final_Q_subproblem}) is a convex conic optimization problem. Utilizing convex optimization tools, such as MOSEK, (\ref{eq:final_Q_subproblem}) can be effectively solved to obtain its optimal solution.

\subsection{Solution to the Power Control and Search Range Sub-Problem}
For any given quantization step $Q$, UAV transmit power ${P_t}(t)$ and search range $\lambda $ can be optimized by solving the following sub-problem.
\begin{subequations}\label{eq:original_lambda_Pt_subproblem}
\begin{alignat}{2}
&\mathop {\rm{Minimize}}\limits_{\lambda ,{P_t}(t)} {[X(t)]^ + }\left( {{R_e}(\lambda ,Q;t) - B{R_c}(t)} \right) + \notag \\
&V({D_e}({R_e}(\lambda ,Q;t)) + {\rho _1}{D_c}({P_t}(t)) + {\rho _2}{P_{tot}}(t))\\
&{\rm s.t.} \text{ } \rm{Constraints \text{ } 
(\ref{eq:original_problem}c)-(\ref{eq:original_problem}e) \text{ } are \text{ } satisfied.}
\end{alignat}
\end{subequations}

Similarly, after substituting the R-D model, it can be observed that the expression of the objective function of (\ref{eq:original_lambda_Pt_subproblem}) is complex. To facilitate the analysis of the objective function of this problem, this paper introduces a set of slack variables $\Omega $, $\delta $, and $\zeta $, and sets ${R_e}(\lambda ,Q;t) \le \Omega $, $C \cdot {({R_e}(\lambda ,Q;t))^{ - K}} \le \delta $, and ${D_c}({P_t}(t)) \le \zeta $. 
Based on (\ref{eq:video_bitrate_sca_Q0}) and the findings of \textbf{Lemma} \ref{lemma:lemma_transform_convex}, the SCA strategy is employed to perform a convex approximation on the non-convex constraints. Let ${{\rm{d}}_{coe}}{\rm{ = }}\frac{{N \cdot r(Q){C_s}}}{{{F_{clk}}}} \approx {d_1}({d_2}{e^{ - {d_3}Q}} + {d_4})$, where ${d_1}$, ${d_2}$,  ${d_3}$, and ${d_4}$ are nonlinear regression coefficients \cite{DBLP:journals/tcsv/LiWX14}.
Then, (\ref{eq:original_problem}d) can be expressed as ${(2\lambda  + 1)^2}{d_{coe}} + {t_{send}} \le {d_{\max }}$. Additionally, for the term ${R_c}(t)$ in the objective function, we replace it with a slack variable $\varphi $ and set ${R_c}(t) \ge \varphi $. Therefore, (\ref{eq:original_lambda_Pt_subproblem}) can be reformulated as
\begin{subequations}\label{eq:reformed_lambda_Pt_subproblem}
\begin{alignat}{2}
&\mathop {\rm{Minimize}}\limits_{\lambda ,{P_t}(t)} {[X(t)]^ + }\left( {\Omega  - B\varphi } \right) + V(\delta  + {\rho _1}\zeta  + {\rho _2}{P_{tot}}(t))\\
&{\rm s.t.}{R_e}({\lambda _0},{Q^{(r)}};t) + \frac{{\partial {R_e}({\lambda _0},{Q^{(r)}};t)}}{{\partial \lambda }}(\lambda  - {\lambda _0}) \le \Omega \\
&{R_e}({\lambda _0},{Q^{(r)}};t) + \frac{{\partial {R_e}({\lambda _0},{Q^{(r)}};t)}}{{\partial \lambda }}(\lambda  - {\lambda _0}) \ge \varepsilon \\
&{D_c}({P_t}({t_0})) + \frac{{d{D_c}({P_t}({t_0}))}}{{d{P_t}(t)}}({P_t}(t) - {P_t}({t_0})) \le \zeta \\
&{R_c}(t) \ge \varphi \\
&{(2\lambda  + 1)^2}  {d_{coe}} + {t_{send}} \le {d_{\max }}\\
&\rm{Constraints \text{ } (\ref{eq:original_problem}c), \text{ }(\ref{eq:original_problem}e), \text{ }
(\ref{eq:transform_convex_Q_subproblem}b)
 \text{ } are \text{ } satisfied.}
\end{alignat}
\end{subequations}
Regarding the constraints related to ${R_c}(t)$ in (\ref{eq:reformed_lambda_Pt_subproblem}), the following lemma outlines the specific transformation procedure.

\begin{lemma}\label{lemma:lemma_transform_data_rate}
{\rm 
The constraints related to ${R_c}(t)$ in (\ref{eq:reformed_lambda_Pt_subproblem}) can be equivalently transformed into the following forms
\begin{subequations}\label{eq:transform_data_rate_constraint}
\begin{alignat}{2}
&\left( {{{\rm Z}_1},1,\varphi \ln 2} \right) \in {K_{\exp }}\\
&\left( {\varphi ,\tau ,\sqrt {{{2L}}/{B}} } \right) \in Q_r^3, \text{ }
\left( {{1}/{2},{{\rm Z}_3},{{\rm Z}_2}} \right) \in Q_r^3\\
& - {{\rm Z}_1} + {{{P_t}(t)}}/({{{L_{AtG}}(t){P_n}}}) =  - 1, \ {{\rm Z}_2} = 2\lambda  + 1\\
&{{\rm Z}_3} = {({d_{\max }} - \tau )}/{{{d_{coe}}}}, \ \tau  \le {d_{\max \_trans}}
\end{alignat}
\end{subequations}}
\end{lemma}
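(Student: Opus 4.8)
The plan is to isolate the three places in (\ref{eq:reformed_lambda_Pt_subproblem}) where the channel rate $R_c(t)=\log_2(1+snr(t))$ enters and to rewrite each as a membership in the exponential cone $K_{\exp}=\{(x_1,x_2,x_3):x_1\ge x_2e^{x_3/x_2},\,x_2>0\}$ or the rotated quadratic cone $Q_r^3=\{(x_1,x_2,x_3):2x_1x_2\ge x_3^2,\,x_1,x_2\ge0\}$. These three are: the slack bound $R_c(t)\ge\varphi$ of (\ref{eq:reformed_lambda_Pt_subproblem}e); the total-delay constraint (\ref{eq:reformed_lambda_Pt_subproblem}f), which carries the sending delay $t_{send}=L/(BR_c(t))$ from (\ref{eq:sending_delay}); and the rate-causality constraint (\ref{eq:original_problem}c), equivalent to $t_{send}\le d_{\max\_trans}$. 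I would introduce $Z_1,Z_2,Z_3$ and $\tau$ to linearize the logarithm, the reciprocal $1/R_c(t)$, and the quadratic $(2\lambda+1)^2$, respectively.

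First I would handle the logarithm. Exponentiating $R_c(t)=\log_2(1+snr(t))\ge\varphi$ with base $2$ gives $1+snr(t)\ge 2^{\varphi}=e^{\varphi\ln2}$. Setting $Z_1:=1+snr(t)=1+P_t(t)/(L_{AtG}(t)P_n)$ yields the equality in (\ref{eq:transform_data_rate_constraint}c), while the residual $Z_1\ge e^{\varphi\ln2}$ is, reading off $K_{\exp}$ with second coordinate $1$, exactly $(Z_1,1,\varphi\ln2)\in K_{\exp}$, i.e.\ (\ref{eq:transform_data_rate_constraint}a). This rewriting is exact, since the exponential cone encodes $\exp/\log$ relations without relaxation.

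Next I would convert the two delay constraints into rotated cones. Introducing $\tau$ as a proxy for the sending delay and using $R_c(t)\ge\varphi$, I would bound $t_{send}=L/(BR_c(t))\le L/(B\varphi)$, so it suffices to require $\tau\ge L/(B\varphi)$, i.e.\ $\varphi\tau\ge L/B$; written as $2\varphi\tau\ge(\sqrt{2L/B})^2$ this is $(\varphi,\tau,\sqrt{2L/B})\in Q_r^3$. For (\ref{eq:reformed_lambda_Pt_subproblem}f), since $\tau\ge t_{send}$ it is enough to enforce $(2\lambda+1)^2 d_{coe}+\tau\le d_{\max}$; with $Z_2:=2\lambda+1$ and $Z_3:=(d_{\max}-\tau)/d_{coe}$ (the equalities in (\ref{eq:transform_data_rate_constraint}c)--(\ref{eq:transform_data_rate_constraint}d)) this reads $Z_2^2\le Z_3$, i.e.\ $2\cdot\tfrac12\cdot Z_3\ge Z_2^2$, which is $(\tfrac12,Z_3,Z_2)\in Q_r^3$. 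Finally, because $\tau\ge t_{send}$, the bound $t_{send}\le d_{\max\_trans}$ is captured by the single linear condition $\tau\le d_{\max\_trans}$, completing (\ref{eq:transform_data_rate_constraint}a)--(\ref{eq:transform_data_rate_constraint}d).

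The step I expect to be the crux is justifying the word \emph{equivalently}, because $\tau$ and $\varphi$ enter the cones through one-sided inequalities rather than the equalities $\varphi=R_c(t)$, $\tau=t_{send}$, and a non-optimal $\varphi<R_c(t)$ forces $\tau\ge L/(B\varphi)>t_{send}$, which is strictly tighter than the original delay budget. I would resolve this by arguing both inclusions of the feasible sets projected onto $(\lambda,Q,P_t(t),\varphi)$. In one direction, any transformed-feasible tuple satisfies $R_c(t)\ge\varphi$ by (\ref{eq:transform_data_rate_constraint}a) and, combining $\varphi\le R_c(t)$ with $\varphi\tau\ge L/B$, the key chain $\tau\ge L/(B\varphi)\ge L/(BR_c(t))=t_{send}$; substituting $\tau\ge t_{send}$ into $(2\lambda+1)^2 d_{coe}+\tau\le d_{\max}$ and into $\tau\le d_{\max\_trans}$ returns (\ref{eq:reformed_lambda_Pt_subproblem}f) and (\ref{eq:original_problem}c). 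In the other direction, note the objective (\ref{eq:reformed_lambda_Pt_subproblem}a) depends on $\varphi$ only through $-B[X(t)]^+\varphi$ with $[X(t)]^+\ge0$, so increasing $\varphi$ never worsens the objective and only relaxes $\varphi\tau\ge L/B$; hence some optimizer attains $\varphi=R_c(t)$, for which $L/(B\varphi)=t_{send}$, the choice $\tau=t_{send}$ is feasible, and every cone becomes active, reproducing (\ref{eq:reformed_lambda_Pt_subproblem}e), (\ref{eq:reformed_lambda_Pt_subproblem}f), (\ref{eq:original_problem}c) exactly. Since $\tau,Z_1,Z_2,Z_3$ never appear in the objective, the two formulations share the same optimal value and the same minimizer in $(\lambda,Q,P_t(t),\varphi)$. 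I would also record the mild regularity conditions $\varphi>0$, $\tau>0$, $d_{coe}>0$ (guaranteed by $R_c(t)>0$ and positivity of the encoding-complexity coefficient) under which the cone memberships are well posed.
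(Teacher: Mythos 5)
Your proposal is correct and follows essentially the same route as the paper's Appendix D: set $Z_1=1+P_t(t)/(L_{AtG}(t)P_n)$ to turn $R_c(t)\ge\varphi$ into the exponential cone $(Z_1,1,\varphi\ln 2)\in K_{\exp}$, introduce $\tau\ge L/(B\varphi)$ as a rotated cone $(\varphi,\tau,\sqrt{2L/B})\in Q_r^3$, and linearize the delay constraint via $Z_2=2\lambda+1$, $Z_3=(d_{\max}-\tau)/d_{coe}$ into $(1/2,Z_3,Z_2)\in Q_r^3$. Your additional argument for tightness --- that the objective is nonincreasing in $\varphi$ so an optimizer attains $\varphi=R_c(t)$ and $\tau=t_{send}$, which is what actually justifies the word ``equivalently'' --- goes beyond the paper's proof, which performs only the mechanical cone rewritings without addressing why the one-sided slack inequalities do not strictly restrict the feasible set.
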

\begin{proof}\renewcommand{\qedsymbol}{}
Please refer to Appendix D.
\end{proof}

In summary, this paper introduces a series of slack variables $\Omega $, $\delta $, $\zeta $, $\varepsilon $, $\xi $, $\varphi $, and $\tau $ to perform relaxation transformations on the optimization problem, and employs the SCA strategy to perform convex approximations on the non-convex constraints. (\ref{eq:original_lambda_Pt_subproblem}) can then be transformed into the following convex optimization problem
\begin{subequations}\label{eq:final_lambda_Pt_subproblem}
\begin{alignat}{2}
&\mathop {\rm{Minimize}}\limits_{\lambda ,{P_t}(t)} {[X(t)]^ + }\left( {\Omega  - B\varphi } \right) + V(\delta  + {\rho _1}\zeta  + {\rho _2}{P_{tot}}(t))\\
&{\rm s.t.} \ \rm{Constraints \text{ } 
(\ref{eq:original_problem}e), \text{ }
(\ref{eq:transform_convex_Q_subproblem}b), \text{ }
(\ref{eq:reformed_lambda_Pt_subproblem}b)-
(\ref{eq:reformed_lambda_Pt_subproblem}d), \text{ }} and \notag \\
&
\rm{(\ref{eq:transform_data_rate_constraint}a)-
(\ref{eq:transform_data_rate_constraint}d) \text{ } are \text{ } satisfied.}
\end{alignat}

\end{subequations}

The minimum value of the objective function in (\ref{eq:final_lambda_Pt_subproblem}) constitutes an upper bound for (\ref{eq:original_lambda_Pt_subproblem}). Within this problem, the first three and the last four constraints are linear, while the constraints from the fifth to the seventh are exponential cone constraints, and the eighth and ninth constraints are rotated cone constraints. Consequently, (\ref{eq:final_lambda_Pt_subproblem}) is a convex cone optimization problem, the optimal solution of which can be obtained using optimization tools such as MOSEK.

\subsection{Algorithm Design}
Based on the aforementioned theoretical analysis and derivations, we can summarize the main steps of solving (\ref{eq:original_problem}) in Algorithm \ref{alg:alg1}.

\begin{algorithm}
\caption{Lyapunov repeated iteration optimization algorithm, LyaRI.}
\label{alg:alg1}
\begin{algorithmic}[1]
\STATE \textbf{Initialization:} Initialize a positive random value for $X(1)$, set the maximum UAV {total power} ${P_{\max }}$, the maximum tolerable delay ${d_{\max }}$, the maximum number of time slots $T$, and the maximum number of iterations ${r_{\max }}$.
\FOR {each time slot $t = 1, 2, \ldots, T$}
\STATE Observe the virtual queue $X(t)$.
\STATE Initialize ${\lambda ^{(0)}}$, $P_t^{(0)}(t)$ and ${Q^{(0)}}$, and let $r = 0$.
\REPEAT
\STATE Given ${\lambda ^{(r)}}$ and $P_t^{(r)}(t)$, solve (\ref{eq:final_Q_subproblem}) to achieve the solution ${Q^{(r + 1)}}$.
\STATE Given ${Q^{(r + 1)}}$, ${\lambda ^{(r)}}$, and $P_t^{(r)}(t)$, solve (\ref{eq:final_lambda_Pt_subproblem}) to achieve the solution ${\lambda ^{(r + 1)}}$ and $P_t^{(r + 1)}(t)$.
\STATE Update $r = r + 1$
\UNTIL {Convergence or reach the maximum number of iteration $r_{max}$.}
\STATE Using (\ref{eq:virtual_queues}) to update $X(t + 1)$.
\ENDFOR
\end{algorithmic}
\end{algorithm}

At each time slot, the computational complexity of Algorithm \ref{alg:alg1} is primarily contributed by two components: {the optimization of the quantization step sub-problem and the optimization of transmit power and search range sub-problem.} These two sub-problems are transformed into convex problems and solved using an interior method, with their respective computational complexity being $O({N^{3.5}})$ and $O({M^{3.5}})$, where $N$ and $M$ are the dimensions of the decision variables of sub-problems. Additionally, the iterative optimization process of these two sub-problems needs to be considered. In the worst case, the total computational complexity of Algorithm \ref{alg:alg1} is $O({r_{\max }}(O({N^{3.5}}) + O({M^{3.5}})))$, where ${r_{\max }}$ represents the maximum number of iterations.

\begin{lemma}\label{lemma:lemma_algorithm_convergence}
{\rm 
Algorithm \ref{alg:alg1} is convergent, and the introduced virtual queue $X(t)$ is mean-rate stable.}
\end{lemma}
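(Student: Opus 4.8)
The plan is to prove the two assertions separately: the convergence of the inner alternating-minimization loop executed within each time slot, and the mean-rate stability of the virtual queue $X(t)$ across slots. These correspond precisely to the two components of the algorithm, namely the per-slot block-coordinate SCA iterations and the outer drift-plus-penalty control of the backlog.

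For convergence of Algorithm \ref{alg:alg1}, I would show that the REPEAT loop generates a monotonically non-increasing sequence of objective values for the per-slot problem (\ref{eq:Lyapunov_transformed_problem}). The key observation is that the convex surrogates (\ref{eq:final_Q_subproblem}) and (\ref{eq:final_lambda_Pt_subproblem}) are constructed from first-order SCA linearizations, such as (\ref{eq:video_bitrate_sca_Q0}) and the conic constraints in \textbf{Lemma} \ref{lemma:lemma_transform_convex}, that are tight at the current iterate $(\lambda^{(r)},Q^{(r)},P_t^{(r)}(t))$. Consequently the current iterate remains feasible for each surrogate, so its optimal value cannot exceed the objective attained at the previous iterate; solving for $Q$ and then for $(\lambda,P_t(t))$ in turn therefore yields a block-coordinate-descent sequence whose objective is non-increasing. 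Since the penalty term is a sum of distortions and powers that are non-negative and bounded under the physical constraints (\ref{eq:original_problem}c)--(\ref{eq:original_problem}e), and the queue weight $[X(t)]^+$ is fixed within the slot, the objective of (\ref{eq:Lyapunov_transformed_problem}) is bounded below. A monotone non-increasing sequence that is bounded below converges, which settles the first claim.

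For mean-rate stability, I would start from the drift bound of \textbf{Lemma} \ref{lemma:lemma_upper_bound}. Because Algorithm \ref{alg:alg1} minimizes the right-hand side of (\ref{eq:upper_bound_Lyapunov}) at each slot, its drift-plus-penalty is no larger than that of any feasible reference policy. Invoking feasibility of the time-averaged constraint (\ref{eq:original_problem}b), that is, the existence of a policy for which $E\{R_e-BR_c\}\le 0$, the backlog-dependent term is driven non-positive under that reference, so the per-slot Lyapunov drift of Algorithm \ref{alg:alg1} is dominated by a slot-independent constant $C_0=\tfrac{1}{2}(R_c^{\max})^2+V D^{\max}$, where $D^{\max}$ is the finite bound on the penalty. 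Taking expectations and telescoping $\Delta(t)=L(t+1)-L(t)$ over $t=1,\dots,T-1$ gives $E\{L(T)\}\le L(1)+C_0(T-1)$. Using $L(T)=\tfrac{1}{2}([X(T)]^+)^2$ together with Jensen's inequality yields $E\{[X(T)]^+\}\le\sqrt{2L(1)+2C_0(T-1)}$, so $E\{[X(T)]^+\}/T\to 0$ as $T\to\infty$, which is exactly the mean-rate stability condition (\ref{eq:stability_conditions}).

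The main obstacle I anticipate lies in the stability part: showing that the per-slot drift is bounded by a constant independent of $t$. The raw bound from \textbf{Lemma} \ref{lemma:lemma_upper_bound} still contains the product $[X(t)]^+\bigl(R_e-BR_c\bigr)$, which is not obviously bounded as the backlog grows, and the trivial bound $|X(t+1)|\le|X(t)|+G$ from bounded rates only yields linear growth, not the required sublinear growth. Closing this gap demands the comparison against a slack-feasible reference policy so that the backlog-dependent term is actively controlled; establishing, or explicitly assuming, this feasibility of (\ref{eq:original_problem}b) is therefore the crux. For the convergence part, the only subtlety is verifying that each SCA surrogate is a genuinely tight, objective-non-increasing restriction at the linearization point, after which the monotonicity and boundedness-below arguments are routine.
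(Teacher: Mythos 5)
Your proposal is correct, and for the convergence half it follows the paper's own route: Appendix E argues, exactly as you do, that solving (\ref{eq:final_Q_subproblem}) and then (\ref{eq:final_lambda_Pt_subproblem}) yields a non-increasing, lower-bounded sequence of objective values $\Psi(Q^{(r)},P_t^{(r)}(t),\lambda^{(r)})$, although the paper simply asserts the two descent inequalities rather than justifying them through tightness of the SCA surrogates at the current iterate, as you take care to do. The two arguments diverge on the stability half. The paper's treatment is essentially a two-sentence assertion: it cites \textbf{Lemma} \ref{lemma:lemma_upper_bound} to say the drift-plus-penalty is upper-bounded at each slot and immediately concludes that the time average of $X(t)$ vanishes. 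Your version supplies the steps this skips — comparison against a slack-feasible reference policy to neutralize the backlog-weighted term $[X(t)]^+\left(R_e-BR_c\right)$, telescoping the drift over slots, and Jensen's inequality to pass from $E\{L(T)\}=O(T)$ to $E\{[X(T)]^+\}=O(\sqrt{T})$ — and you are right that the bound in (\ref{eq:upper_bound_Lyapunov}) is not by itself a constant bound on the drift, precisely because it retains the queue-dependent product. The Slater-type feasibility of (\ref{eq:original_problem}b) that your argument needs is nowhere stated or verified in the paper, so the obstacle you flag is not hypothetical: as written, the paper's stability claim rests on exactly that unstated assumption, and your proposal is the more complete of the two arguments.
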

\begin{proof}\renewcommand{\qedsymbol}{}
Please refer to Appendix E.
\end{proof}

\section{Experiment and Result Analysis}
\subsection{Experimental Environment and Parameter Setting}
In this section, we will validate the effectiveness of the proposed algorithm. {The experiments utilize the latest version of official standard test model of HEVC (HM), namely HM 16.20.} The experiments select standard video sequences City (CIF, $352 \times 288$) and Coastguard (CIF, $352 \times 288$) for testing. The City sequence captures stationary targets by moving the camera position, while the Coastguard sequence tracks and shoots moving targets with a fixed camera. Both test sequences can effectively simulate typical UAV video surveillance scenarios.

In this paper, obtaining the regression coefficients through experimentation is a critical step. Initially, the {transformed} residual data of video sequences is acquired. The experiments utilize the low-latency configuration file encoder\_lowdelay\_P\_main.cfg of HM, with search range $\lambda  \in \{ 1,4,8,16,32\} $ and quantization parameter $QP \in \{ 18,24,30,36,42\} $. The encoding is performed on the basis of different $\lambda $ and $QP$.
During the encoding process, the transformed residual values of each prediction unit in P frames are written to files. A total of $25$ transformed residual data files can be obtained, corresponding to each $(\lambda ,QP)$ pair. Subsequently, for each transformed residual data file, the standard deviation of residuals for all prediction units is calculated on a per-frame basis.
To better capture video characteristics and mitigate the adverse effects on the experimental results due to fluctuations in inter-frame {transformed residual} standard deviations, the average of the transformed residual standard deviations for the first $10$ frames of the video sequence is calculated. Through the above calculations, experimental values of the residual standard deviation corresponding to different combinations of $\lambda $ and $QP$ can be obtained. Finally, based on the previously calculated residual standard deviations, nonlinear regression analysis is performed to determine values of coefficients such as ${a_1}$, ${a_2}$,  ${a_3}$, and ${a_4}$ in (\ref{eq:sigma_model}), thereby obtaining the closed-form expression for $\sigma (\lambda ,Q)$.

The configuration of other parameters for the experiments is as follows: $V = 4$, ${\rho _1} = 2$, ${\rho _2} = 0.01$, $C = 0.0015$, $K = 0.55$, $T = 40$, $\mu  = 0.1$, maximum total power consumption ${P_{\max }} = 2000$ mW, data block length $L = 1$ Mb, encoding power-related constant $k=1.3\times 10^{-24}$ mW/Hz, clock frequency ${F_{clk}} = 1000$ MHz, and network bandwidth $B = 10$ MHz. {We let the frame rate be $30$ fps,  the position of the ECV be $\left[ {50 \ \rm{m},50 \ \rm{m}} \right]$, the UAV flight radius be $250$ m, the UAV trajectory center position be $\left[ {250 \ \rm{m},250 \ \rm{m},500 \ \rm{m}} \right]$, and the UAV flying speed be $20$ m/s.} Other parameters related to the AtG channel model can be referred to\cite{DBLP:journals/jsac/YangXGQCC21}.

\subsection{Performance Evaluation}
In this section, we design extensive experiments to validate the performance of LyaRI. These experiments include verification of the stability of the algorithm, 
joint optimization performance verification, 
an assessment of the d-P-R-D model, a comparative analysis of the optimization results of LyaRI, a comparative analysis of the performance of LyaRI and HM rate control (HM RC) algorithm, and a comparative analysis of subjective performance.
\subsubsection{\textbf{Stability of LyaRI}}
\begin{figure}[!t]
\centering
\includegraphics[width=3.2 in, height = 1.6 in]{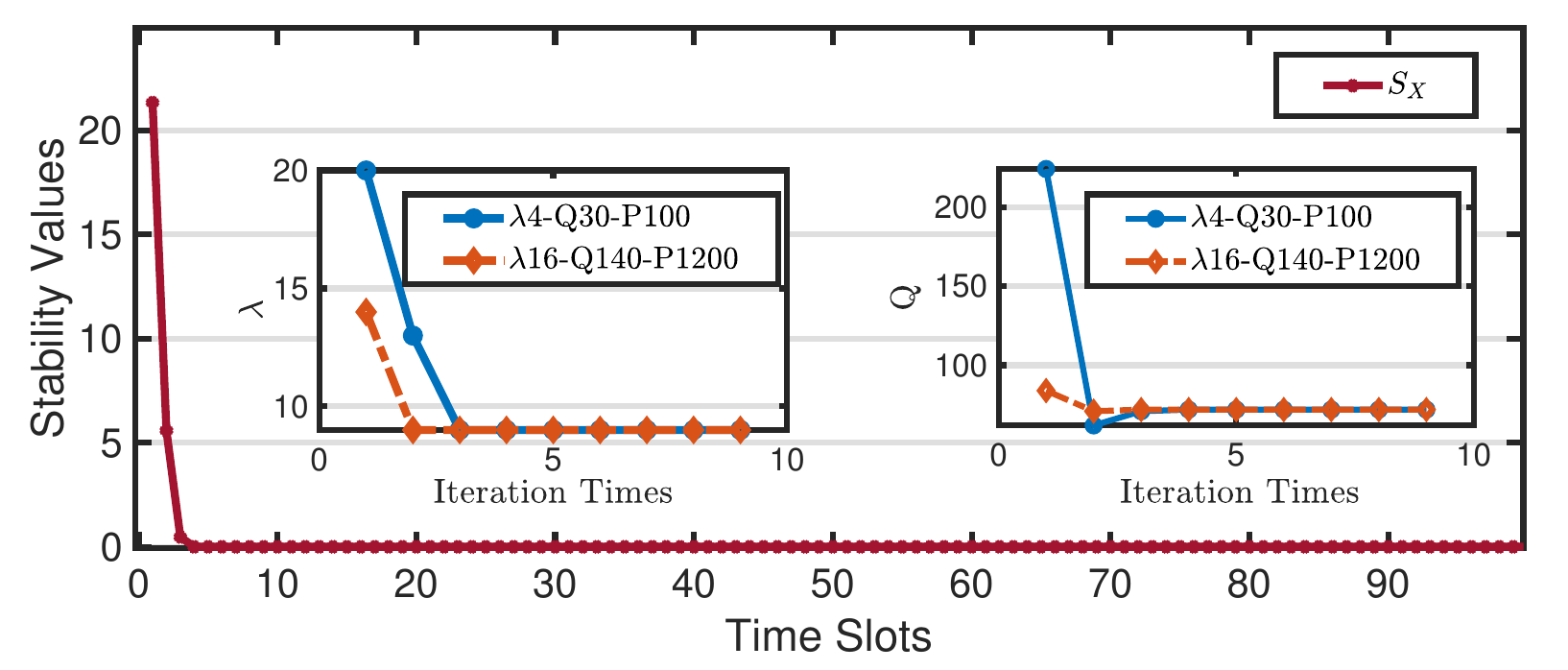}
\caption{Convergence performance of $\lambda $ and $Q$, and the stability of $X(t)$.}
\label{fig_convergence_stability}
\end{figure}
Fig. \ref{fig_convergence_stability} illustrates the convergence behavior of LyaRI on Coastguard sequence in terms of search range $\lambda $ and quantization step $Q$, as well as the stability of virtual queue $X(t)$. Here, we set ${d_{\max }} = 3$s, and set two combinations of different initial values to be ${\left( {\lambda,Q,{P_t}} \right)_{\min }} = (4,30,100 \ \rm{mW})$ and ${\left( {\lambda ,Q,{P_t}} \right)_{\max }} = (16,140,1200 \ \rm{mW})$, respectively. Experimental results demonstrate that both $\lambda $ and $Q$ can rapidly converge to their optimal values after a finite number of iterations. For different initial value combinations of $\lambda $, $Q$ and ${P_t}(t)$, although the number of iterations varies, the parameter ultimately converges to the same optimal value.

Additionally, we evaluate the stability of LyaRI, exactly the stability of the introduced virtual queue, which is defined as ${S_X} = {{\max {{[X\left( t \right)]}^ + }} \mathord{\left/
 {\vphantom {{\max {{[X\left( t \right)]}^ + }} t}} \right.
 \kern-\nulldelimiterspace} t}$. As shown in Fig. \ref{fig_convergence_stability}, the trend of virtual queue stability is plotted. It can be observed that the stability value of the virtual queue is bounded throughout the entire period of time and tends to zero as time slot $t$ increases. According to the definition of mean-rate stability, this virtual queue is mean-rate stable, which is consistent with the time-averaged constraint (\ref{eq:stability_conditions}).
\begin{figure}[!t]
\centering
  \subfigure[$\lambda $ and ${P_t}(t)$ vs. time slots]{
    \label{fig:subfig:a} 
    \includegraphics[width=1.8 in, height=1.3 in]{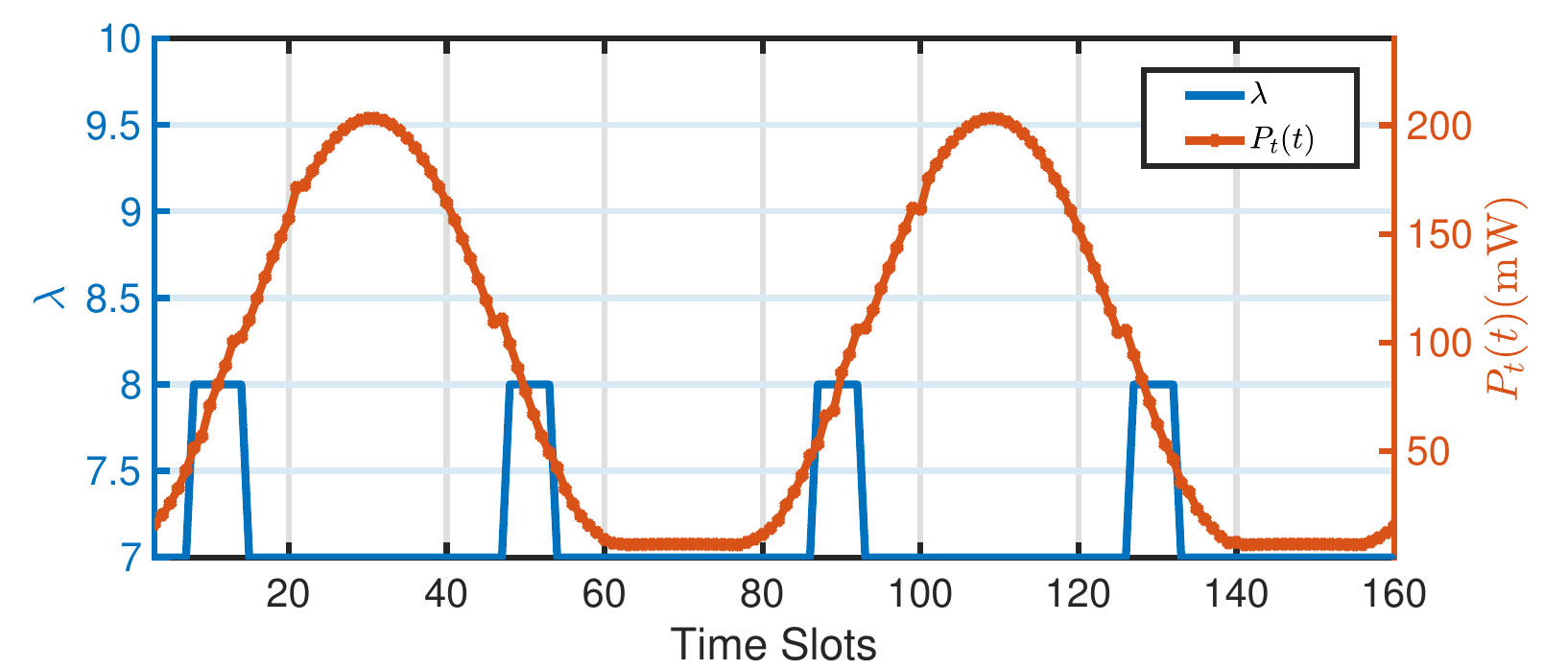}}
    \subfigure[UAV trajectory and key time slots annotations]{
    \label{fig:subfig:b} 
    \includegraphics[width=1.4 in]{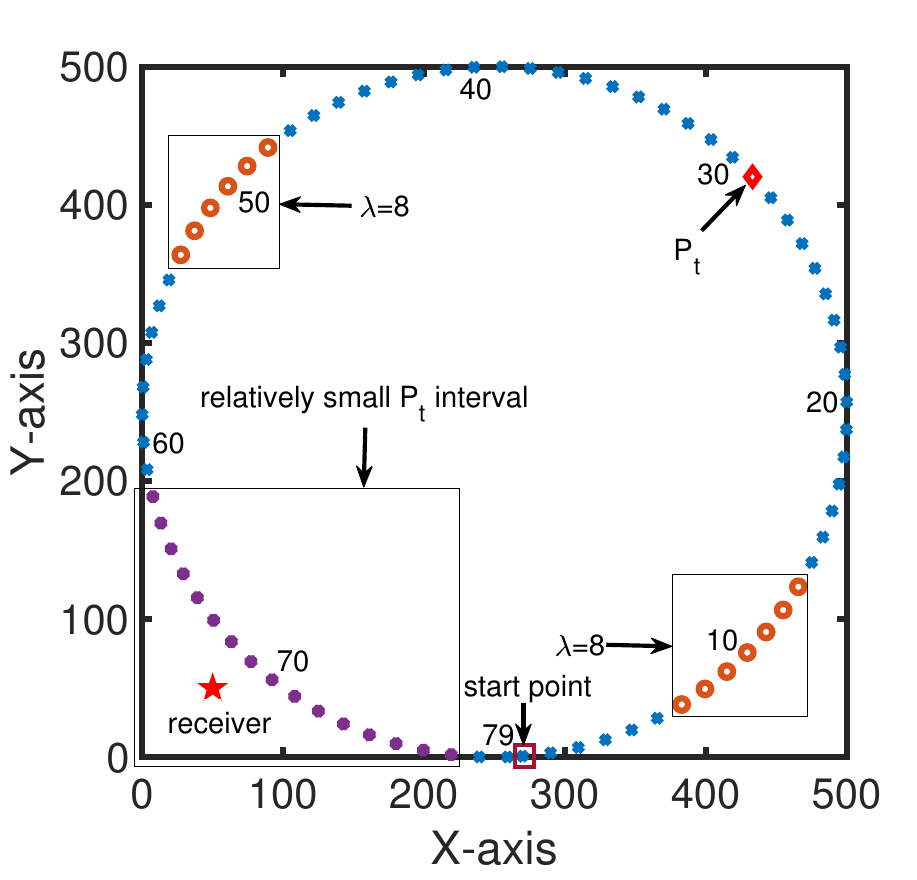}}
    \caption{$\lambda $ and ${P_t}(t)$ joint optimization performance analysis.}
\label{fig_joint_performance}
\end{figure}
\subsubsection{\textbf{Joint Optimization Performance Verification}}
Fig. \ref{fig_joint_performance}(a) presents experimental results on City sequence. It demonstrates the variation of $\lambda $ and ${P_t}(t)$ obtained by LyaRI when the UAV flies along a circular trajectory for two laps. Fig.  \ref{fig_joint_performance}(b) provides detailed annotations of the UAV's position, as well as $\lambda $ and ${P_t}(t)$ at regular and critical time slots during the UAV's first lap of flight.

From Fig. \ref{fig_joint_performance}, we can observe that: Firstly, $\lambda $ and ${P_t}(t)$ exhibit periodic changes and show an approximately symmetrical characteristic. This is because the distance between the UAV and the ECV changes periodically when the UAV completes a full circular trajectory flight.

Secondly, ${P_t}(t)$ is adjusted with the change in distance between the UAV and the ECV. To maintain the stability of data rate, ${P_t}(t)$ increases correspondingly when the distance between the UAV and the ECV increases, as shown in Fig. \ref{fig_joint_performance}(b). LyaRI achieves the maximum value of ${P_t}(t)$ at time slot $29$. Conversely, when the distance decreases, the transmit power ${P_t}(t)$ decreases accordingly. During the time slot intervals $\left[ {{\rm{62,77}}} \right]$ and $\left[ {{\rm{140,155}}} \right]$, as marked in Fig. \ref{fig_joint_performance}(a), ${P_t}(t)$ has relatively small values. It reflects the situation that the UAV is relatively close to the ECV.

Thirdly, $\lambda $ has a central value of $7$ after rounding, mainly determined by ${d_{\max }}$ and characteristics of video sequences. As marked in Fig. \ref{fig_joint_performance}(b), in the time slot intervals $\left[ {{\rm{7,13}}} \right]$ and $\left[ {{\rm{48,53}}} \right]$, $\lambda $ takes a value of $8$. It demonstrates an approximately symmetrical feature. Analysis reveals that these two time slot intervals correspond to the UAV-ECV distance changes most rapidly and also counterpart the great slope of the transmit power curve. Further, within these intervals, ${P_t}(t)$ increases more rapidly relative to the increase in distance, leading to an increase in data rate, and thus a decrease in transmission delay. While meeting the delay constraint, the available video coding delay increases, and LyaRI obtains a larger $\lambda $ to achieve better video coding quality.

In summary, under the changing UAV channel conditions, through the joint optimization of encoding and transmission parameters, the performance of video coding and transmission can be effectively improved. It demonstrates the value and significance of the joint source-channel optimization mechanism.

\begin{figure}[!t]
\centering
  \subfigure[Coastguard sequence]{
    \label{fig:subfig:a} 
    \includegraphics[width=2.4 in, height=1.4 in]{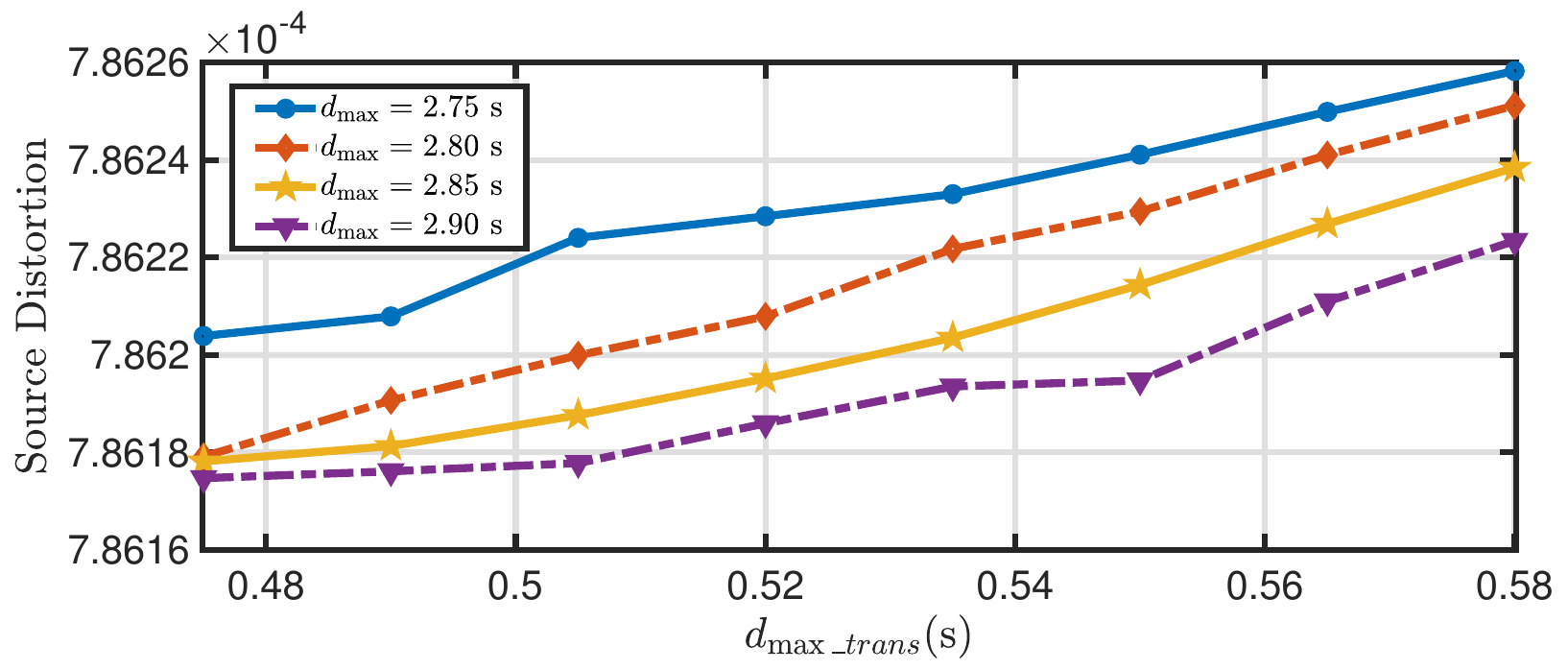}}
  \hspace{9pt}
    \subfigure[City sequence]{
    \label{fig:subfig:b} 
    \includegraphics[width=2.4 in, height=1.4 in]{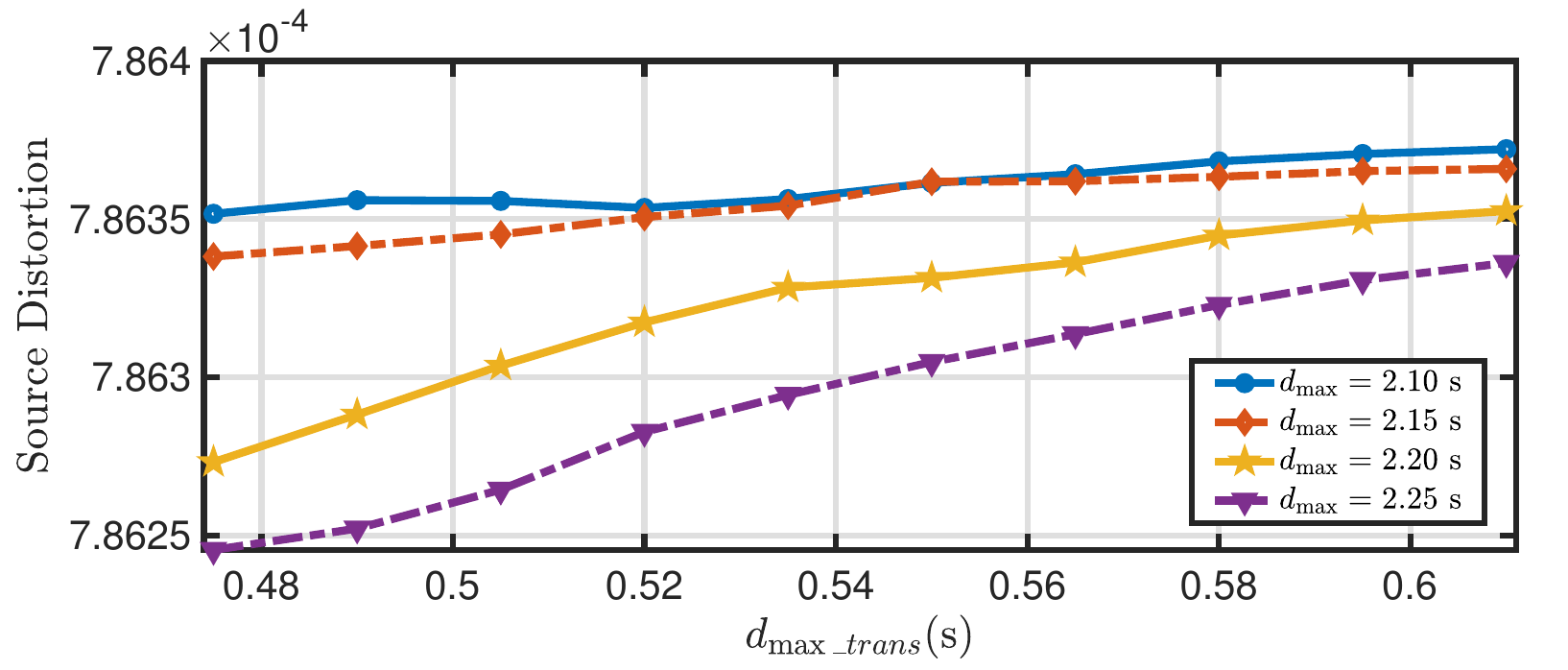}}
    \caption{Source distortion vs. ${d_{\max \_trans}}$ curves with different ${d_{\max }}$.}
\label{fig_model_analysis}
\end{figure}

\subsubsection{\textbf{d-P-R-D Model Analysis}}
(\ref{eq:original_problem}c) shows that the tolerable transmission delay ${d_{\max \_trans}}$ is inversely proportional to the UAV data rate. Therefore, this experiment adopts ${d_{\max \_trans}}$ to map the data rate. To verify the interrelationships among the elements in d-P-R-D model under the condition of a given maximum total power ${P_{\max }}$, this experiment obtains the source distortion (SD) curves with respect to ${d_{\max \_trans}}$ for different maximum delays ${d_{\max }}$.
\begin{figure*}[!t]
\centering
  \subfigure[Trends in Y-PSNR, varying QP and $\lambda$]{
    \label{fig:subfig:a} 
    \includegraphics[width= 2.0 in, height = 1.1 in]{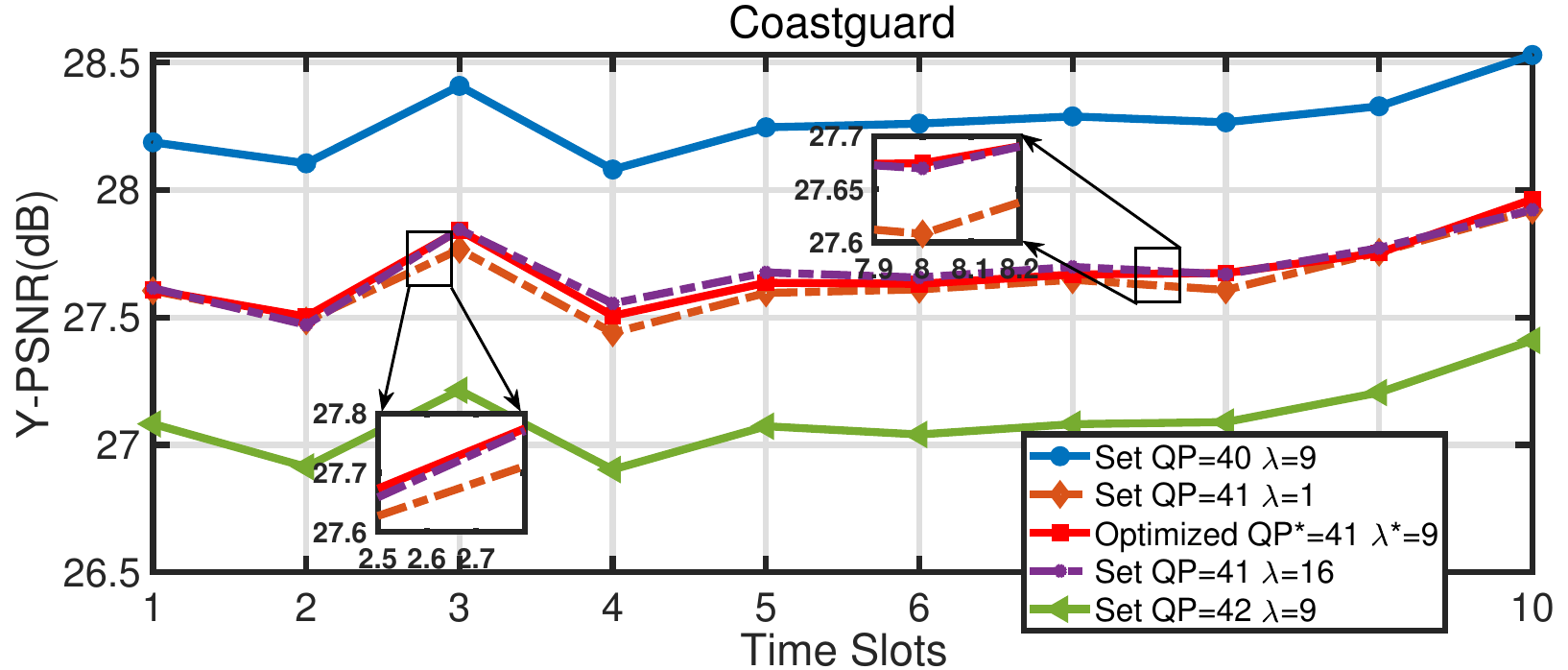}}
  \hspace{1pt}
    \subfigure[Trends in bitrate, varying QP and $\lambda$]{
    \label{fig:subfig:b} 
    \includegraphics[width= 2.0 in, height = 1.1 in]{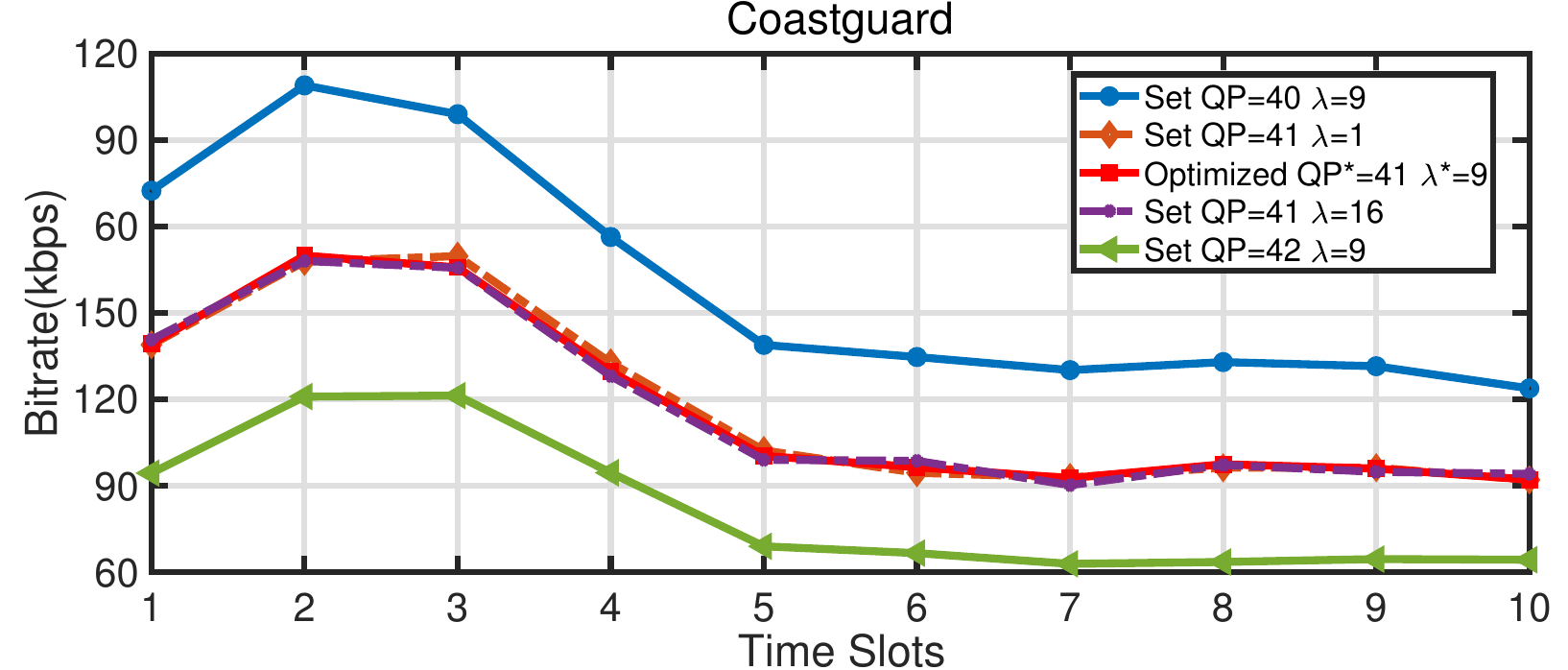}}
    \hspace{1pt}
    \subfigure[Trends in encode time, varying QP and $\lambda$]{
    \label{fig:subfig:c} 
    \includegraphics[width= 2.0 in, height = 1.1 in]{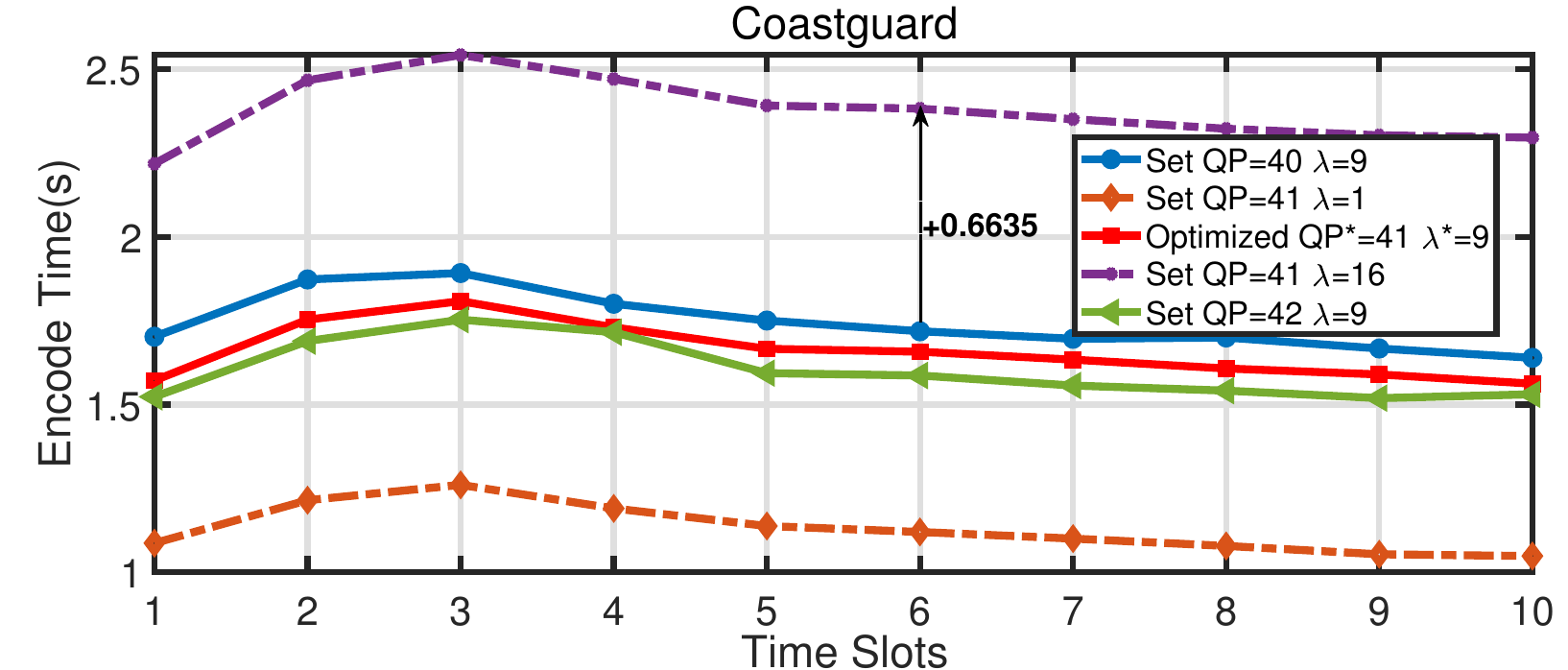}}
    \hspace{1pt}
    \subfigure[Trends in Y-PSNR, {varying QP and $\lambda$}]{
    \label{fig:subfig:d} 
    \includegraphics[width= 2.0 in, height = 1.1 in]{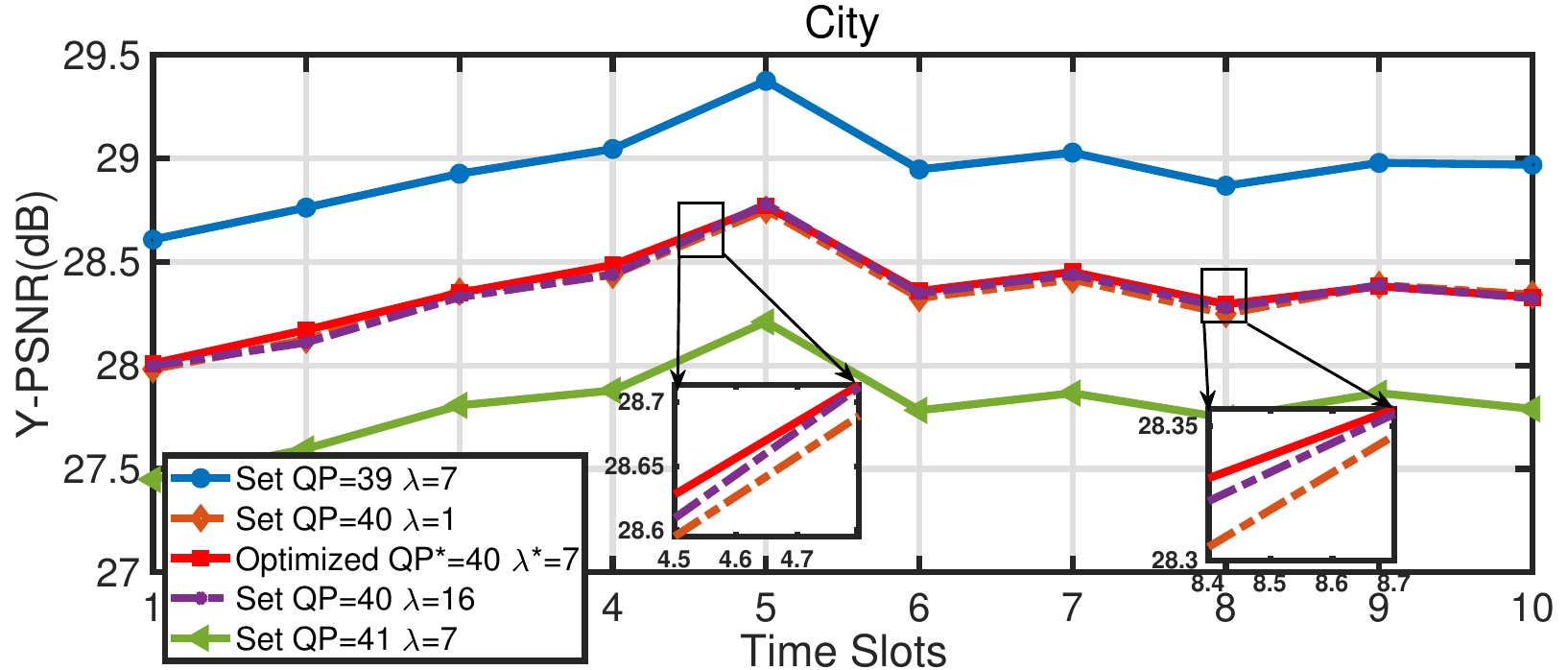}}
    \hspace{1pt}
    \subfigure[Trends in bitrate, varying QP and $\lambda$]{
    \label{fig:subfig:e} 
    \includegraphics[width= 2.0 in, height = 1.1 in]{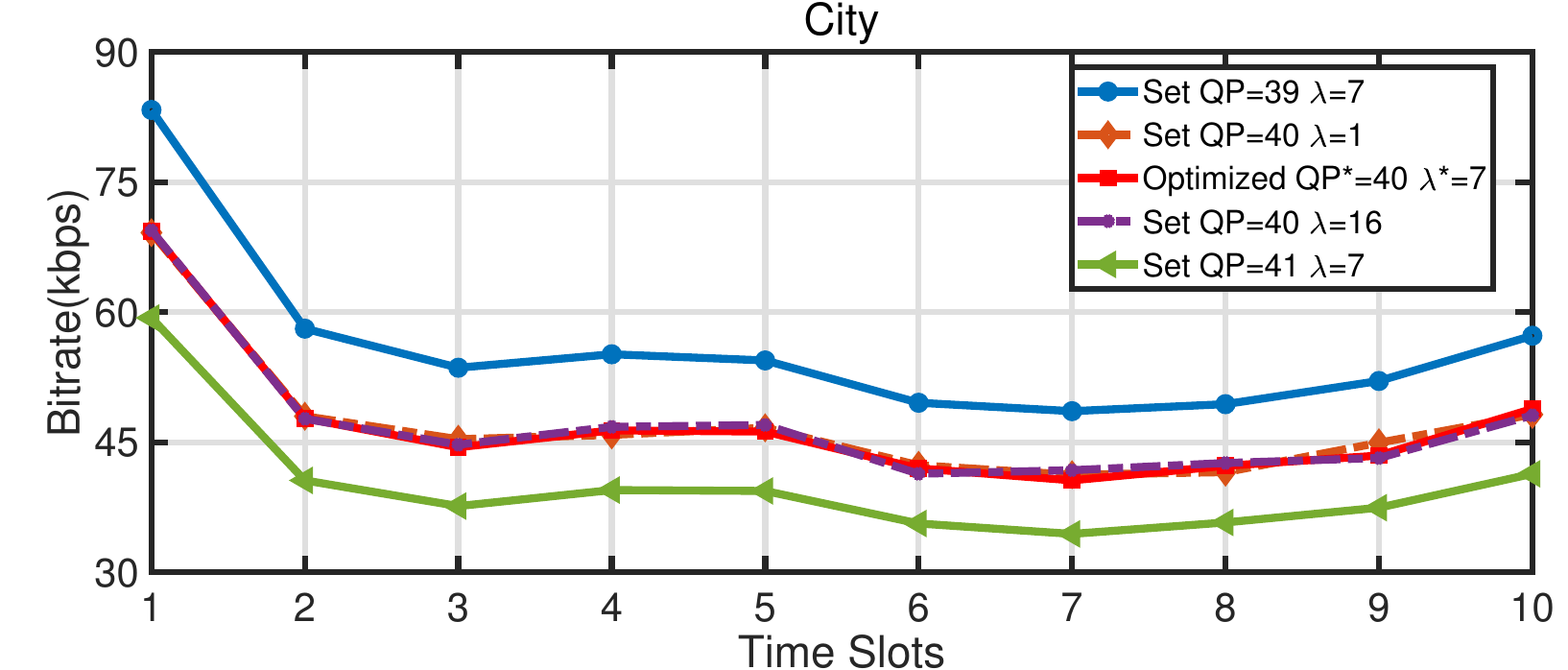}}
    \hspace{1pt}
    \subfigure[Trends in encode time, varying QP and $\lambda$]{
    \label{fig:subfig:f} 
    \includegraphics[width= 2.0 in, height = 1.1 in]{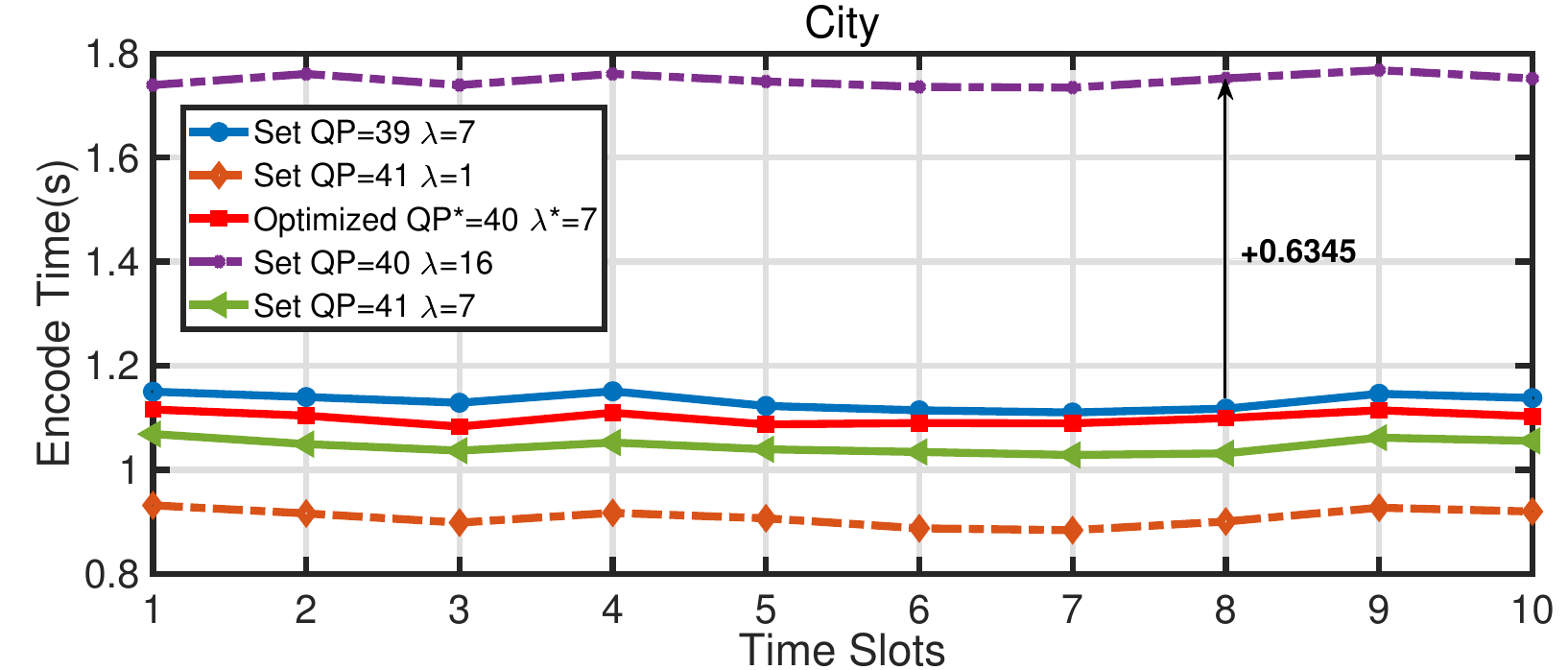}}
    \caption{Comparative analysis of encoding performance of the Coastguard and City sequences using LyaRI for the first $10$ time slots.}
\label{fig_performance_analysis_LyaRI}
\end{figure*}
For Coastguard sequence, we set ${d_{\max }} \in \left\{ {2.75,2.80,2.85,2.90} \right\}$ s. We choose a value for ${d_{\max }}$, vary the value of ${d_{\max \_trans}}$, and obtain the corresponding value of $SD$, thus acquiring the $SD - {d_{\max \_trans}}$ curve corresponding to ${d_{\max }}$. Fig. \ref{fig_model_analysis}(a) illustrates the $SD - {d_{\max \_trans}}$ curves for Coastguard sequence under different ${d_{\max }}$ settings. The observation results show that, for the same ${d_{\max }}$, the larger the ${d_{\max \_trans}}$, the greater the $SD$. This is because a larger ${d_{\max \_trans}}$ implies a relatively smaller data rate ${R_c}(t)$. According to (\ref{eq:original_problem}b), the source coding rate ${R_e}(\lambda ,Q;t)$ is also relatively smaller, thus leading to a larger $SD$. Besides, it can be observed that for the same ${d_{\max \_trans}}$, the larger the ${d_{\max }}$, the smaller the $SD$. This is because, for the same ${d_{\max \_trans}}$, a larger ${d_{\max }}$ implies a greater source encoding delay, and {thus} a smaller $SD$. Experimental results are consistent with the theoretical analysis of the video coding d-P-R-D model. It validates the effectiveness of the designed model.

For City sequence, we set ${d_{\max }} \in \left\{ {2.10,2.15,2.20,2.25} \right\}$ s. Fig. \ref{fig_model_analysis}(b) demonstrates that the trend of City sequence's curve is consistent with that of Coastguard sequence; hence, no detailed analysis is presented here.
\begin{figure*}[!t]
\centering
  \subfigure[Trends in Y-PSNR, varying initial QP]{
    \label{fig:subfig:a} 
    \includegraphics[width= 2.0 in, height = 1.1 in]{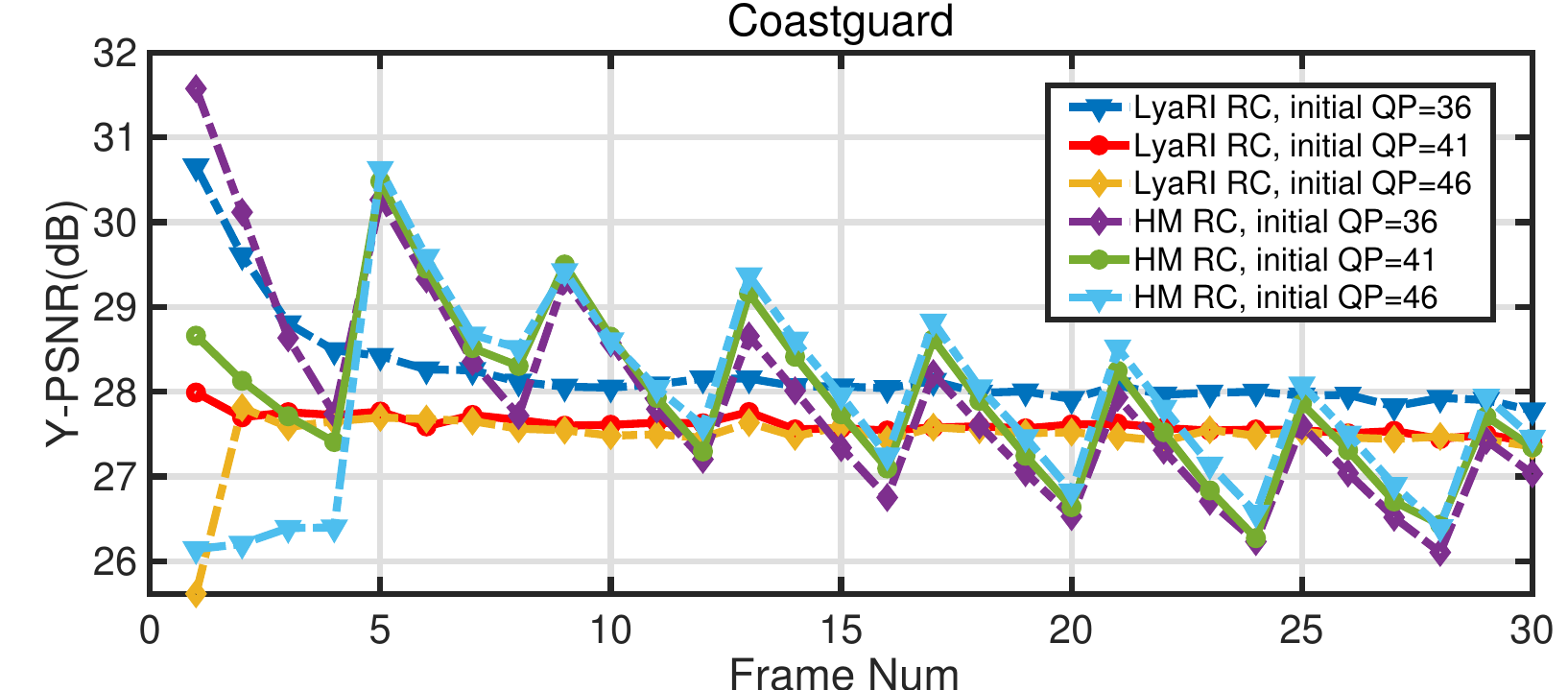}}
  \hspace{1pt}
    \subfigure[Trends in bit, varying initial QP]{
    \label{fig:subfig:b} 
    \includegraphics[width= 2.0 in, height = 1.1 in]{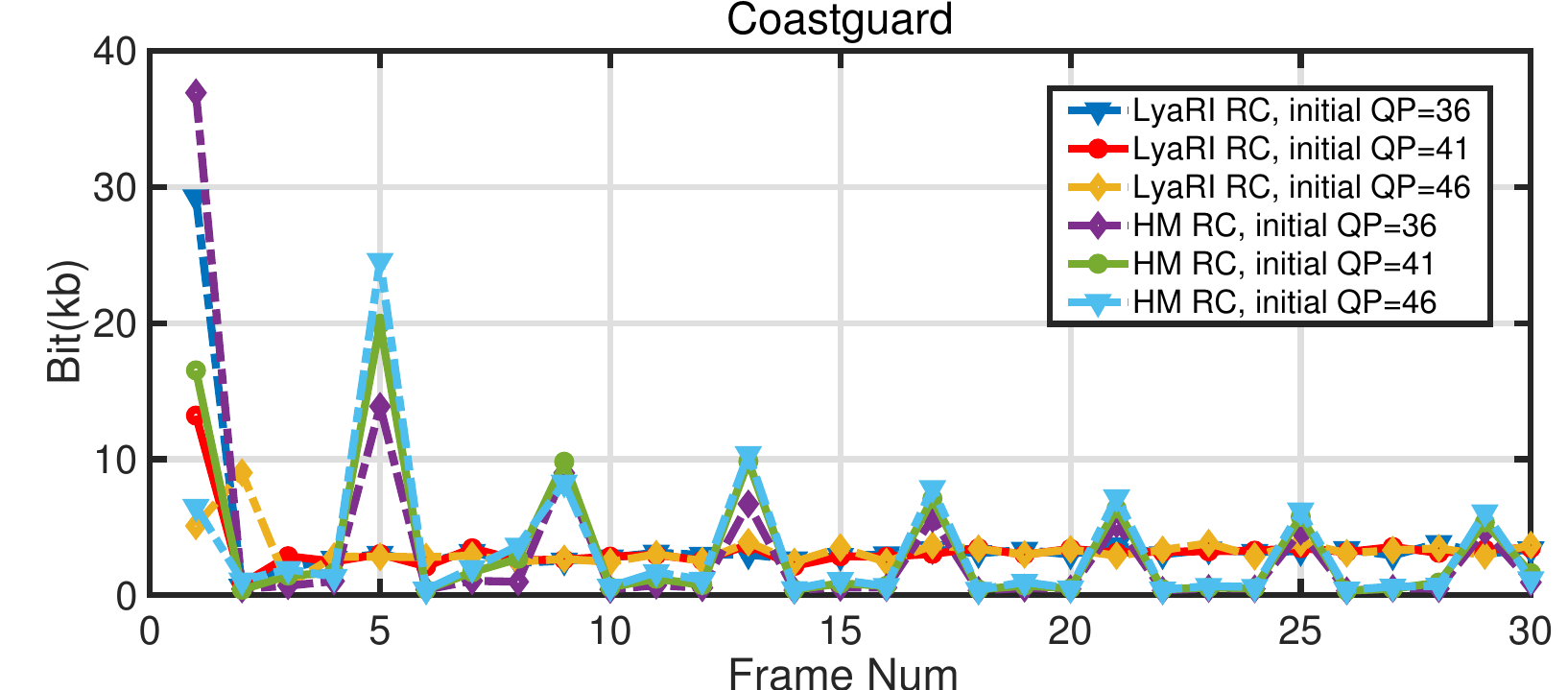}}
    \hspace{1pt}
    \subfigure[Trends in encode time, varying initial QP]{
    \label{fig:subfig:c} 
    \includegraphics[width= 2.0 in, height = 1.1 in]{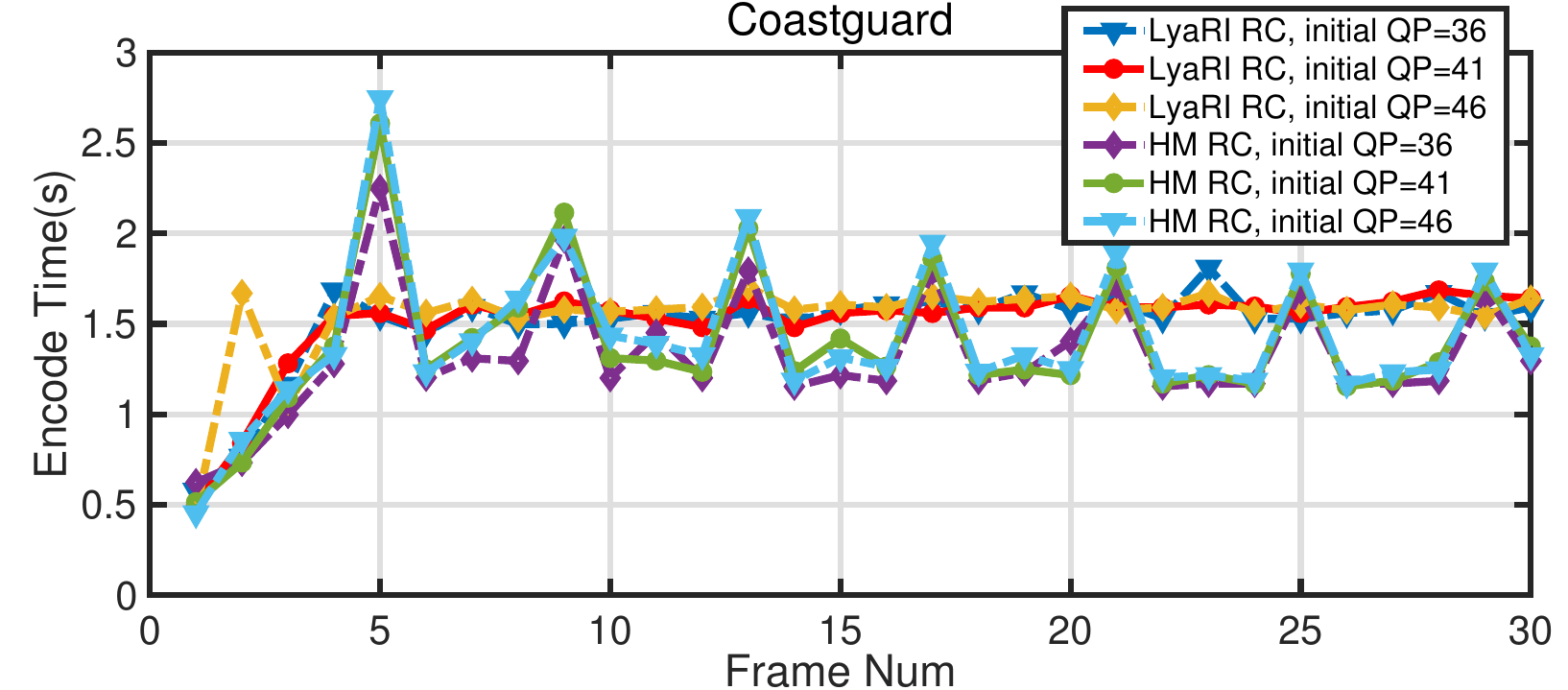}}
    \hspace{1pt}
    \subfigure[Trends in Y-PSNR{, varying initial QP}]{
    \label{fig:subfig:d} 
    \includegraphics[width= 2.0 in, height = 1.1 in]{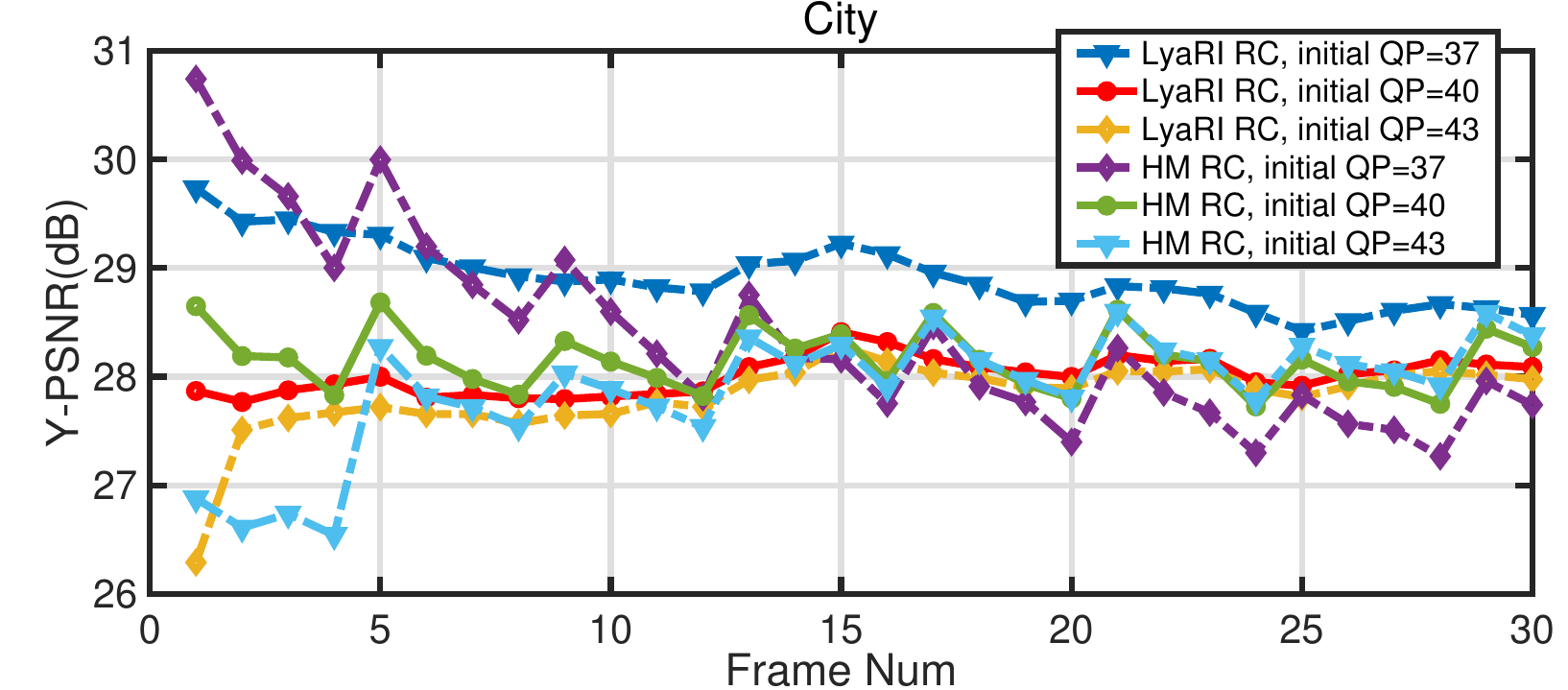}}
    \hspace{1pt}
    \subfigure[Trends in bit, varying initial QP]{
    \label{fig:subfig:e} 
    \includegraphics[width= 2.0 in, height = 1.1 in]{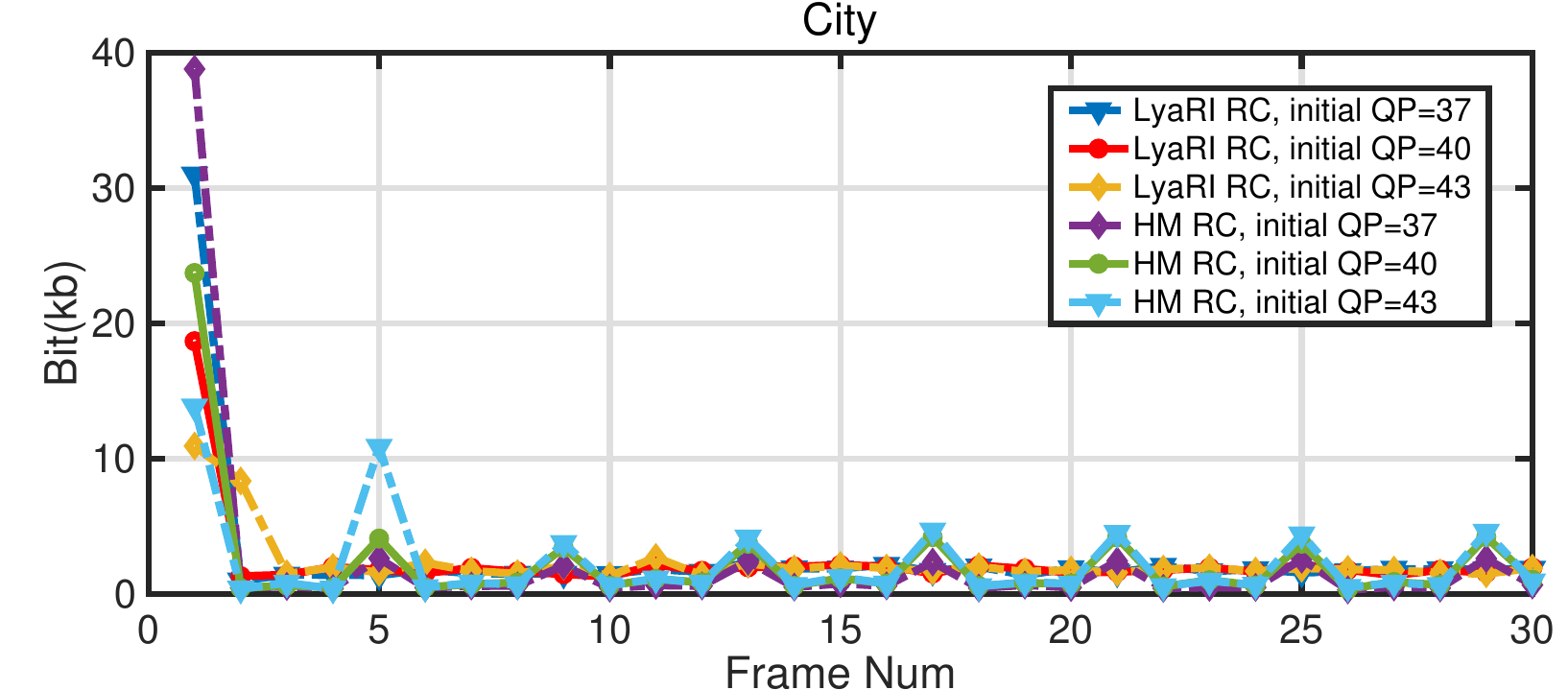}}
    \hspace{1pt}
    \subfigure[Trends in encode time, varying initial QP]{
    \label{fig:subfig:f} 
    \includegraphics[width= 2.0 in, height = 1.1 in]{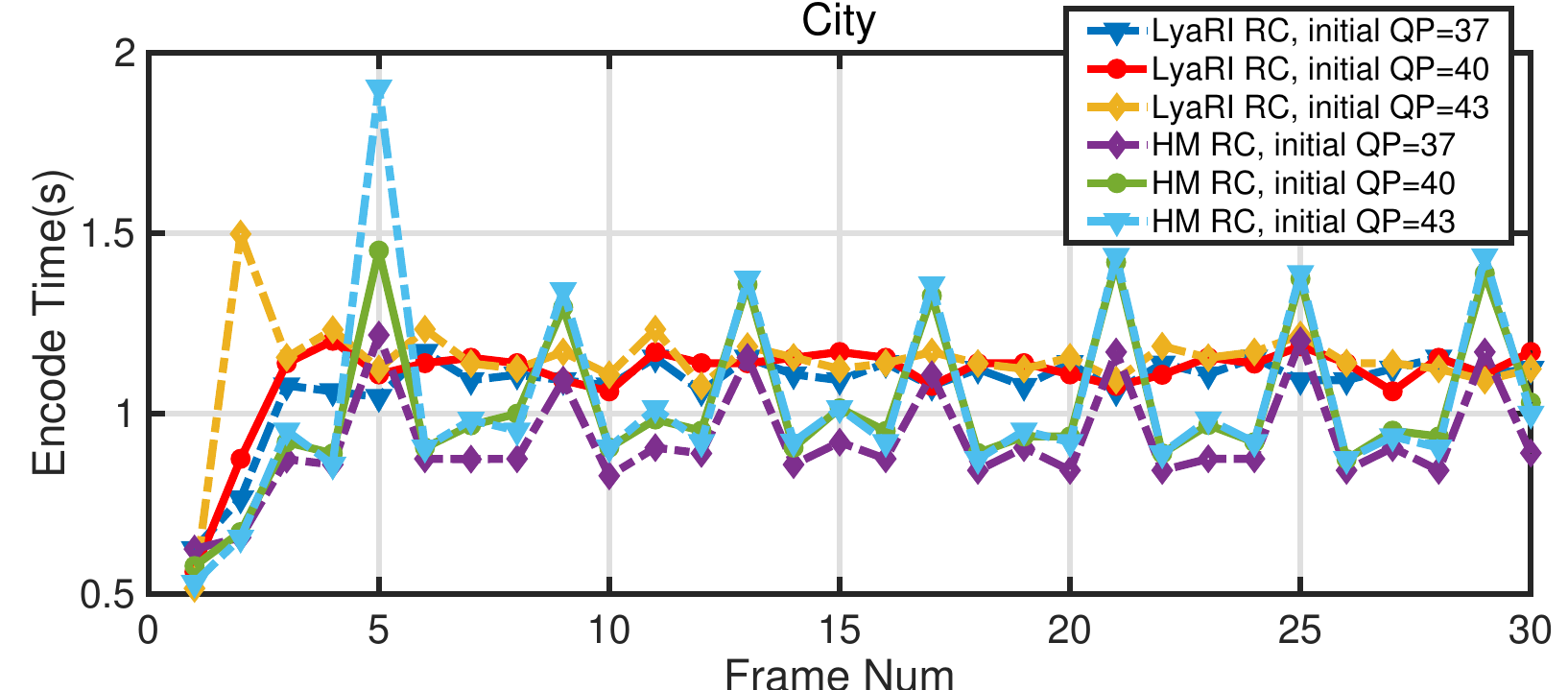}}
    \caption{Comparative analysis of encoding performance of the Coastguard and City sequences between LyaRI and the HM RC for the first 30 P-frames.}
\label{fig_comparative_LyaRI_HMRC}
\end{figure*}

\subsubsection{\textbf{{Comparative Analysis of LyaRI Optimization Results}}}
This experiment is primarily conducted to validate the encoding performance of LyaRI. Initially, the optimized video coding parameter pair $({\lambda ^ * },Q{P^*})$ is obtained by LyaRI. Subsequently, based on $({\lambda ^ * },Q{P^*})$, new encoding parameter pairs are configured by incrementing or decrementing their values. Thereafter, a comparative verification is conducted within HM by assessing Y-PSNR, encoding bitrate, and encoding time for each parameter pair during the first $10$ time slots.

For Coastguard sequence, we set $d_{\max}  = 3$ s, and the optimized parameter pair obtained by LyaRI is $({\lambda ^ * },Q{P^*}) = (9,41)$. With the optimized value ${\lambda ^ * } = 9$ being fixed, we change the value of $QP$. For instance, setting $(\lambda ,QP) = (9,40)$, it can be observed from Fig. \ref{fig_performance_analysis_LyaRI} that Y-PSNR is higher than the optimized solution. However, this type of parameter configuration violates the bitrate constraint. It indicates that the parameter configuration $(\lambda ,QP) = (9,40)$ is infeasible. On the other hand, when $QP$ is fixed at the optimized value of $41$, and $\lambda $ is increased, such as set $\lambda  = 16$. In this case, Y-PSNR and bitrate performance do not show significant improvement compared to the optimized solution. Nevertheless, the encoding time is much greater than that of the optimized solution.

The search range $\lambda $ significantly affects the complexity of video coding. After obtaining the optimized value of $\lambda $ using LyaRI, it eliminates the need to allocate valuable computational resources to additional ME. This is particularly beneficial in saving encoding time and reducing power consumption. Additionally, setting a lower $\lambda $, such as $\lambda  = 1$, results in a lower Y-PSNR value relative to the optimized solution. For City sequence, we set $d_{\max}  = 2$ s, and the optimized parameter pair obtained by LyaRI is $({\lambda ^ * },Q{P^*}) = (7,40)$. From Fig. \ref{fig_performance_analysis_LyaRI}, it can be observed that the encoding performance trend of City sequence is consistent with that of Coastguard sequence; hence, no further analysis is provided here.

\subsubsection{\textbf{Comparative Analysis with HM RC Algorithm}}
This experiment is designed to further validate the encoding performance of LyaRI by comparing it with HM RC. HM RC does not take into account encoding latency and power consumption. To ensure a fair comparison under constraints of maximum latency and power consumption, the search range $\lambda $ of HM RC is set to be the same as that of LyaRI in this experiment. Additionally, the same initial value of $QP$ is employed by LyaRI and HM RC.

Fig. \ref{fig_comparative_LyaRI_HMRC} presents a performance comparison between LyaRI and HM RC in terms of Y-PSNR, encoding bit, and encoding time for the first $30$ frames. The observations indicate that the performance of HM RC is significantly influenced by the initial value of $QP$. For HM RC, if the initial value of $QP$ is too low, the encoding bit of the initial few frames may substantially exceed the target bit. To meet the target average bit, it is necessary to increase the value of $QP$ for subsequent frames. It results in a relatively lower encoding bit for those frames. In contrast, the encoding bit obtained by LyaRI is less affected by the initial value of $QP$.

Fig. \ref{fig_comparative_LyaRI_HMRC} also reveals that HM RC exhibits significant fluctuations in Y-PSNR, encoding bit, and encoding time. For instance, when performing RC on the Coastguard sequence with $QP = 41$, the obtained variances of Y-PSNR, encoding bit, and encoding time are $0.9121$, $23.5987$, and $0.1605$, respectively. When performing RC on the City sequence with $QP = 40$, the variances of Y-PSNR, encoding bit, and encoding time are $0.0778$, $17.7689$, and $0.0460$, respectively. Although Y-PSNR values of some frames obtained by HM RC are higher than those of LyaRI, they violate the bitrate and latency constraints. In contrast, LyaRI is able to achieve smooth and stable performance in terms of Y-PSNR, encoding bitrate, and encoding time while satisfying constraint conditions. For instance, when performing RC on the Coastguard sequence with $QP = 41$, the variances of Y-PSNR, encoding bit, and encoding time are $0.0126$, $3.6917$, and $0.0579$, respectively. When performing RC on the City sequence with $QP = 40$, the variances of Y-PSNR, encoding bit, and encoding time are $0.0268$, $9.2853$, and $0.0135$, respectively. Further, encoding stability is crucial for a good user experience in video coding. The reasons for the encoding stability achieved by LyaRI are as follows: Firstly, the designed d-P-R-D model is derived based on the first few frames of the current group of pictures (GOP) in video sequences. The adoption of d-P-R-D model in LyaRI allows for the determination of respective encoding parameter pairs $({\lambda ^ * },Q{P^*})$ for different video sequences. Secondly, for each GOP, the encoding parameter pair remains constant, which contributes a lot to the stability of Y-PSNR and encoding bit across different frames. In contrast, HM RC adjusts the quantization parameter frame by frame to meet the target bit requirements. Consequently, Y-PSNR and encoding bit of HM RC fluctuate with the adjustment of the quantization parameter.

\begin{figure*}[htbp]
\centering
  \subfigure[Original: Frame index = 29]{
    \label{fig:subfig:a} 
    \includegraphics[width= 2.24 in, height = 1.3 in]{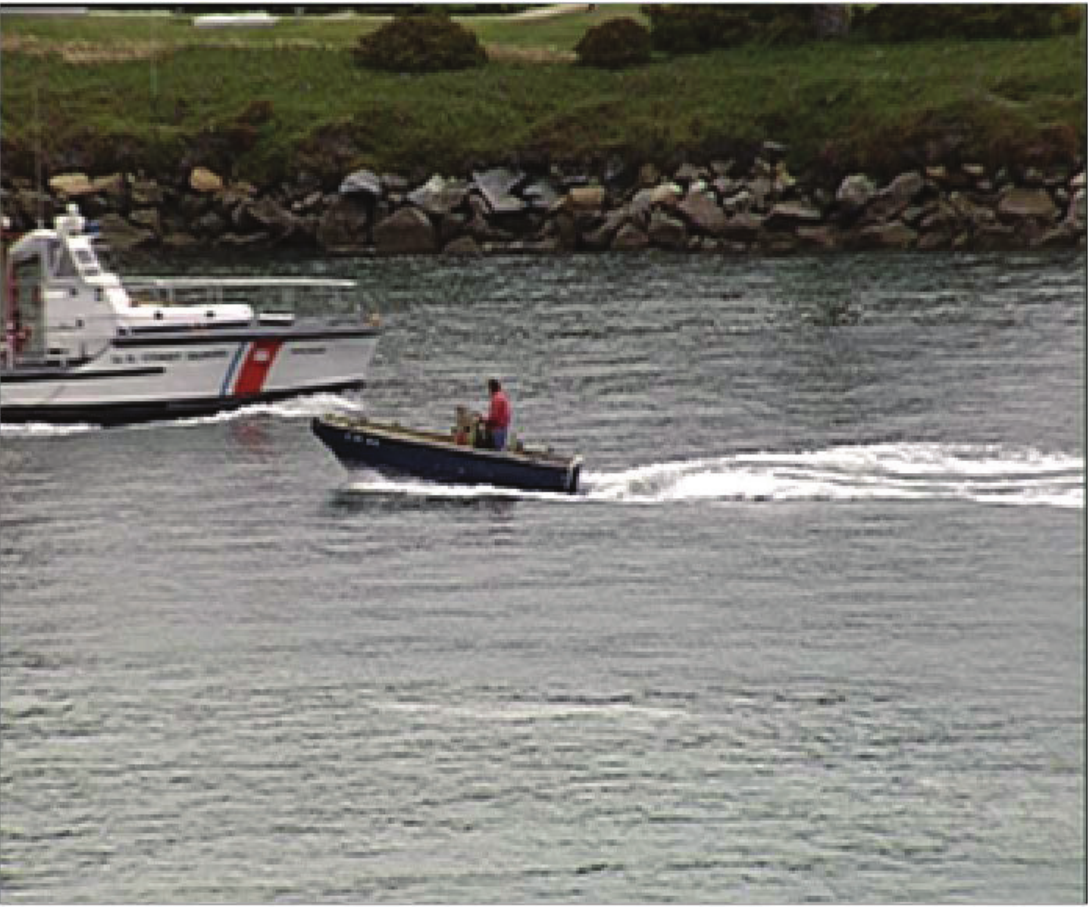}}
  \hspace{1pt}
    \subfigure[Original: Frame index = 30]{
    \label{fig:subfig:b} 
    \includegraphics[width= 2.24 in, height = 1.3 in]{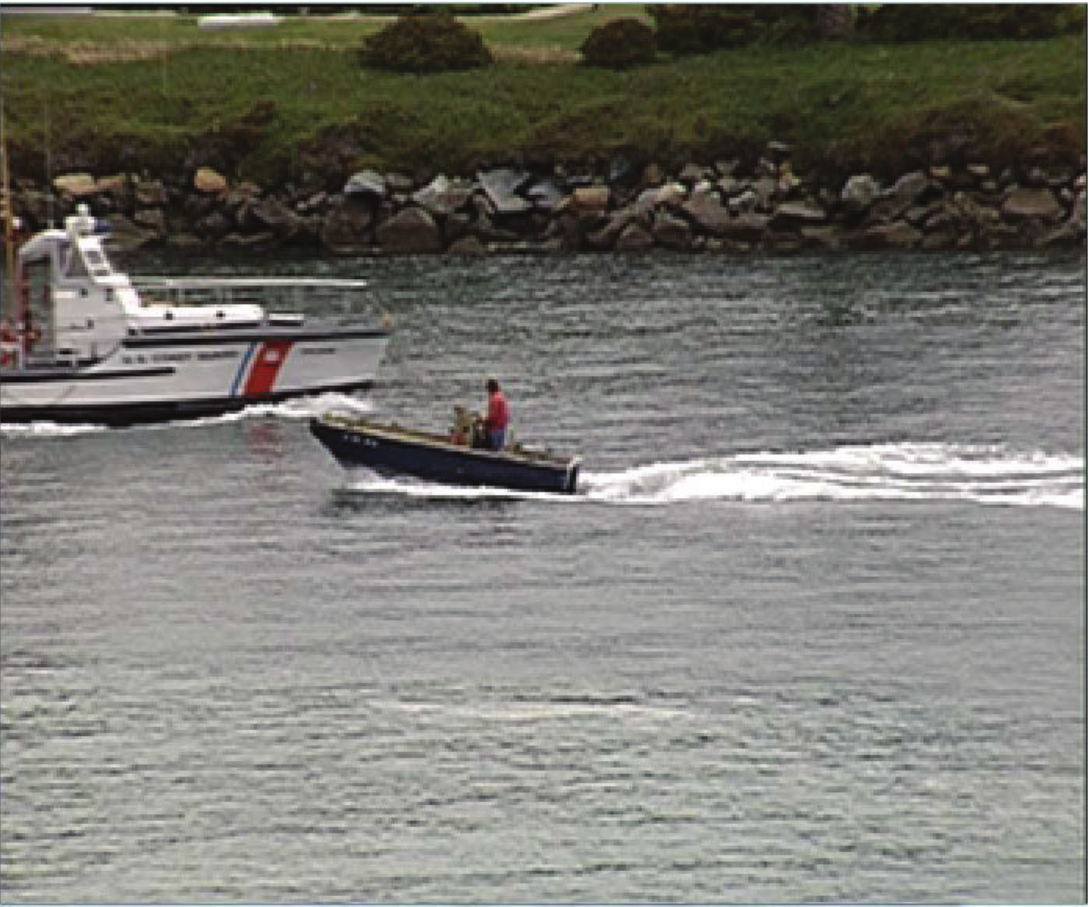}}
    \hspace{1pt}
    \subfigure[Original: Frame index = 48]{
    \label{fig:subfig:c} 
    \includegraphics[width= 2.24 in, height = 1.3 in]{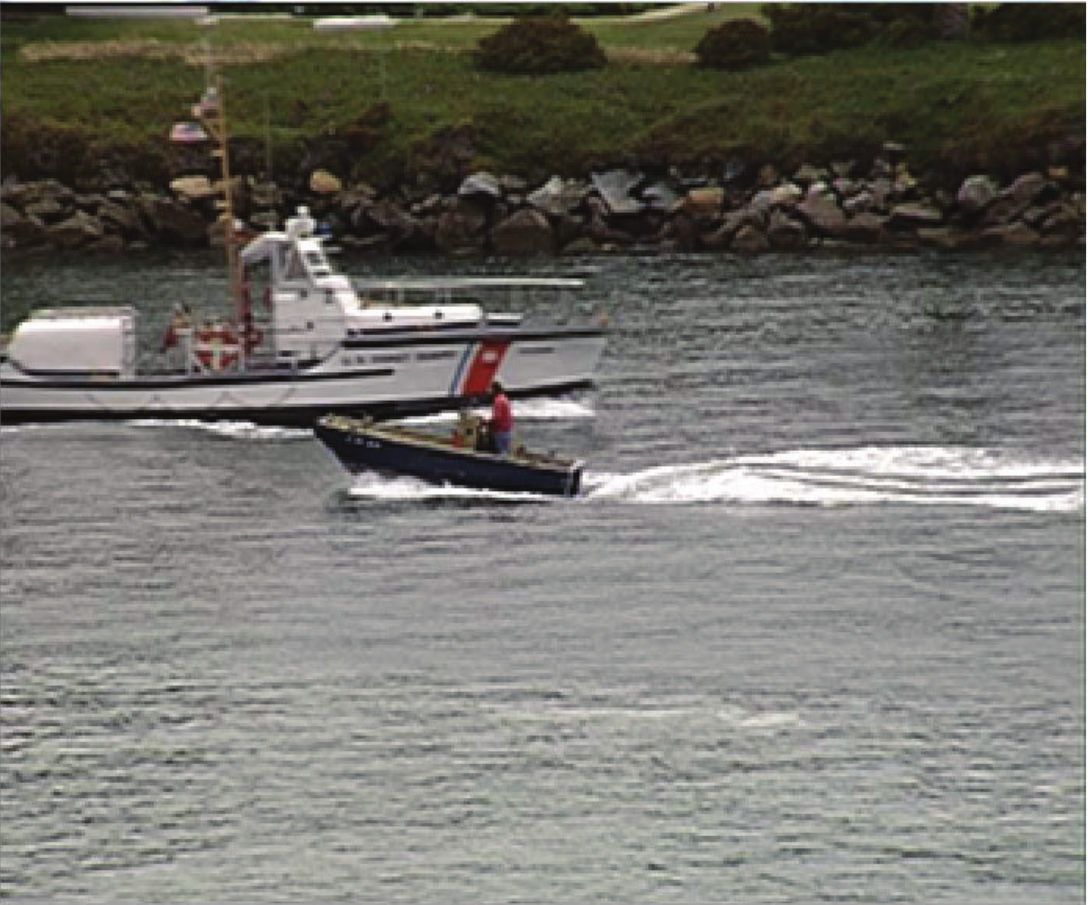}}
    \hspace{1pt}
    \subfigure[LyaRI: Frame index = 29]{
    \label{fig:subfig:d} 
    \includegraphics[width= 2.24 in, height = 1.3 in]{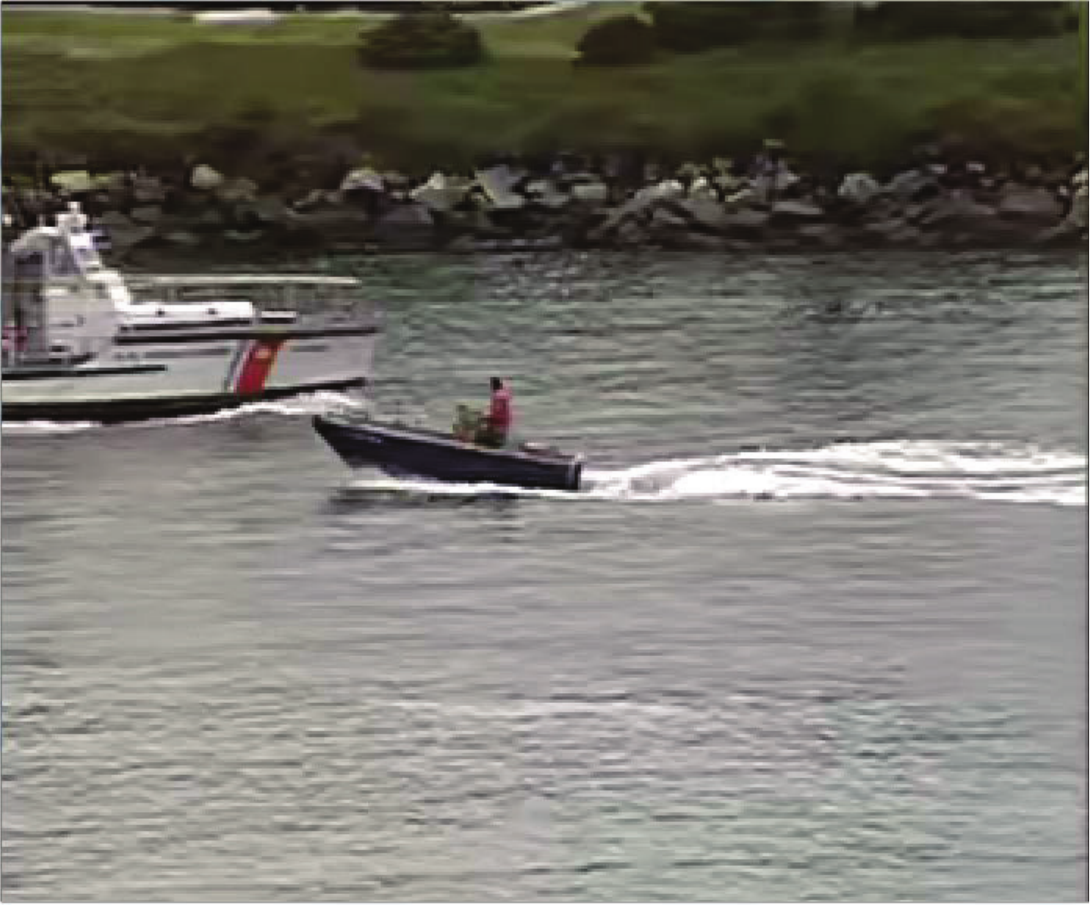}}
    \hspace{1pt}
    \subfigure[LyaRI: Frame index = 30]{
    \label{fig:subfig:e} 
    \includegraphics[width= 2.24 in, height = 1.3 in]{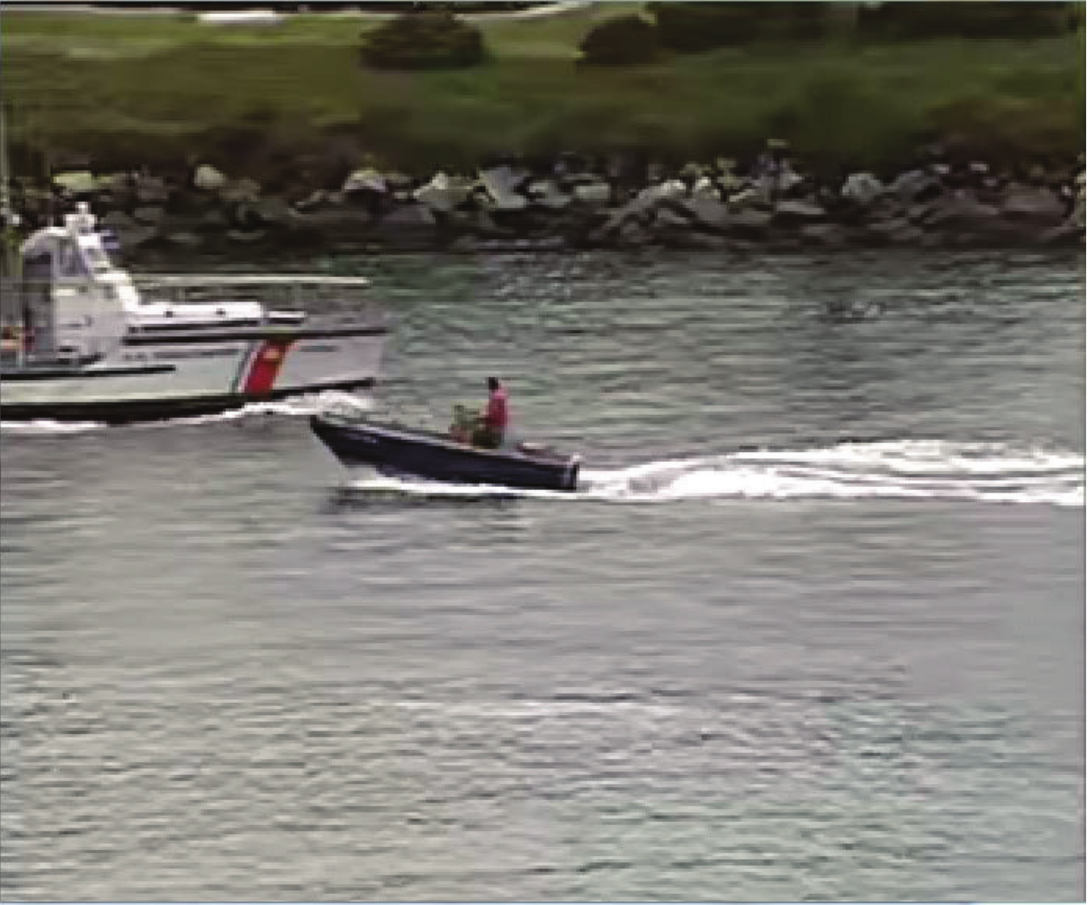}}
    \hspace{1pt}
    \subfigure[LyaRI: Frame index = 48]{
    \label{fig:subfig:f} 
    \includegraphics[width= 2.24 in, height = 1.3 in]{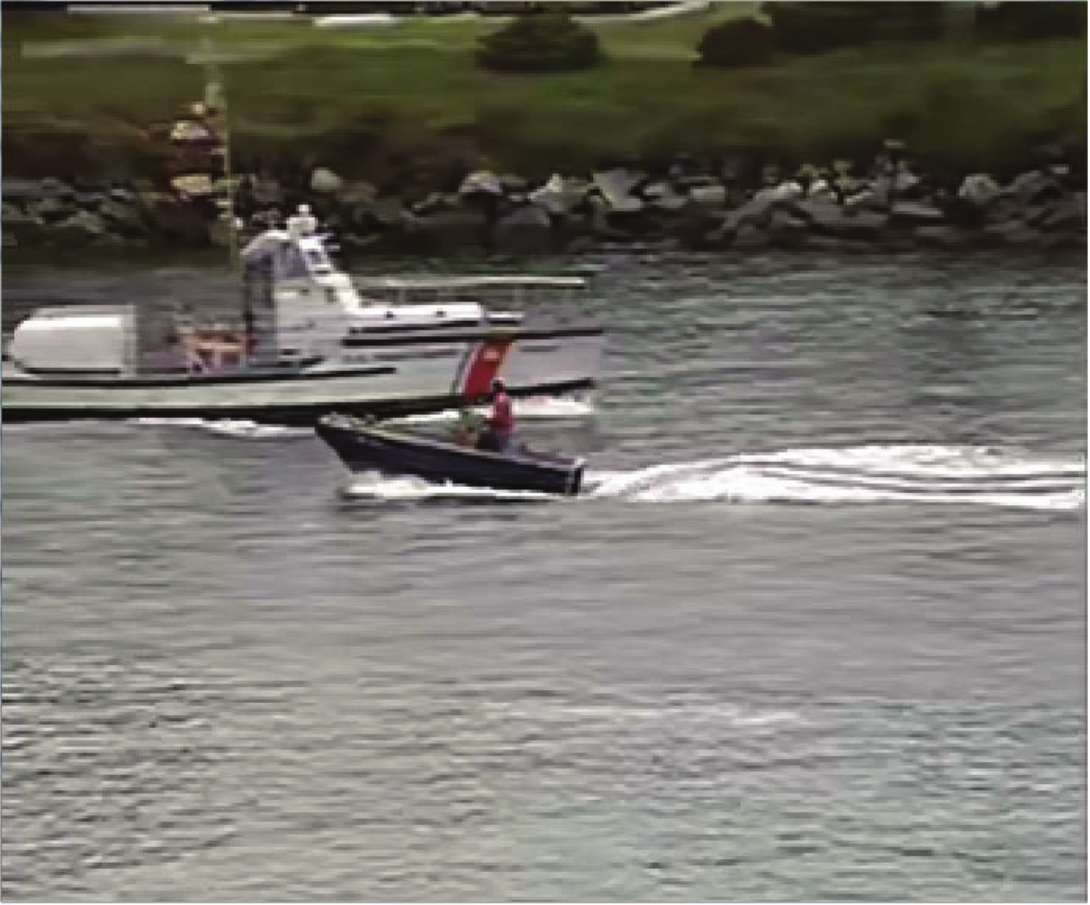}}
     \hspace{1pt}
    \subfigure[HM RC: Frame index = 29]{
    \label{fig:subfig:g} 
    \includegraphics[width= 2.24 in, height = 1.3 in]{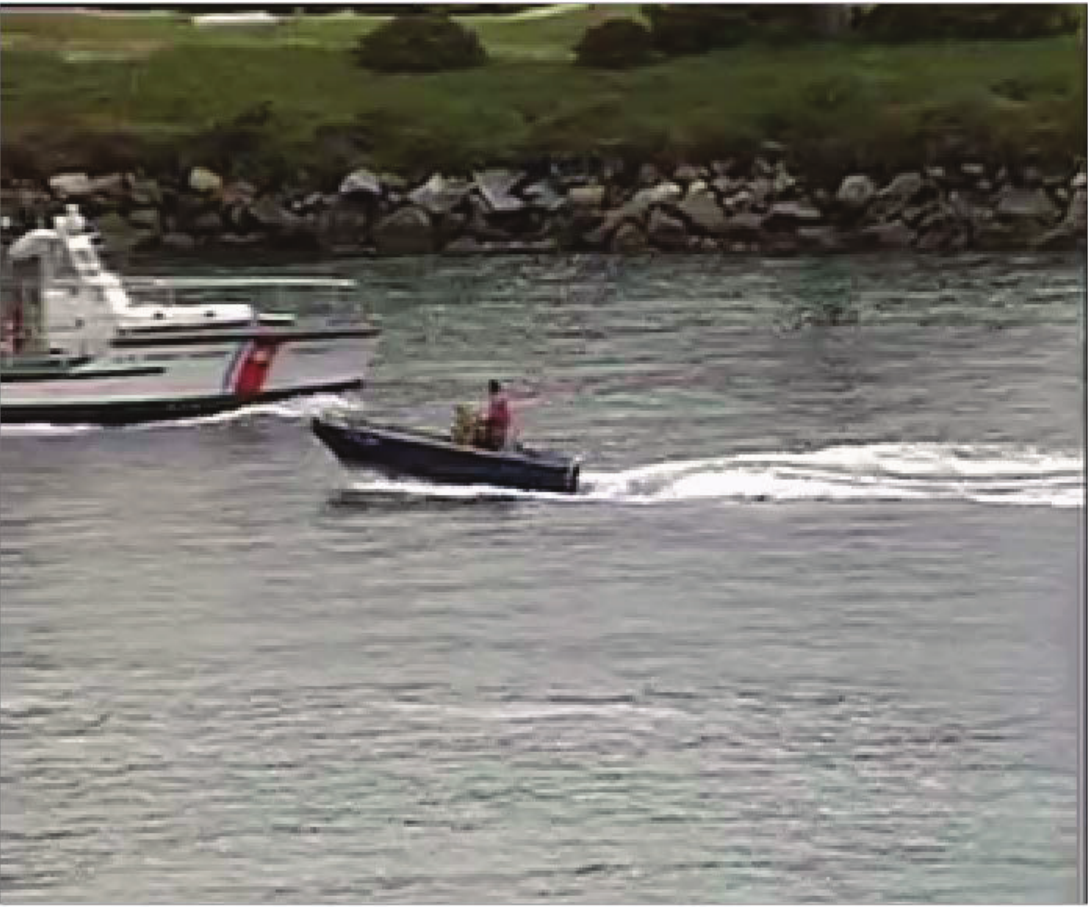}}
    \hspace{1pt}
    \subfigure[HM RC: Frame index = 30]{
    \label{fig:subfig:h} 
    \includegraphics[width= 2.24 in, height = 1.3 in]{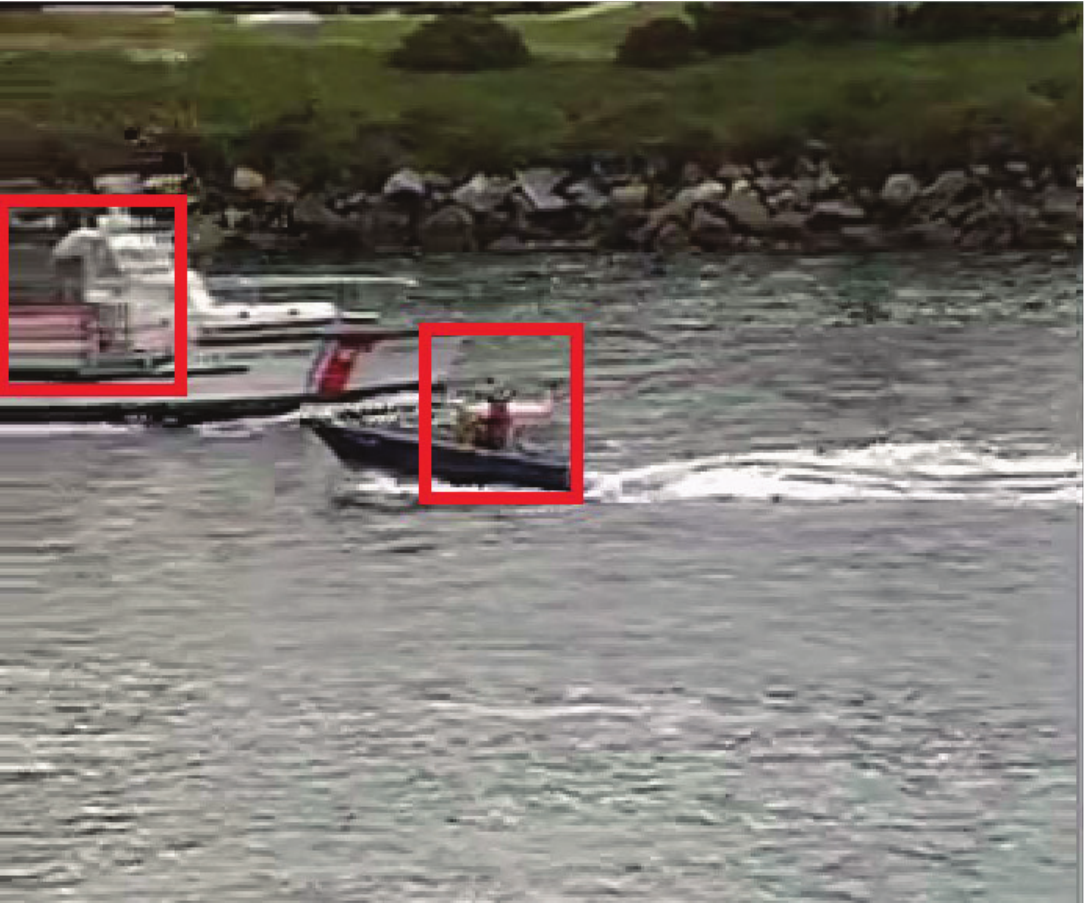}}
    \hspace{1pt}
    \subfigure[HM RC: Frame index = 48]{
    \label{fig:subfig:i} 
    \includegraphics[width= 2.24 in, height = 1.3 in]{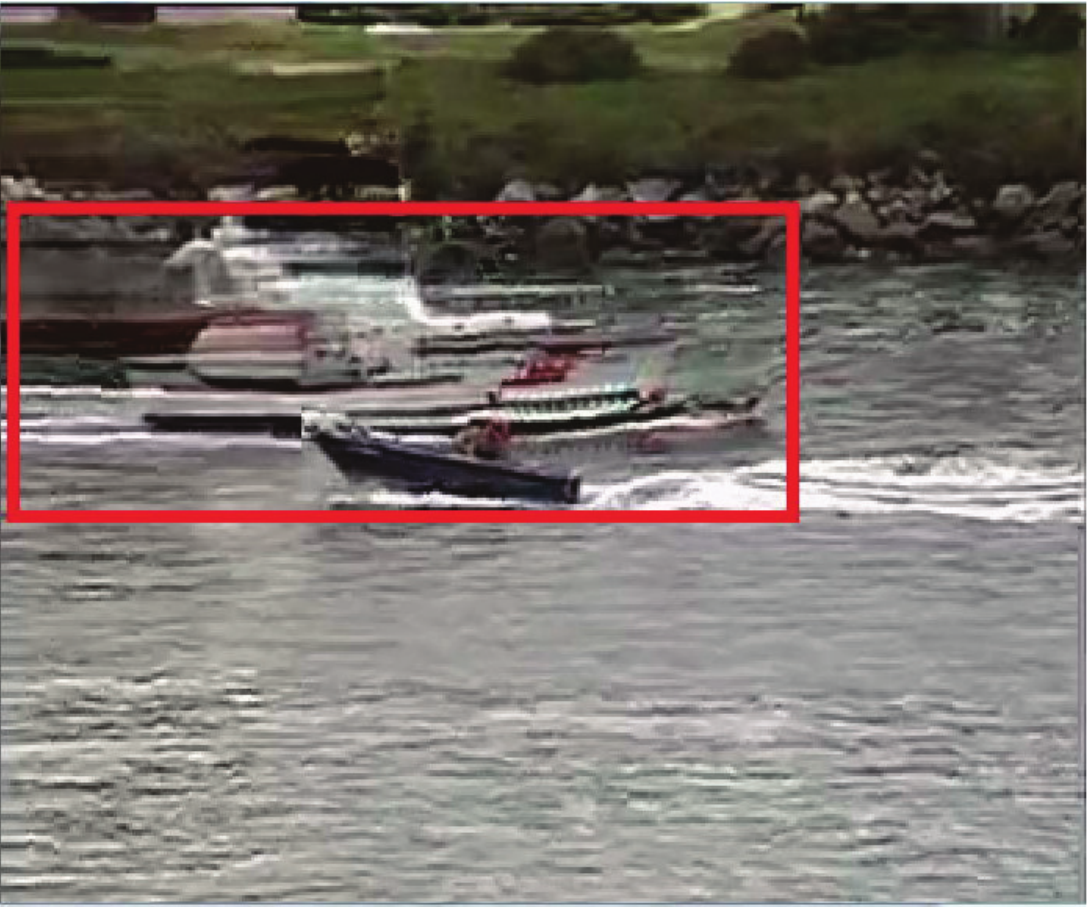}}
    \hspace{1pt}    
    \subfigure[Original: Frame index = 121]{
    \label{fig:subfig:j} 
    \includegraphics[width= 2.24 in, height = 1.3 in]{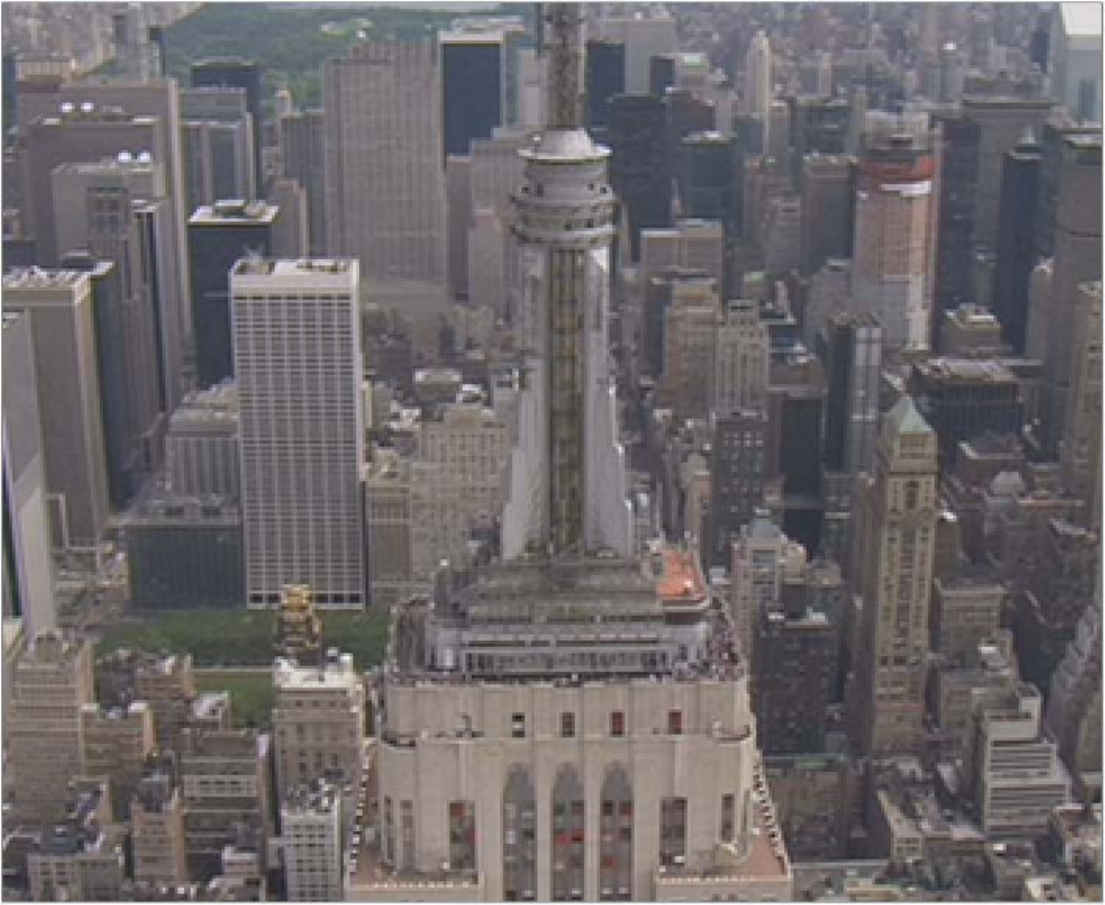}}
  \hspace{1pt}
    \subfigure[Original: Frame index = 170]{
    \label{fig:subfig:k} 
    \includegraphics[width= 2.24 in, height = 1.3 in]{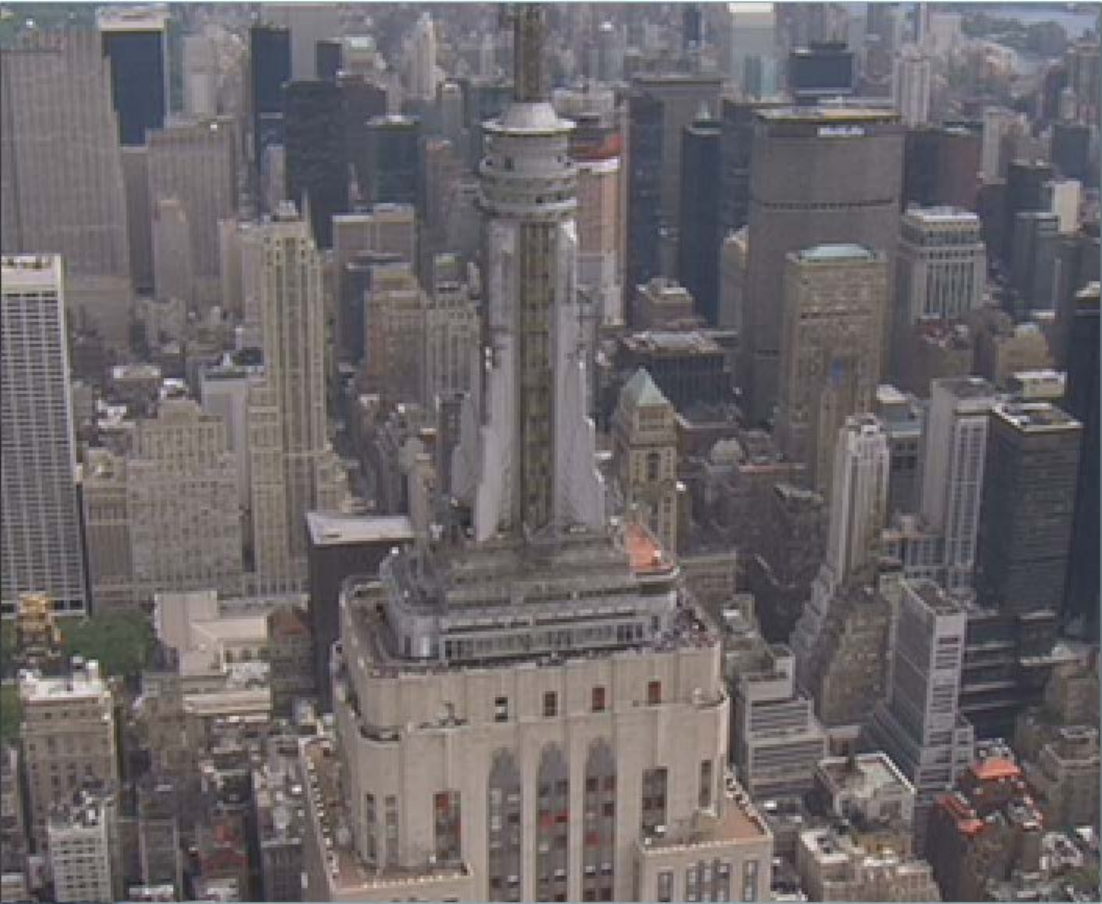}}
    \hspace{1pt}
    \subfigure[Original: Frame index = 200]{
    \label{fig:subfig:l} 
    \includegraphics[width= 2.24 in, height = 1.3 in]{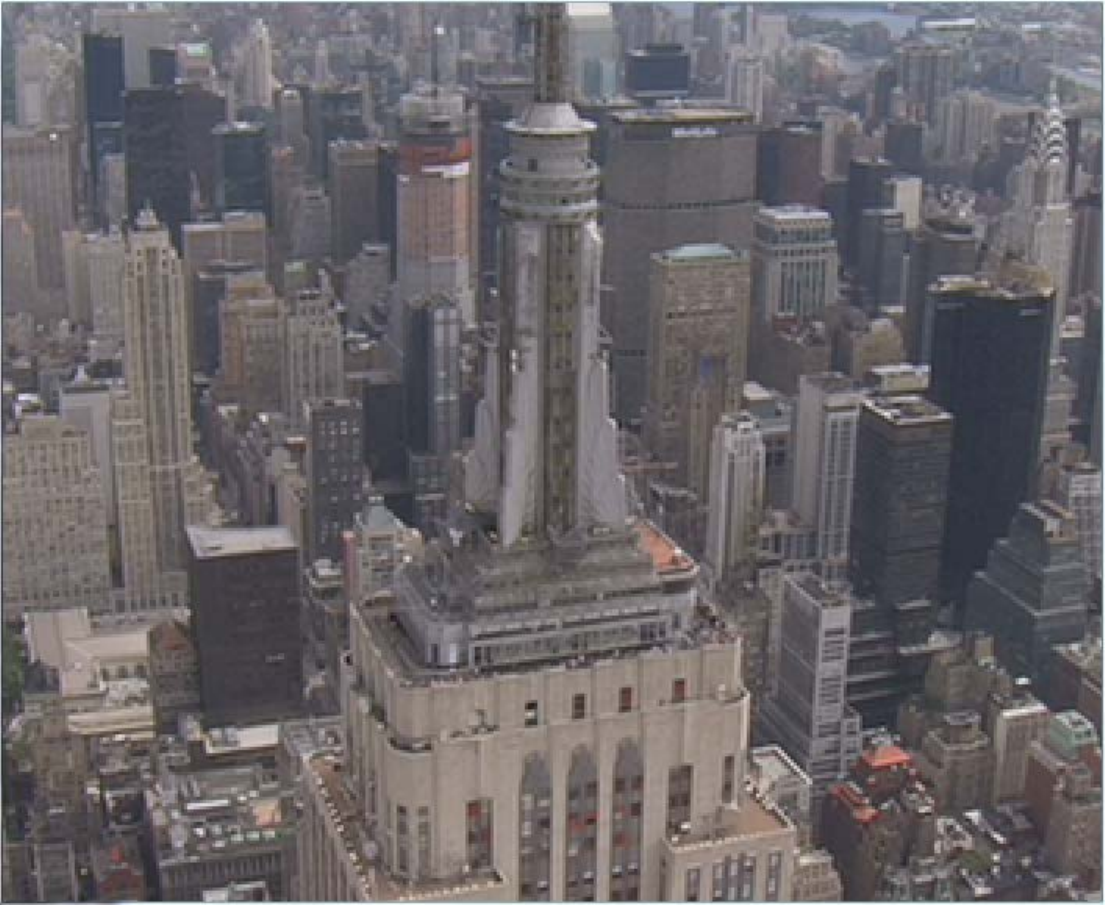}}
    \hspace{1pt}
    \subfigure[LyaRI: Frame index = 121]{
    \label{fig:subfig:m} 
    \includegraphics[width= 2.24 in, height = 1.3 in]{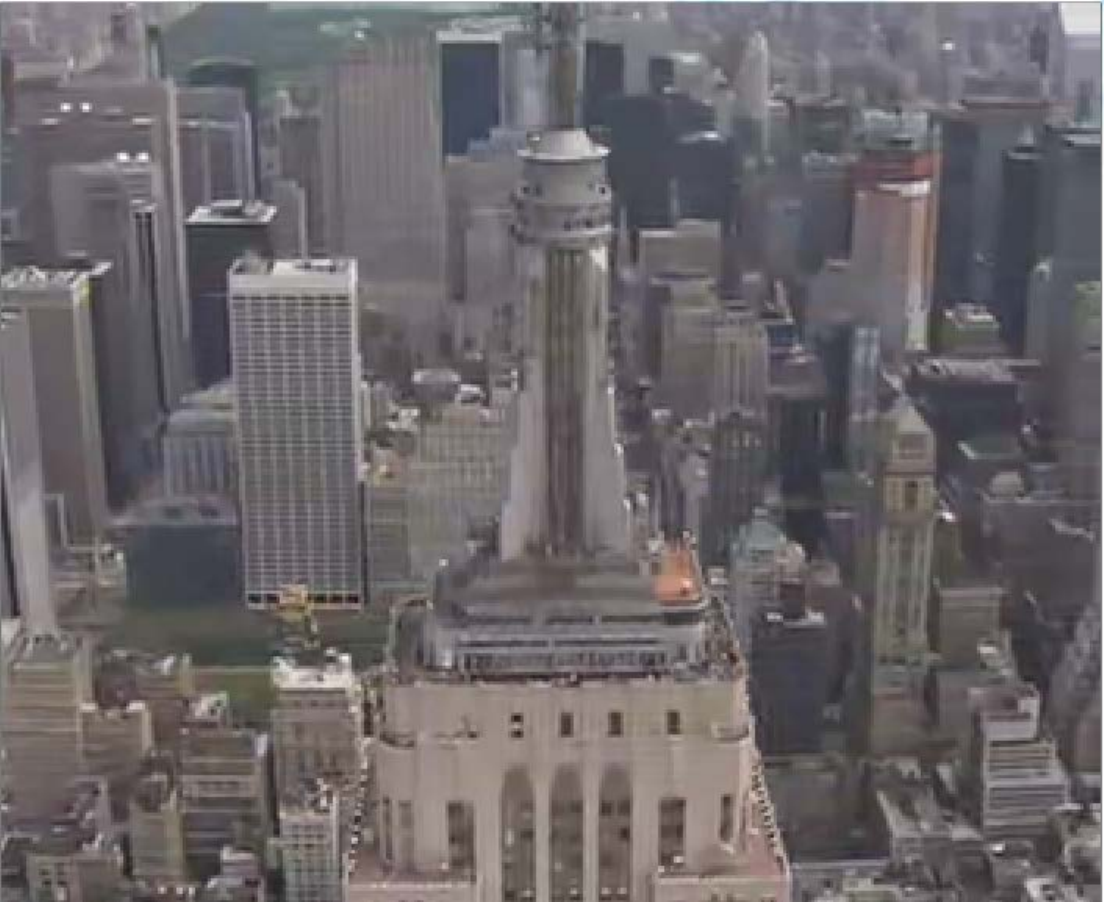}}
    \hspace{1pt}
    \subfigure[LyaRI: Frame index = 170]{
    \label{fig:subfig:n} 
    \includegraphics[width= 2.24 in, height = 1.3 in]{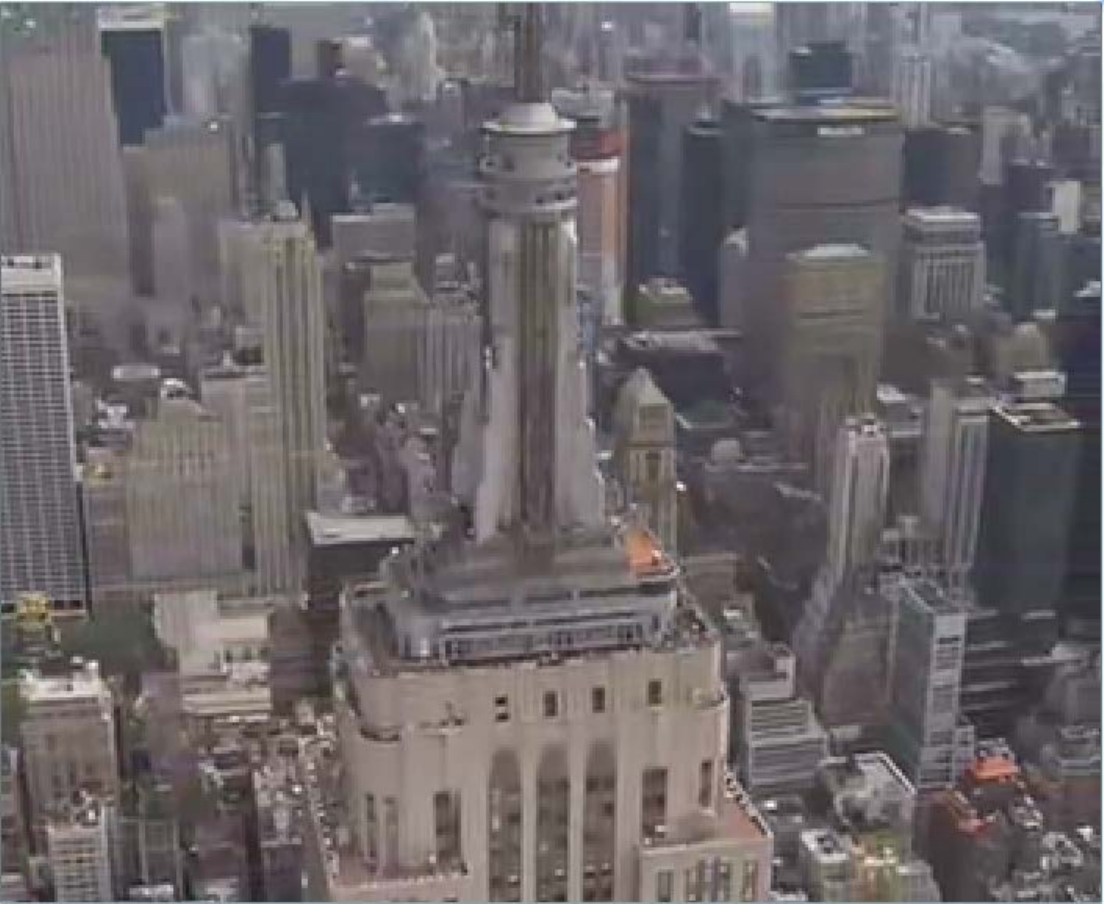}}
    \hspace{1pt}
    \subfigure[LyaRI: Frame index = 200]{
    \label{fig:subfig:o} 
    \includegraphics[width= 2.24 in, height = 1.3 in]{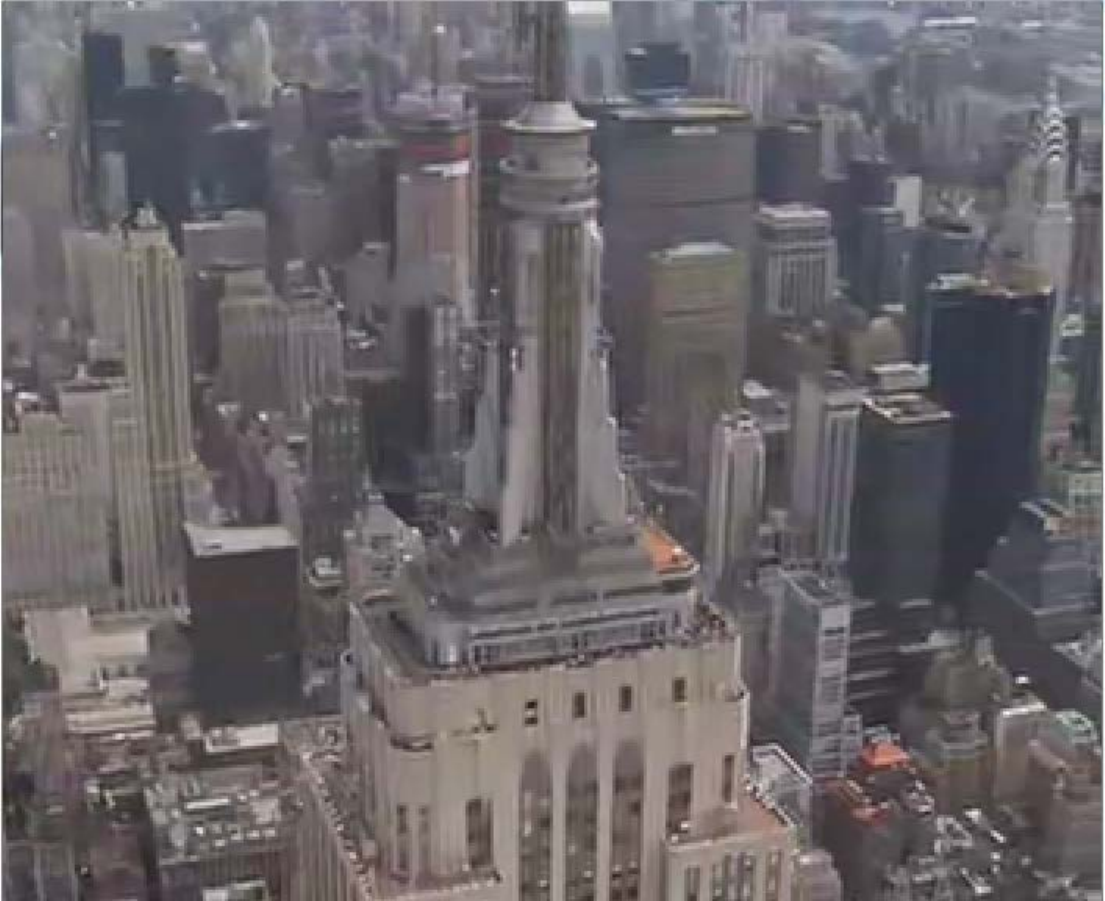}}
     \hspace{1pt}
    \subfigure[HM RC: Frame index = 121]{
    \label{fig:subfig:p} 
    \includegraphics[width= 2.24 in, height = 1.3 in]{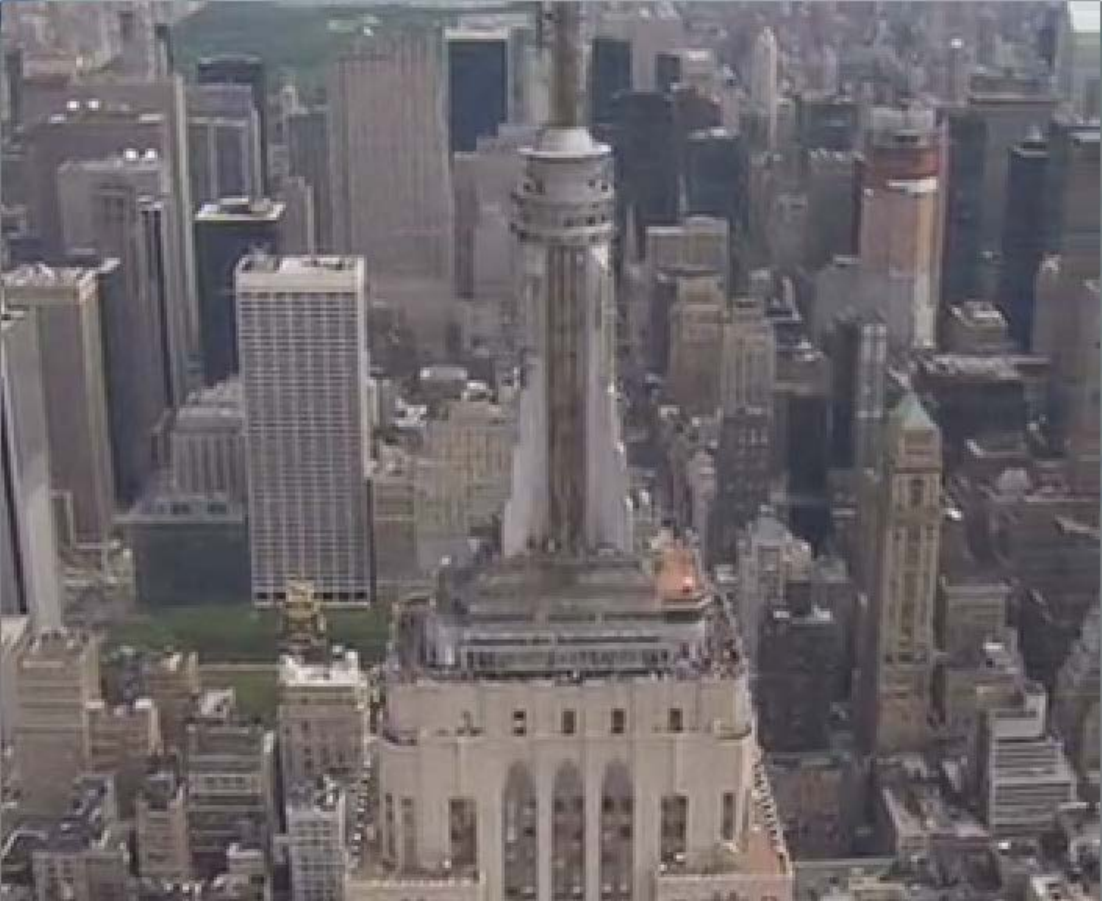}}
    \hspace{1pt}
    \subfigure[HM RC: Frame index = 170]{
    \label{fig:subfig:q} 
    \includegraphics[width= 2.24 in, height = 1.3 in]{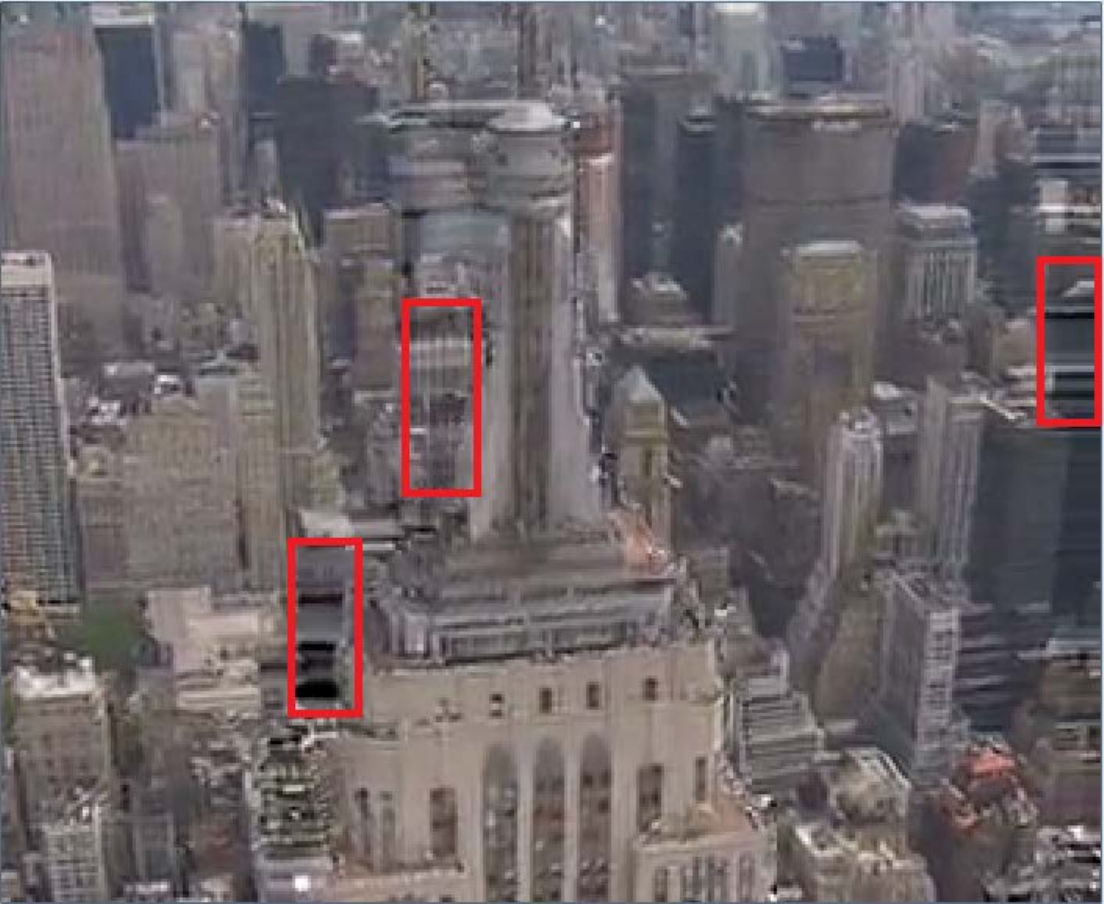}}
    \hspace{1pt}
    \subfigure[HM RC: Frame index = 200]{
    \label{fig:subfig:r} 
    \includegraphics[width= 2.24 in, height = 1.3 in]{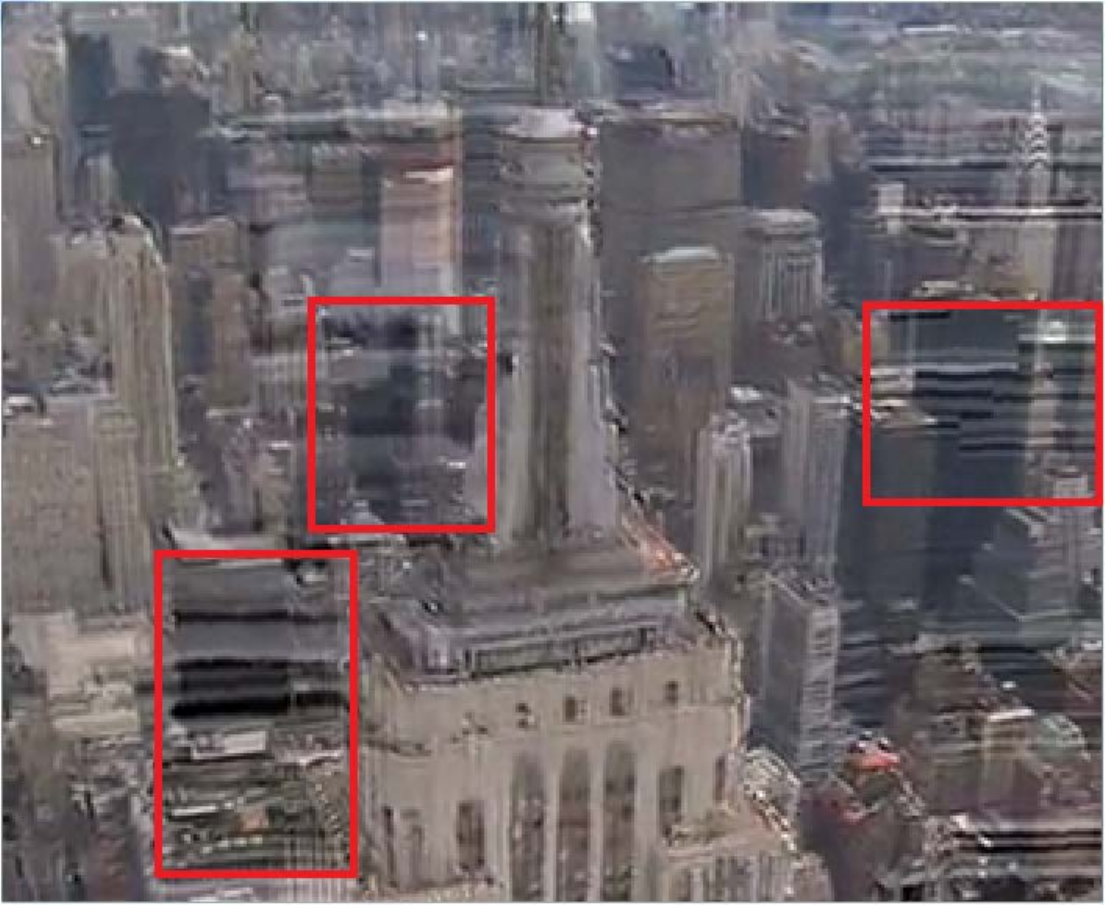}}
    \caption{{Subjective results for the Coastguard sequence and the City sequence.}}
\label{fig_subjective_LyaRI_HMRC}
\end{figure*}
\subsection{{Comparative Analysis of Subjective Performance}}
{In this subsection, we also design a subjective experiment to validate the effectiveness of LyaRI in practical video coding and transmission applications. The experiment is based on the Coastguard sequence and the City sequence. Both LyaRI and the comparative method HM RC adopt the same initial coding parameters. 
Experimental results indicate that there is no visual
artifacts or stalling in video decoding at the receiver using LyaRI. 
This is because LyaRI can adjust coding parameters to adapt to dynamic changes in the AtG channel. 
In contrast, HM RC method experiences visual artifacts at both the $30$-th frame of Coastguard and the $170$-th frame of City due to packet loss. 
Further, the packet loss results in a gradual degradation quality of subsequent video frames due to error propagation in video coding. For instance, Fig. \ref{fig_subjective_LyaRI_HMRC}(i) depicts the subjective quality of the $48$-th frame of Coastguard, and Fig. \ref{fig_subjective_LyaRI_HMRC}(r) the $200$-th frame of City.
It stems from the lack of a mechanism that adapts the source coding bitrate to the dynamic channel capacity, thus leading to packet loss.
Besides, from Fig. \ref{fig_subjective_LyaRI_HMRC}, it can be observed that the subjective quality of videos encoded by both LyaRI and HM RC has declined compared to the original YUV video. 
The inefficiency of the HM coding process affects the video quality.}

\section{Conclusion}
This paper investigated the joint source-channel optimization for UAV video coding and transmission. Building upon the research and construction of video coding d-P-R-D model and UAV channel transmission d-P-R-D model, this paper formulated a joint source-channel video coding and transmission optimization problem. The goal was to minimize the end-to-end distortion of UAV video coding and transmission and total power consumption, while meeting requirements of source-channel rate adaptability, end-to-end delay, and power consumption. A Lyapunov repeated iteration algorithm was proposed to solve this problem. {Both objective and subjective} experimental results verified the effectiveness of the constructed models and the proposed algorithm. 
{The results showed that, compared with the benchmark, the proposed algorithm achieved better video quality and stability performance, and the variance of its obtained encoding bitrate is reduced by 47.74\%.}
This paper {implemented} UAV video coding and transmission using nonliregression analysis and conventional optimization approaches. In the near future, how to explore generative or interactive artificial intelligence approaches for efficient and joint UAV video coding and transmission deserves to be studied in depth.


%

\bibliographystyle{IEEEtran}
\bibliography{Globecom_RAN_slicing}
\appendices
\section{Proof of Lemma \ref{lemma:lemma_video_coding_bitrate}}
The transformed residuals follow a zero-mean i.i.d Laplacian distribution. Its probability density function (PDF) can be expressed as
\begin{equation}\label{eq:Laplacian_PDF}
p(x) = \frac{\Delta }{2}{e^{ - \Delta \left| x \right|}}
\end{equation}
where $x$ represents the value of transformed residuals, and $\Delta  = {{\sqrt 2 } \mathord{\left/
 {\vphantom {{\sqrt 2 } \sigma }} \right.
 \kern-\nulldelimiterspace} \sigma }$ is the Laplacian parameter corresponding to the standard deviation of transformed residuals.

 According to the definition of source entropy, the video coding bitrate can be approximated by the entropy of quantized transformed residuals. The derivation is as follows
\begin{equation}\label{eq:entropy_transform_residual}
{R_e}(\Delta ,Q) \approx H(\Delta ,Q) =  - {P_0}{\log _2}{P_0} - 2\sum\limits_{n = 1}^\infty  {{P_n}{{\log }_2}{P_n}} 
\end{equation}

The probability of quantized transformed residuals being zero can be expressed as 
\begin{equation}\label{eq:probability_residual_zero}
{P_0} = \int_{ - (Q - \mu Q)}^{Q - \mu Q} {p(x)dx}  = 1 - {e^{\Delta Q(\mu  - 1)}}
\end{equation}
where for quantization step $Q$, $\mu Q$ denotes the rounding offset, and $\mu$ is a parameter between $(0,1)$.

The probability of transformed residuals falling within the n-th quantization interval is as follows
\begin{equation}\label{eq:probability_residual_n_interval}
{P_n} = \int_{nQ - \mu Q}^{(n + 1)Q - \mu Q} {p(x)dx}  = \frac{1}{2}(1 - {e^{ - \Delta Q}}){e^{\Delta Q(u - n)}}
\end{equation}

By substituting (\ref{eq:probability_residual_zero}) and (\ref{eq:probability_residual_n_interval}) into (\ref{eq:entropy_transform_residual}), the following equation can be obtained
\begin{equation}\label{eq:entropy_residual_substituting}
\begin{array}{l}
  {R_e}(\Delta ,Q) = 1 - {P_0}{\log _2}2{P_0} + (1 - {P_0}) \times  \hfill \\
  \left[ {\frac{{\Delta Q{{\log }_2}e}}{{1 - {e^{ - \Delta Q}}}} - {{\log }_2}\left[ {{e^{\mu \Delta Q}} - {e^{(\mu  - 1)\Delta Q}}} \right]} \right] \hfill \\ 
\end{array}
\end{equation}

Next, by substituting (\ref{eq:sigma_model}) into (\ref{eq:entropy_residual_substituting}), (\ref{eq:video_coding_bitrate}) is obtained, thereby completing the proof.

\section{Proof of Lemma \ref{lemma:lemma_upper_bound}}
According to (\ref{eq:virtual_queues}) and (\ref{eq:stability_conditions}), we discuss the upper bound of $\frac{1}{2}{\left( {{{[X(t + 1)]}^ + }} \right)^2}$ in three cases.

Case 1: when $X(t + 1) \geqslant 0$ and $X(t) \geqslant 0$, we have
\begin{equation}\label{eq:upper_bound_case1}
\begin{array}{l}
  \frac{1}{2}{\left( {{{[X(t + 1)]}^ + }} \right)^2} = {[X(t)]^ + }\left( {{R_e}(\lambda ,Q;t) - {R_c}(t)} \right) \hfill \\
   + \frac{1}{2}{\left( {{{[X(t)]}^ + }} \right)^2} + \frac{1}{2}{\left( {{R_e}(\lambda ,Q;t) - {R_c}(t)} \right)^2} \hfill \\ 
\end{array}
\end{equation}

Case 2: when $X(t + 1) \geqslant 0$ and $X(t) < 0$, it can be known that $0 \leqslant X(t + 1) < {R_e}(\lambda ,Q;t) - {R_c}(t),\forall t$. Further, we can obtain $\frac{1}{2}{\left( {{{[X(t + 1)]}^ + }} \right)^2} < \frac{1}{2}{\left( {{R_e}(\lambda ,Q;t) - {R_c}(t)} \right)^2}$. Considering that ${[X(t)]^ + } = 0$, we have
\begin{equation}\label{eq:upper_bound_case2}
\begin{array}{l}
  \frac{1}{2}{\left( {{{[X(t + 1)]}^ + }} \right)^2} < {[X(t)]^ + }\left( {{R_e}(\lambda ,Q;t) - {R_c}(t)} \right) +  \hfill \\
  \frac{1}{2}{\left( {{{[X(t)]}^ + }} \right)^2} + \frac{1}{2}{\left( {{R_e}(\lambda ,Q;t) - {R_c}(t)} \right)^2} \hfill \\ 
\end{array}
\end{equation}

Case 3: when $X(t + 1) < 0$, then $\frac{1}{2}{\left( {{{[X(t + 1)]}^ + }} \right)^2} = 0$, it can be inferred that
\begin{equation}\label{eq:upper_bound_case3}
\begin{array}{l}
  \frac{1}{2}{\left( {{{[X(t + 1)]}^ + }} \right)^2} \leqslant {[X(t)]^ + }\left( {{R_e}(\lambda ,Q;t) - {R_c}(t)} \right) +  \hfill \\
  \frac{1}{2}{\left( {{{[X(t)]}^ + }} \right)^2} + \frac{1}{2}{\left( {{R_e}(\lambda ,Q;t) - {R_c}(t)} \right)^2} \hfill \\ 
\end{array}
\end{equation}

In summary, based on the definition of $\Delta \left( t \right)$, we can obtain $\Delta \left( t \right) \leqslant {[X(t)]^ + }\left( {{R_e}(\lambda ,Q;t) - {R_c}(t)} \right) + \frac{1}{2}{\left( {{R_e}(\lambda ,Q;t) - {R_c}(t)} \right)^2}$. By adding $V({D_e}({R_e}(\lambda ,Q;t)) + {\rho _1}{D_c}({P_t}(t)) + {\rho _2}{P_{tot}}(t))$ to both sides of the inequality, we can obtain (\ref{eq:upper_bound_Lyapunov})
, thus proving the lemma.

\section{Proof of Lemma \ref{lemma:lemma_transform_convex}}
For (\ref{eq:reformed_Q_subproblem}c), a slack variable $\varepsilon $ is introduced such that ${R_e}(\lambda ,Q;t) \geqslant \varepsilon  \geqslant {(\frac{C}{\delta })^{\frac{1}{K}}}$. It is not difficult to conclude that ${R_e}(\lambda ,Q;t) \geqslant \varepsilon $ is a non-convex constraint. To effectively address this issue, at any given local iteration point ${Q_0}$, the SCA strategy can be employed to obtain its approximate convex constraint, represented as follows
\begin{equation}\label{eq:video_bitrate_SCA_Q0}
{R_e}({\lambda ^{(r)}},{Q_0};t) + \frac{{\partial {R_e}({\lambda ^{(r)}},{Q_0};t)}}{{\partial Q}}(Q - {Q_0}) \geqslant \varepsilon 
\end{equation}
For the non-convex constraint $\varepsilon  \geqslant {(\frac{C}{\delta })^{\frac{1}{K}}}$, by introducing a slack variable $\xi $, we have
\begin{equation}\label{eq:non-convex_transform_xi}
\ln \varepsilon  \geqslant \xi  \geqslant \frac{1}{K}(\ln C - \ln \delta ) 
\end{equation}

For the first inequality in (\ref{eq:non-convex_transform_xi}), we can derive the following constraint $\varepsilon  \geqslant {e^\xi }$. According to the standard form of exponential cone ${K_{\exp }} = \left\{ {x \in {R^3}:{x_1} \geqslant {x_2}\exp ({x_3}/{x_2}),{x_1},{x_2} \geqslant 0} \right\}$, the constraint $\varepsilon  \geqslant {e^\xi }$ can be transformed into an exponential cone, i.e.
\begin{equation}\label{eq:non-convex_transform_xi_left}
\left( {\varepsilon ,1,\xi } \right) \in {K_{\exp }}
\end{equation}

Similarly, for the constraint $\xi  \geqslant \frac{1}{K}(\ln C - \ln \delta )$, it can be converted into the following exponential cone
\begin{equation}\label{eq:non-convex_transform_xi_right}
\left( {\delta ,C, - CK\xi } \right) \in {K_{\exp }}
\end{equation}

This completes the proof.

\section{Proof of Lemma \ref{lemma:lemma_transform_data_rate}}
For the constraint ${R_c}(t) \geqslant \varphi $, since ${R_c}(t) = {\log _2}(1 + \frac{{{P_t}(t)}}{{{L_{AtG}}(t){P_n}}})$, we have ${\log _2}(1 + \frac{{{P_t}(t)}}{{{L_{AtG}}(t){P_n}}}) \geqslant \varphi $. Then, set ${{\rm Z}_1} = 1 + \frac{{{P_t}(t)}}{{{L_{AtG}}(t){P_n}}}$, the constraint can be converted into the following exponential cone
\begin{equation}\label{eq:data_rate_exponential_cone}
\left( {{{\rm Z}_1},1,\varphi \ln 2} \right) \in {K_{\exp }}
\end{equation}

For (\ref{eq:original_problem}c) and (\ref{eq:reformed_lambda_Pt_subproblem}f), a slack variable $\tau $ is introduced and set $\frac{L}{{B\varphi }} \leqslant \tau $. Then, these two constraints can be transformed into the following form
\begin{subequations}\label{eq:lambda_Pt_subproblem_constraints_transform}
\begin{alignat}{2}
  &\tau  \leqslant {d_{\max \_trans}} \hfill \\
  &{(2\lambda  + 1)^2}{d_{coe}} + \tau  \leqslant {d_{\max }} 
\end{alignat}
\end{subequations}

For the inequality $\frac{L}{{B\varphi }} \leqslant \tau $, based on the standard form of the rotated quadratic cone $Q_r^n = \left\{ {x \in {R^n}:2{x_1}{x_2} \geqslant \sum\limits_{j = 3}^n {x_j^2} ,{x_1} \geqslant 0,{x_2} \geqslant 0} \right\}$, it can be transformed into the following rotated quadratic cone
\begin{equation}\label{eq:first_rotated_quadratic_cone}
\left( {\varphi ,\tau ,\sqrt {\frac{{2L}}{B}} } \right) \in Q_r^3
\end{equation}

For the inequality ${(2\lambda  + 1)^2}{d_{coe}} + \tau  \leqslant {d_{\max }}$, defining ${{\rm Z}_2} = 2\lambda  + 1$ and ${{\rm Z}_3} = \frac{{{d_{\max }} - \tau }}{{{d_{coe}}}}$, the delay constraint can then be transformed into the following rotated quadratic cone
\begin{equation}\label{eq:second_rotated_quadratic_cone}
\left( {\frac{1}{2},{{\rm Z}_3},{{\rm Z}_2}} \right) \in Q_r^3
\end{equation}

This completes the proof.
\section{Proof of Lemma \ref{lemma:lemma_algorithm_convergence}}
Given a local point $({Q^{(r)}},{{P}_t}^{(r)}(t),{\lambda ^{(r)}})$ at the $r$-th iteration, and denote the corresponding value of (\ref{eq:Lyapunov_transformed_problem}) at this point as $\Psi ({{Q}^{(r)}},{{P}_t}^{(r)}(t),{\lambda ^{(r)}})$. By solving (\ref{eq:final_Q_subproblem}) at the $r+1$-th iteration we can obtain a solution ${{Q}^{(r + 1)}}$ such that $\Psi ({{Q}^{(r + 1)}},{{P}_t}^{(r)}(t),{\lambda ^{(r)}}) \leqslant \Psi ({{Q}^{(r)}},{{P}_t}^{(r)}(t),{\lambda ^{(r)}})$. 
Given the local point $({{Q}^{(r + 1)}},{{P}_t}^{(r)}(t),{\lambda ^{(r)}})$, we can obtain an updated solution ${{P}_t}^{(r + 1)}(t)$ and ${\lambda ^{(r + 1)}}$  by optimizing (\ref{eq:final_lambda_Pt_subproblem}) at the $r+1$-th iteration and have $\Psi ({{Q}^{(r + 1)}},{{P}_t}^{(r + 1)}(t),{\lambda ^{(r + 1)}}) \leqslant \Psi ({{Q}^{(r + 1)}},{{P}_t}^{(r)}(t),{\lambda ^{(r)}})$.
To this end, we can conclude that $\Psi ({{Q}^{(r + 1)}},{{P}_t}^{(r + 1)}(t),{\lambda ^{(r + 1)}}) \leqslant \Psi ({{Q}^{(r)}},{{P}_t}^{(r)}(t),{\lambda ^{(r)}})$. Besides, $\Psi ({{Q}^{(r)}},{{P}_t}^{(r)}(t),{\lambda ^{(r)}})$ is low-bounded at each iteration. Therefore, the iterative optimization Algorithm \ref{alg:alg1} is convergent.

Besides, Lemma \ref{lemma:lemma_upper_bound} points out that $\Delta \left( t \right) + V({D_e}({R_e}(\lambda ,Q;t)) + {\rho _1}{D_c}({P_t}(t)) + {\rho _2}{P_{tot}}(t))$ is upper-bounded at each time slot $t$. The time average of $X(t)$ tends to zero when $t \to \infty $. Therefore, Algorithm \ref{alg:alg1} can make the virtual queue mean-rate stable. 

This completes the proof.





\ifCLASSOPTIONcaptionsoff
  \newpage
\fi



%

\end{document}